\definecolor{C1}{RGB}{52, 89, 149}
\definecolor{C2}{RGB}{251, 77, 61}
\definecolor{C3}{RGB}{3, 206, 164}
\definecolor{C4}{RGB}{202, 21, 81}
\newtheorem{thm}{Theorem}
\newtheorem{lemma}[thm]{Lemma}
\newtheorem{conj}[thm]{Conjecture}
\theoremstyle{remark}
\newcommand*{\lm}{\lambda}
\newcommand*{\nn}{\nonumber}
\newcommand*{\id}{\mathds{1}}
\newcommand*{\mc}{\mathcal}
\newcommand*{\dg}{\dagger}
\newcommand*{\msl}{\mathsf{L}}
\newcommand*{\mst}{\mathsf{T}}
\newcommand{\mat}{\left(\begin{matrix}}
\newcommand{\tam}{\end{matrix}\right)}
\newcommand{\mce}{\mc{E}}
\newcommand{\mcg}{\mc{G}}
\newcommand{\mcq}{\mc{Q}}
\newcommand{\mco}{\mc{O}}
\newcommand{\mcl}{\mc{L}}
\newcommand{\mch}{\mc{H}}
\newcommand{\idop}{\mathbf{I}}
\renewcommand{\aa}{\mathfrak{a}}
\newcommand{\bb}{\mathfrak{b}}
\newcommand{\Complex}{\mathbb{C}}
\newcommand{\ad}{\mathrm{Ad}}
\newcommand{\cB}{\mathcal{B}}
\newcommand{\cH}{\mathcal{H}}
\newcommand{\cL}{\mathcal{L}}
\newcommand{\cS}{\mathcal{S}}
\newcommand{\cV}{\mathcal{V}}
\newcommand{\mfg}{\mathfrak{g}}
\newcommand{\mfb}{\mathfrak{B}}
\newcommand{\mfu}{\mathfrak{u}}
\newcommand{\mfgl}{\mathfrak{gl}}
\newcommand{\mfa}{\mathfrak{a}}
\newcommand{\mfsu}{\mathfrak{su}}
\newcommand{\mfsp}{\mathfrak{sp}}
\newcommand{\mfso}{\mathfrak{so}}
\renewcommand{\arraystretch}{1.5}
\newcommand*{\eh}{\mathrm{End}(\mathcal{H})}
\newcommand*{\ehh}{\mathrm{End}(\mathcal{H}^{\otimes2})}
\newcommand{\mcu}{\mathcal{U}}
\newcommand{\mcn}{\mathcal{N}}
\newcommand{\mct}{\mathcal{T}}
\newcommand{\mbc}{\mathbb{C}}
\newcommand{\mbr}{\mathbb{R}}
\newcommand{\mbe}{\mathbb{E}}
\newcommand{\mbs}{\mathbb{S}}
\newcommand{\mbo}{\mathbb{O}}
\newcommand{\mbu}{\mathbb{U}}
\newcommand{\mbsu}{\mathbb{SU}}
\newcommand{\mbsp}{\mathbb{SP}}
\renewcommand{\id}{\mathds{1}}
\DeclareMathOperator*{\expect}{\mathbb{E}}
\newcommand{\sdket}[1]{| #1 \rangle\!\rangle}
\newcommand{\sdbra}[1]{ \langle \!\langle #1|}
\newcommand{\sdketbra}[2]{|#1 \rangle\! \rangle \langle\!\langle  #2 |}
\newcommand{\sdbraket}[2]{ \langle\!\langle#1 | #2 \rangle\!\rangle}
\begin{document}
\title[]{A graph-theoretic approach to chaos and complexity in quantum systems}

\author{Maxwell West}
\affiliation{School of Physics, The University of Melbourne, Parkville, VIC 3010, Australia}
\author{Neil Dowling}
\affiliation{Institut f\"ur Theoretische Physik, Universit\"at zu K\"oln, Z\"ulpicher Strasse 77, 50937 K\"oln, Germany}
\author{Angus Southwell}
\address{School of Physics \& Astronomy, Monash University, Clayton, VIC 3800, Australia}
\author{Martin Sevior}
\affiliation{School of Physics, The University of Melbourne, Parkville, VIC 3010, Australia}
\author{Muhammad Usman}
\affiliation{School of Physics, The University of Melbourne, Parkville, VIC 3010, Australia}
\address{School of Physics \& Astronomy, Monash University, Clayton, VIC 3800, Australia}
\affiliation{Data61, CSIRO, Clayton, 3168, VIC, Australia}
\author{Kavan Modi}
\address{School of Physics \& Astronomy, Monash University, Clayton, VIC 3800, Australia}
\author{Thomas Quella} 
\affiliation{School of Mathematics and Statistics, The University of Melbourne, Parkville, VIC 3010, Australia}

\twocolumngrid
\pacs{}

\begin{abstract}
There has recently been considerable interest in studying quantum systems via \textit{dynamical Lie algebras} (DLAs) -- Lie algebras generated by the terms which appear in the Hamiltonian of the system. 
However, there are some important properties that are revealed only at a finer level of granularity than the DLA. 
In this work we explore, via the \textit{commutator graph}, average notions of scrambling, chaos and complexity over ensembles of systems with DLAs that
possess a basis consisting of Pauli strings. 
Unlike DLAs, commutator graphs are sensitive to short-time dynamics, and therefore constitute a finer probe to various characteristics of the corresponding ensemble.
We link graph-theoretic properties of the commutator graph to the out-of-time-order correlator (OTOC), the frame potential, the frustration graph of the Hamiltonian of the system, and the Krylov complexity of operators evolving under the dynamics.
For example, we reduce the calculation of average OTOCs to a counting problem on the graph; separately, we connect the Krylov complexity of an operator to the module structure of the adjoint action of the DLA on the space of operators in which it resides, and prove that its average over the ensemble is lower bounded by the average shortest path length between the initial operator and the other operators in the commutator graph. 
\end{abstract}

\keywords{Quantum chaos, Many-body quantum physics, OTOC}

\maketitle

\section{Introduction}
The dynamical Lie algebra (DLA) has recently risen to prominence as an extremely useful characteristic of parametrized quantum systems~\cite{larocca2022diagnosing,schirmer2002identification,goh2023lie,ragone2023unified,fontana2024characterizing,diaz2023showcasing,west2024provably,cerezo2023does,wiersema2024classification,kazi2024analyzing,aguilar2024full,lastres2024non}, finding applications for example in quantum control~\cite{larocca2022diagnosing,schirmer2002identification}, simulation~\cite{goh2023lie} and  quantum machine learning~\cite{ragone2023unified,fontana2024characterizing,diaz2023showcasing,west2024provably,cerezo2023does}. 
Given an ensemble of dynamics implementing unitaries that collectively form a Lie group $G$, the DLA is the Lie algebra $\mfg$ associated to $G$ via the famous Lie group / algebra correspondence which identifies $\mfg$ as the tangent space of $G$ at the identity element. As is prototypically the case in applications of Lie theory, various questions concerning the ensemble $G$ can be translated to questions formulated in terms of $\mfg$, where the analysis is facilitated by the availability of the powerful tools of linear algebra. 
Despite the widespread usefulness of the DLA, however, ensembles of quantum systems exhibit properties of interest that it fails to capture -- i.e., and as we shall explore, there are non-trivially distinct dynamics whose DLAs are isomorphic when thought of as  abstract Lie algebras, or even \textit{equal} as concrete Lie subalgebras of $\mfgl(d)$. 
Naturally, this encourages the exploration of objects that capture a finer structure than the DLA alone.
To that end, in this paper we work with a related object, the \textit{commutator graph}, which represents the adjoint action of a set $\mcg$ of generators $\{ H_\ell \}_\ell$ of a Pauli DLA (a DLA possessing a basis consisting of Pauli strings~\cite{aguilar2024full,diaz2023showcasing}) on the space of operators on the Hilbert space  of the system (see Fig.~\ref{fig:1}). Importantly, the commutator graph depends explicitly on the choice of generating set $\mcg$, not merely on the full Lie algebra $\mfg=\braket{i\mcg}_{{\rm Lie}}$. 
As we shall discuss, the generating set encodes the short-time dynamics of the ensemble, whereas the full DLA captures \textit{only} the long-time dynamics. While understanding of the short-time behaviour may be extrapolated to long-times, the converse is not necessarily true. Indeed, the DLA is insensitive to ensembles whose average behavior differs only on short timescales.

\begin{figure*}[ht]
  \includegraphics[width=\textwidth]{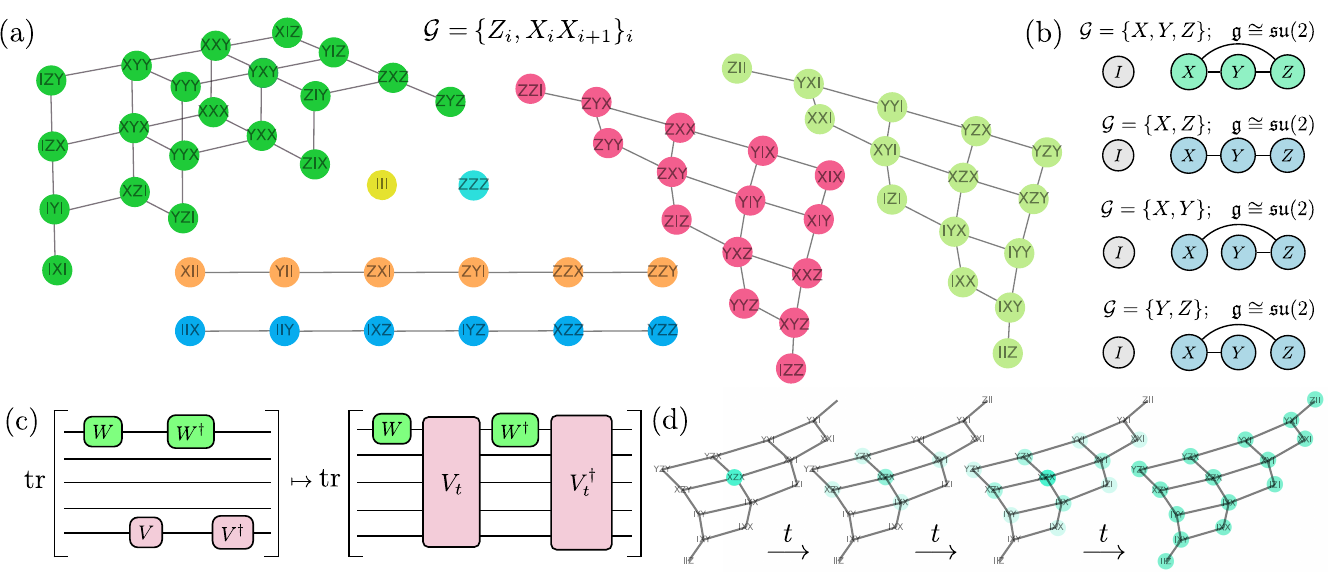}
  \caption{(a) The \textit{commutator graph} of a family of generators $\mcg={\rm span}_{\hspace{0.5mm}\mbr}\{ H_\ell \}_\ell$ (with each $H_\ell$ a Pauli string) possesses a node for each $n$-qubit Pauli string, with edges connecting vertices $p$ and $q$ that are linked by the adjoint action of a generator $H_\ell$ of the dynamics, i.e.\ $[H_\ell,p]\propto q$. The graph breaks up into connected components corresponding to  representations of the adjoint action of the \textit{dynamical Lie algebra} (DLA) $\mfg=\expval{\mcg}_{\rm Lie}$ on the space of linear operators $\mcl={\rm End\ }\mch$ on the Hilbert space $\mch$ of the system. (b) Importantly, the DLA does \textit{not} fix the graph structure, which depends explicitly on $\mcg$. While different sets of generators can produce (under the Lie closure) identical DLAs, they do not lead to identical commutator graphs; indeed the graphs are not even necessarily isomorphic. All that the DLA does  specify is the set of connected components.  (c) The out-of-time-order correlator (OTOC) $F(W,V_t)=d^{-1}{\rm tr} [WV_tW^\dg V_t^\dg]$ gives a common diagnostic of scrambling in quantum systems: the exponential decay of the OTOC between two initially commuting operators. Properties of the commutator graph allow us to evaluate the average OTOC between Pauli strings for a wide class of dynamics. (d) An operator which is initially a  Pauli string will spread across the connected component of the graph to which it belongs via Heisenberg time evolution. The connected components therefore contain the \textit{Krylov spaces} of their constituent nodes. At long times, the details of the internal edge structure of a connected component on average ``wash out'', and all of the remaining information is captured at the level of the DLA.
  }
\label{fig:1}
\end{figure*}
In this work, we leverage the properties of the commutator graph for the study of many-body systems, proving a sequence of results relating its graph-theoretic characteristics  to information scrambling~\cite{Shenker_Stanford_2014,Maldacena_Shenker_Stanford_2016,Swingle2016,rozenbaum2017lyapunov,dowling_scrambling_2023}, Krylov complexity~\cite{parker2019universal,caputa2022geometry,nandy2024quantum}, classical simulability~\cite{goh2023lie}, and the frustration graph of the Hamiltonian underlying the dynamics~\cite{chapman2023unified}.
At a high level, our results mainly concern average properties of operators Heisenberg-evolving under a unitary sampled from a given class of dynamics.
In particular, we investigate their tendency to scramble, as probed by the
out-of-time-order correlator (OTOC)~\cite{Shenker_Stanford_2014,Maldacena_Shenker_Stanford_2016,Swingle2016}.
Operator scrambling is a well-investigated phenomenon in many-body physics, with the OTOC often touted as a quantum analogue of the Lyapunov exponent~\cite{rozenbaum2017lyapunov}, and its exponential decay being a necessary condition for quantum chaos~\cite{dowling_scrambling_2023}. More recently, scrambling has been shown to influence the performance~\cite{garcia_quantifying_2022,We2021,Shen2020} and robustness~\cite{dowling2024adversarial} of quantum machine learning models. 
By connecting the scrambling properties of classes of Hamiltonians to characteristics of the commutator graph --  which we can analyse from the perspective of representation theory -- we are able to derive a sequence of broadly applicable results. To give a few examples of the flavour of these results, we find (e.g.) that the (second-order) \textit{frame potential} of an ensemble~\cite{mele2023introduction} can be expressed as the product of the number of components of its commutator graph and the number of isolated vertices it contains; in another result, we find that the average OTOC between two Pauli strings can be related to the fraction of strings in the connected component of one with which the other fails to commute.

On the complexity side we introduce a new measure of operator complexity that we dub \textit{graph complexity}. The graph complexity of a Heisenberg-evolved Pauli string $p_t=e^{iHt}pe^{-iHt}$ measures the weighted distance (as measured by the shortest path between nodes of the commutator graph) between the Pauli strings that appear in the expansion of $p_t$ and $p$ itself. We prove that graph complexity belongs to the class of $q$-\textit{complexities} introduced in Ref.~\cite{parker2019universal}, and therefore lower bounds the \textit{Krylov complexity} of $p_t$. As we shall see, graph complexity is an example of a quantity that is not constant between isomorphic DLAs (but rather  explicitly depends on the chosen generating sets), and is therefore captured only at a level of detail at least as fine-grained  as that of the commutator graph. \\

We begin by reviewing some necessary background knowledge in Section~\ref{sec:prelim}, covering both the construction of the DLA and the commutator graph, as well as some concepts from many-body physics which will be employed, before covering our analytical results in Section~\ref{sec:aresults}. An example-based exploration of various dynamics is given in Section~\ref{sec:examples}, before we conclude in Section~\ref{sec:discussion}.

\section{Preliminaries}\label{sec:prelim}

\subsection{The Commutator Graph}\label{sec:cg}
Throughout this work we will be
interested in studying ensembles of dynamics generated by Hamiltonians of the form
\begin{equation}\label{eq:fam}
    H=\sum_{p\in\mcg} c_p p,
\end{equation}
where $\mcg$ is a fixed, family-defining set of Pauli strings. One can associate to such a family a \textit{dynamical Lie algebra}
\begin{equation}
\mfg=\braket{i\mcg}_{{\rm Lie}},   
\end{equation}
the smallest real Lie algebra containing $i\mcg$~\cite{larocca2022diagnosing,ragone2023unified}.
Algorithmically, this is given by the real span of the repeated nested commutators of ($i$ times) the elements of $\mcg$,
\begin{equation}
\mathfrak{g} = \mathrm{span}_{\hspace{0.5mm}\mathbb{R}} \big[i \mcg  \cup  \left[i\mcg,i\mcg\right]
  \cup  \left[i\mcg,[i\mcg,i\mcg]\right] \cup\ \ldots\ \big]  \;.
\end{equation}
For physically relevant Hamiltonians the Pauli strings in the set of generators $\mcg$ are usually local, while the iterated commutators will generally produce highly non-local terms.
Alternatively, from a quantum computing point of view,
given a quantum circuit 
\begin{equation}\label{eq:circuit}
 {U}(\vartheta) =\prod_{l=1}^L\prod_{p\in\mcg}^K e^{-i\vartheta_{lp}p}
\end{equation}
consisting of repeated layers of unitaries generated by $\mcg$ and implementing via Trotterisation the dynamics associated with the Hamiltonian~\eqref{eq:fam} (by making the choice $\vartheta_{lp}=c_pt/L$), $\mfg$ captures the expressibility of the circuit in the sense that every unitary of the form of Eq.~\eqref{eq:circuit} is generated by an element of the DLA, i.e.\ %
$ \forall \vartheta\ \exists X\in\mathfrak{g}$ such that $  {U}(\vartheta)=\exp(X)$ (which can be seen, for example, by the Baker-Campbell-Hausdorff formula). Such unitaries lie in a Lie group corresponding to $\mathfrak{g}$ (which throughout this work we shall denote by $G$). In this way we associate a Lie algebra $\mathfrak{g}$ and a Lie group $G$ to families of Hamiltonians, the members of which share the same set of generators (or interaction terms) but differ in their relative coefficients.

\begin{table*}
\fontsize{8.5}{8.5}
\begin{tabular}{|c|c|c|c|c|c|c|}
\hline
Model & Generators & Lie Algebra & Dimension &  {\renewcommand{\arraystretch}{1.1}\begin{tabular}{@{}c@{}}Graph (Module) \\ Components$^*$\end{tabular}} & {\renewcommand{\arraystretch}{1.1}\begin{tabular}{@{}c@{}}Isolated  \\ Vertices \end{tabular}}   & {\renewcommand{\arraystretch}{1.1}\begin{tabular}{@{}c@{}}\ Component \\ \ Dimensions\end{tabular}  }    \\ \hline
Matchgate & $Z,\ XX$ & $\mathfrak{so}(2n) $ & $n(2n-1) $ & $2n+1\ (2n+2)$ & 2  & ${2n\choose \kappa},\ \kappa=0,1,2,\ldots, 2n$ \\ \hline
Universal & $X,\ Y,\ ZZ$ & $\mathfrak{su}\left(2^n\right) $ & $4^n - 1$ & 2 &  1 & 1, $4^n - 1$ \\ \hline
XY model+$B_X$& $XX,\ YY,\, X$ & $\mathfrak{su}(2^{n-1})^{\oplus 2}$ & $2^{2n-1}-2$ & 6\ (8) & 2 & $1,1,4^{n-1}{-}1,4^{n-1}{-}1,4^{n-1},4^{n-1}$ \\ \hline
Ising model+$\vec{B}$ & $XX,\ X,\ Y,\ Z$& $\mathfrak{su}(2^{n-1})^{\oplus 2}\oplus\mathfrak{u}(1)$ & $2^{2n-1}-1$ & 6\ (8) & 2 & $1,1,4^{n-1}{-}1,4^{n-1}{-}1,4^{n-1},4^{n-1}$ \\ \hline
{\renewcommand{\arraystretch}{1.1}\begin{tabular}{@{}c@{}}\ Orthogonal\end{tabular}  } & $XY, YX, YZ,ZY$ & $\mfso(2^n)$ & $2^{n-1}(2^n-1)$ & 3 & 1  & $1,2^{n-1}(2^n-1),2^{n-1}(2^{n}+1)-1$ \\ \hline
Symplectic & {\renewcommand{\arraystretch}{1.1}\begin{tabular}{@{}c@{}} $X_1,Z_1Z_2,$ \\ $Y_i,X_iY_{i+1},Y_iX_{i+1}$\end{tabular}} & $\mfsp(2^{n-1})$ & $2^{n-1}(2^n+1)$ & 3 & 1  &  $1,2^{n-1}(2^n+1), 2^{n-1}(2^n-1)-1$ \\ \hline
\end{tabular}
\caption{ \textbf{Dynamical Lie Algebras and Commutator Graphs.} Here we collect a representative sample of dynamical Lie algebras, and record some key characteristics of their corresponding commutator graphs. The DLAs corresponding to 1D translationally invariant 2-local spin systems (and indeed spin systems on undirected graphs~\cite{kokcu2024classification}), upon which we primarily focus, have been completely classified~\cite{wiersema2024classification}, with dimensions scaling either linearly, quadratically or exponentially with the system size $n$. The product $\#(\mathrm{single\ element\ components})\#(\mathrm{components})$ gives the $k=2$ frame potential, $F^{(2)}_G$~\cite{roberts2017chaos,mele2023introduction}. $F^{(2)}_G$ attains its minimal value of 2 exactly if sampling Haar randomly from $G$ produces a 2-design; higher values of $F^{(2)}_G$ reflect increasingly significant deviations from being a 2-design. $^*$In many cases, the number of graph components coincides with the number of module components, defined as the number of irreducible modules appearing in the decomposition of the adjoint representation with respect to the DLA. Where this is not the case, the number of module components is written in brackets. The mismatch arises when the irreducible modules do not all admit a basis of Pauli strings.}
\label{tab:results}
\end{table*}

In this work we will exclusively consider \textit{Pauli string DLAs}~\cite{aguilar2024full,diaz2023showcasing}, namely DLAs that have a generating set consisting entirely of Pauli strings. This is not too terrible a restriction, as in practice the Hamiltonians and quantum circuits one works with are often of this form, and provide a wide class of interesting examples.
More generally, one could choose a different basis $\{O_i\}_{i=1}^{{\rm dim\hspace{0.25mm} }\mcl}$ for the operator space $\mcl=\eh$, and focus on DLAs that possess as a basis a subset of the basis of $\mcl$. Subject to the condition that the elements of the basis are closed (up to scalar multiplication) under commutation, i.e.\ $\forall\ 1\leq i,j\leq {\rm dim\hspace{0.5mm} }\mcl\ \exists k: [O_i,O_j]\propto O_k$, one can give a definition of the commutator graph analogous to the one we will present in the Pauli case, upon which our focus is informed both by its physical relevance and its advantage of comprising (operator-entanglement-free) product operators with algebraically tractable commutation relations.

The commutator graph was introduced in Ref.~\cite{diaz2023showcasing}, where it was used to characterise the trainability of QML models constructed from a free-fermionic system. In that application, it facilitated the study of operators which are not themselves within the DLA, which had been a restriction of several preceding works~\cite{ragone2023unified,fontana2024characterizing}.
The commutator graph of an $n$-qubit system consists of $4^n$ vertices, with one corresponding to each Pauli string.  Two vertices, corresponding to Pauli strings $p$ and $q$, are connected by an edge if $\exists X\in\mcg$ such that  $[p,X]\propto q$. We note that the commutator of two Pauli strings is always either zero or ($\pm 2i$ times) another Pauli string, and that, separately, $[p,X]\propto q\implies [q,X]\propto p$; the commutator graph is therefore \textit{undirected}. 
Importantly, isomorphic DLAs need \textit{not} induce isomorphic graphs; for example consider $\mfg_1 = \mathrm{span}_{\mathbb{R}}\{iI\}$ and $\mfg_2 = \mathrm{span}_{\mathbb{R}}\{iX\}$ which are isomorphic as (abelian) Lie algebras, but will clearly induce non-isomorphic commutator graphs. Moreover, having fixed a choice of computational basis, even DLAs that are \textit{equal} as  explicit Lie algebras spanned by a common set of Pauli strings  can lead to different commutator graphs, as the construction explicitly depends on the choice of generators (see Fig.~\ref{fig:1}(b)).
From a representation-theoretic point of view, the first of these points may be explained by the well-known fact that there can exist multiple non-isomorphic embeddings of a Lie algebra into a larger Lie algebra. They may for instance differ in their ``embedding index'' (see \cite{FrancescoCFT} and references therein). For example, consider the case $\mfsu(2)\subset\mfsu(3)$. One can explicitly embed $\mfsu(2)$ into $\mfsu(3)$ by placing the matrices of $\mfsu(2)$ into the ``upper $2\times 2$ block'' of $\mfsu(3)$; in this case the defining representation of $\mfsu(3)$ on $\mbc^3$ decomposes into the defining (spin-$\frac{1}{2}$) representation of $\mfsu(2)$ plus a singlet (corresponding to the trivial action on the last component of $\mbc^3$). Alternatively, one can use the equivalence $\mfsu(2)\cong\mfso(3)$ for the embedding; in this case, the defining representation of $\mfsu(3)$ decomposes into the spin-$1$ representation of $\mfsu(2)$ (which corresponds to the defining representation of $\mfso(3)$).

In general, the commutator graph breaks up into connected components corresponding to modules of the adjoint action of $G$ (see Fig.~\ref{fig:1}(a)), 
with an initial Pauli string $p$ being mapped under the dynamics to $p_t=e^{iHt}pe^{-iHt}$,  a linear combination of Pauli strings in the same connected component as $p$ (see Fig.~\ref{fig:1}(d)). This follows from the Baker-Campbell-Hausdorff formula and $G=\exp(\mathfrak{g})$;  $p_t$ is obtained from $p$ by taking a linear combination of repeated nested commutators of elements of the DLA, leading to  precisely the Pauli strings contained within the same component of the graph as $p$. The Pauli strings spanning $\mfg$ itself will appear as a single component if $\mfg$ is a simple Lie algebra; more generally the various simple summands of $\mfg$ will appear as separate components. 
For example, the (simple) matchgate DLA $\mfg\cong\mfso(2n)$ appears as a single light green component in Fig.~\ref{fig:1}(a).\\

The majority of our results will rely on the evaluation via Weingarten calculus~\cite{mele2023introduction} of integrals of the form
\begin{equation} \label{eq:2m}
\mct^{(2)}_G(M)=\int_{G} d\mu_G(U)\  U^{\dg\otimes 2} M U^{\otimes 2}
\end{equation}
where $\mu_G$ is the Haar measure on {the} (compact) Lie group $G$ (possessing the Pauli Lie algebra $\mfg$) and $M\in{\rm End}(\mc{H}^{ \otimes 2})$ is an operator on the doubled space $\mch^{\otimes 2}$. It is well-known that such integrals correspond to an orthogonal projection of $M$ onto the second-order commutant of $G$, i.e.\ the space of operators  $\{Q:[Q,U^{\otimes 2}]=0\ \forall U\in G\}$~\cite{mele2023introduction}; obtaining a basis for this space of \textit{quadratic symmetries} is therefore the key step to evaluating Eq.~\eqref{eq:2m}.\\

Before tackling this, let us say a little about the \textit{linear} symmetries of $G$, i.e.\ the set $\{L:[L,U]=0\ \forall U\in G\}$. 
One can think of the linear symmetries as corresponding precisely to the singlets (trivial representations) of $\mcl=\mch\otimes \mch^*$; indeed, note that $[L,U]=0\iff U^\dagger LU = L$. Given a decomposition  
\begin{equation}
    \mch=\bigoplus_{\lm_0}\bigoplus_{i_0=1}^{m^\mch_{\lm_0}}\mch_{\lm_0,i_0}
\end{equation}
of $\mch$ into irreducible $\mfg$-modules (with multiplicities counted by the second index), by Schur's lemma such singlets occur (once) for every pairing $(\mch_{\lm_0,i_0},\mch_{\lm_0,j_0}^*)$ in $\mcl$ of an irrep of $\mch$ with the dual of an irrep isomorphic to it. Explicitly, these symmetries are given by the isomorphisms
\begin{equation}\label{eq:linsym}
    L_{\lm_0,i_0,j_0} = \sum_{\alpha_0}\ketbra{\lm_0,i_0,\alpha_0}{\lm_0,j_0,\alpha_0},
\end{equation}
where we have introduced bases $\{\ket{\lm_0,i_0,\alpha_0}\}_{\alpha_0=1}^{d_{\lm_0}^\mch}$ and $\{\ket{\lm_0,j_0,\alpha_0}\}_{\alpha_0=1}^{d_{\lm_0}^\mch}$ of $\mch_{\lm_0,i_0}$ and $\mch_{\lm_0,j_0}$ respectively, whose elements (for any shared value of $\alpha_0$) are mapped to one another under an isomorphism between the two spaces considered as $\mfg$-modules. 

Returning to the study of Eq.~\eqref{eq:2m}, it turns out that, in the case of a Pauli string DLA, if we have a Pauli string basis $\{L_j\}_j$ of the linear symmetries of $G$,  the commutator graph  allows us to construct a basis of the quadratic symmetries as the elements~\cite{diaz2023showcasing}
\begin{equation}
Q_{j,\kappa} = \sum_{S\in C_\kappa} S\otimes L_j S \label{eq:2sym}\;,
\end{equation}
where  the sum is over Pauli strings $S$ in a given connected component $C_\kappa$ of the graph. A proof of this fact may be found in Ref.~\cite{diaz2023showcasing}. 
Evaluating integrals of the form of Eq.~\eqref{eq:2m} then reduces to simply computing the projection of $M$ onto the span of the $\{Q_{j,\kappa}\}$, and is a process which we will carry out repeatedly in this work.
We note that the symmetries produced by Eq.~\eqref{eq:linsym} will \textit{not} necessarily be Pauli strings (but one may be able to find linear combinations of them that are); we will return to this point when we consider the example of matchgate circuit dynamics in Section~\ref{sec:examples}.\\

The result of Eq.~\eqref{eq:2sym} is reminiscent of the more general representation theoretic description of quadratic symmetries; given a decomposition of  $ \mcl$ into $\mfg$-irreps,
\begin{equation}\label{eq:2sym_rt}
\mcl=\bigoplus_\lambda\bigoplus_{i=1}^{m_\lambda^\mcl}\mcl_{\lambda,i}\;,
\end{equation}
quadratic symmetries correspond to the trivial irreps in $\mcl^{\otimes 2}$\footnote{Similarly to the linear case, note $[Q,U^{\otimes 2}]=0\iff (U^{\otimes 2})^\dagger QU^{\otimes 2}=Q$.}, and -- as a result of Schur's lemma -- are obtained exactly once for each pair of isomorphic irreps in the decomposition Eq.~\eqref{eq:2sym_rt}\footnote{We note that $\mcl$ is self-dual as a module of $\mfg$ so that pairs of isomorphic irreps are in one-to-one correspondence with pairs of mutually dual irreps that produce singlets in $\mcl^{\otimes 2}$. This correspondence is the reason for the occurrence of $B^\dag$ in \eqref{eq:Qtilde}.}. Given such a pair $(\mcl_{\lambda,i},\mcl_{\lambda,j})$, with bases $\{B_{\lambda,i,\alpha}\}_{\alpha=1}^{d_\lambda^\mcl}$ and $\{B_{\lambda,j,\alpha}\}_{\alpha=1}^{d_\lambda^\mcl}$ respectively, one obtains a quadratic symmetry
\begin{equation}\label{eq:Qtilde}
\widetilde{Q}_{\lambda,i,j}=\sum_{\alpha=1}^{d_\lambda^\mcl} B_{\lambda,i,\alpha}\otimes B_{\lambda,j,\alpha}^\dg.
\end{equation}
Despite the clear similarity to Eq.~\eqref{eq:2sym}, the representation theoretic and commutator graph perspectives can differ, due to the possibility of the connected components of the graph failing to correspond precisely to irreps of the adjoint action of $\mfg$, a point to which we will return later.  \\

Finally, we will occasionally work in the ``vectorised'' formalism~\cite{mele2023introduction} wherein one identifies (via the Choi-Jamio\l kowski isomorphism) operators with states on a doubled Hilbert space,
\begin{equation}
    O\in\eh \iff \sdket{O}:= (O\otimes\id_\mch) \ket{\Phi^+}\in\mch^{\otimes 2},
\end{equation}
with $\ket{\Phi^+}$ a maximally entangled state across the two spaces. Similarly, operators on $\eh$ can be represented as ``super-operators'' on ${\rm End}(\mch^{\otimes 2})$. For example, in the case of a Pauli DLA with a basis of Pauli linear symmetries, from the above discussion we see that the twirl of Eq.~\eqref{eq:2m} (and its natural first-order analogue) are represented as
\begin{equation}\label{eq:linproj}
    \mst^{(1)}_G = \sum_{j} \sdketbra{L_{j}}{L_{j}}
\end{equation}
and 
\begin{equation}
    \mst^{(2)}_G = \sum_{\kappa,j} \sdketbra{Q_{\kappa,j}}{Q_{\kappa,j}}, \label{eq:choi2fold}
\end{equation}
or, more generally, the analogous expressions involving Eqs.~\eqref{eq:linsym} and~\eqref{eq:Qtilde}.

\subsection{Out-of-Time-Order Correlators\\\ \ \ \ \  and the Frame Potential}\label{sec:otoc}
The OTOC -- which will feature heavily in our results -- is a four-point correlator, defined between two unitary operators $W$ and $V_t$ as~\cite{Shenker_Stanford_2014,Maldacena_Shenker_Stanford_2016, Swingle2016}
\begin{equation} \label{eq:otoc}
    F(W,V_t)= \frac{1}{d}\tr[W^\dg V^\dg_t W V_t]\;,
\end{equation}
  where $d$ is the total Hilbert space dimension.
For two initially local operators with disjoint support (which therefore commute) the OTOC is equal to one; the decay of the OTOC as one of the operators is time-evolved, and therefore begins to fail to commute with the other (see Fig.~\ref{fig:1}(b)), is a well-studied probe of scrambling in quantum systems~\cite{Shenker_Stanford_2014,Maldacena_Shenker_Stanford_2016,Swingle2016}. In fact, exponential OTOC decay is necessary for the chaotic growth of local-operator entanglement~\cite{dowling_scrambling_2023}, considered to be a faithful indicator of non-integrability~\cite{Prosen2007}.

An intriguing property of the OTOC  is its connection to the \textit{frame potential} $F_{\mc{E}}^{(k)}$ of an ensemble $\mce$, defined as~\cite{roberts2017chaos, mele2023introduction}
\begin{equation}
F_{\mc{E}}^{(k)} = \mbe_{U,V\sim \mc{E}} \left[ \left\lvert \tr\left(UV^\dg\right)\right\rvert^{2k}\right].
\end{equation}
The frame potential $F_{\mc{E}}^{(k)}$  measures the (2-norm\footnote{One can also choose other ways to measure the distance to the Haar moments, such as the diamond norm  or relative error distance~\cite{Brand_o_2016}; they are however not suitable for out-of-time-ordered quantities such as the OTOC (see e.g. Ref.~\cite{schuster2024random}).}) distance between $\mce$ and a $k$-design~\cite{mele2023introduction}, i.e.\ how close it is to being indistinguishable, up to the $k$th moment, from unitaries sampled via the Haar measure $\mu_{\mbu(d)}$ on $\mbu(d)$. In this work we will employ (with $\mathbb{P}$ the set of all Pauli strings) the characterisation~\cite{roberts2017chaos} 
\begin{equation}
d^{4}F_{\mc{E}}^{(2)} =\hspace{-3mm} \sum_{A_1,B_1,A_2,B_2 \in \mathbb{P}} \left\lvert \int_{\mc{E}} d\mu_{\mce}(U) \tr[A_1 UB_1U^\dg A_2 UB_2U^\dg ] \right\rvert^2 \label{eq:frame2}
\end{equation}
to exactly evaluate $F_{\mc{E}}^{(2)}$ for any ensemble uniformly sampled from a Lie subgroup $G=\exp{\mathfrak{g}}$ of $\mathbb{U}(d)$ (where $\mfg$ is generated by a set of Pauli strings) as a function of simple properties of the commutator graph (see Prop.~\ref{prop:frame}). More generally, a similar identification exists between $2k$-point OTOCs evaluated at, and averaged over, the Pauli strings, and $k$\textsuperscript{th} frame potentials~\cite{roberts2017chaos}, which probe ever finer details of chaos and complexity; in this work we will focus primarily on $k=2$.

In fact, the frame potential is also intimately connected with the structure of the tensor product representations of the ensemble, as quantified by the following simple observation:
given a decomposition of the $k$\textsuperscript{th}-order tensor product representation $\mcl^{\otimes k}:U\in\mcu\mapsto U^{\otimes k}$ of a unitary ensemble $\mcu$ equipped with an invariant measure into irreducible components,
\begin{equation}
   \mcl^{\otimes k} = \bigoplus_{\lm^{(k)}} \bigoplus_{i=1}^{m_{\lm^{(k)}}} \mcl_{\lm^{(k)},i}\;,
\end{equation}
the corresponding  $k$\textsuperscript{th}-order frame potential satisfies
\begin{equation}\label{eq:fp}
F_{\mc{E}}^{(k)} = \sum_{\lm^{(k)}}  m_{\lm^{(k)}}^2.
\end{equation}
This result follows quickly from the connection between the multiplicities of irreps appearing in a given representation and the norm of the character vector induced by the usual inner product on class functions of a group~\cite{Gross_2007,hashagen2018real,fulton1991representation} and the fact that $\tr[U^{\otimes k}]=\tr[U]^k$.

\subsection{Krylov Complexity}\label{sec:krylov}
Several of our results will explore the interplay between the commutator graph and both Krylov complexity and Krylov spaces, the necessary details of which we briefly recap here. For a more detailed exposition one can see e.g.\  Refs~\cite{parker2019universal,caputa2022geometry,nandy2024quantum}.

We begin by recalling that the Heisenberg evolution of operators can be succinctly described in terms of the Liouvillian superoperator $ \msl  = [H,-]$ as 
$O_t=e^{i\msl t}O$. In the vectorised picture, the repeated application of the Liouvillian on an initial (vectorised) operator $\sdket{O}$ generates a subspace $\{\sdket{O},\msl\sdket{O},\msl^2\sdket{O},\ldots\}$ called the Krylov space of the operator. A convenient orthonormal  (with respect to be inner product $\sdbraket{O}{O'}=(1/d)\tr\left[ O^\dg O'\right]$) basis $\{\sdket{O_0},\sdket{O_1},\sdket{O_2},\ldots\}$ can be defined on this space. We begin by letting $\sdket{O_0}=\sdket{O},\ \sdket{O_1}=b_1^{-1}\msl\sdket{O}$, with $b_1 = \sqrt{\sdbraket{\msl O}{\msl O}}$ a normalisation constant (note $\sdbraket{O_0}{O_1}=\tr\left[O [H,O]\right]=0$).
The rest of the construction proceeds recursively; one defines 
\begin{equation}
\sdket{A_n} = \msl \sdket{O_{n-1}} - b_{n-1}\sdket{O_{n-2}}
\end{equation}
and $\sdket{O_n}=b_n^{-1}\sdket{A_n}$, with $b_n = \sqrt{\sdbraket{A_n}{A_n}}$.
Eventually there will come an $n$ with $\sdbraket{A_n}{A_n}=0$, at which point the process (the \textit{Lanczos algorithm}) terminates.
Intuitively, starting from a ``simple'' (e.g.\ local) operator $O$,  basis elements $O_i$ with higher values of $i$ represents increasingly complex operators having been scrambled by the dynamics. Expanding 
\begin{equation}
\sdket{O_t} = \sum_n i^n \varphi_n (t)\sdket{O_n}     
\end{equation}
in this basis (with the factors of $i$ conventionally inserted to render the $\varphi_n$ real~\cite{caputa2022geometry}), Krylov complexity is defined as the weighted average
\begin{equation}\label{eq:kc}
\mathsf{K}(O_t) = \sum_n n |\varphi_n (t)|^2\;.
\end{equation}
Eq.~\eqref{eq:kc} can be interpreted physically as the expectation value of the position of $\sdket{O_t}$ on a one-dimensional chain consisting of the Krylov basis elements. We will later return to this interpretation, contrasting it with the spread of $\sdket{O_t}$ over  the commutator graph.
From its construction we see that if $O$ is a Pauli string then its Krylov space is a subspace of the space spanned by the nodes of its connected component in the commutator graph. More generally, an initial operator which is a linear combination of Pauli strings will spread across all of the components on which it initially has support.

Krylov complexity has been shown~\cite{parker2019universal} to upper bound all \textit{$q$-complexities}, a large class of complexity measures. Following Ref.~\cite{parker2019universal} we define a $q$-complexity of an operator  $O_t=e^{i\msl t}O$ to be the expectation value $\mathsf{Q}(O_t)=\sdbraket{O_t}{\mcq|O_t}$ where $\mcq$ is a superoperator satisfying
\begin{enumerate}
    \item $\mcq$ is  positive semidefinite, with a spectral decomposition
    \begin{equation}
        \mcq = \sum_a q_a \sdketbra{q_a}{q_a}
    \end{equation}
    where $q_a\geq 0 \ \forall a$.
    \item There exists $M>0$ such that
    \begin{equation}\label{eq:qm1}
        \sdbraket{q_a}{\msl|q_b} = 0 \mathrm{\ if\ } |q_a-q_b|>M,
    \end{equation}
    \vspace{-8mm}
    \begin{equation}\label{eq:qm2}
        \sdbraket{q_a}{O} = 0 \mathrm{\ if\ } |q_a|>M.
    \end{equation}
\end{enumerate}
It is proven in Ref.~\cite{parker2019universal} that for any $q$-complexity one has  the bound
\begin{equation}\label{eq:qm3}
\mathsf{Q}(O_t)\leq 2M\mathsf{K}(O_t).
\end{equation}
We will later introduce a new $q$-complexity, which we term \textit{graph complexity}, and use it to turn properties of the commutator graph into lower bounds on the Krylov complexity of the corresponding dynamics.

\begin{figure*}
  \includegraphics[width=\textwidth]{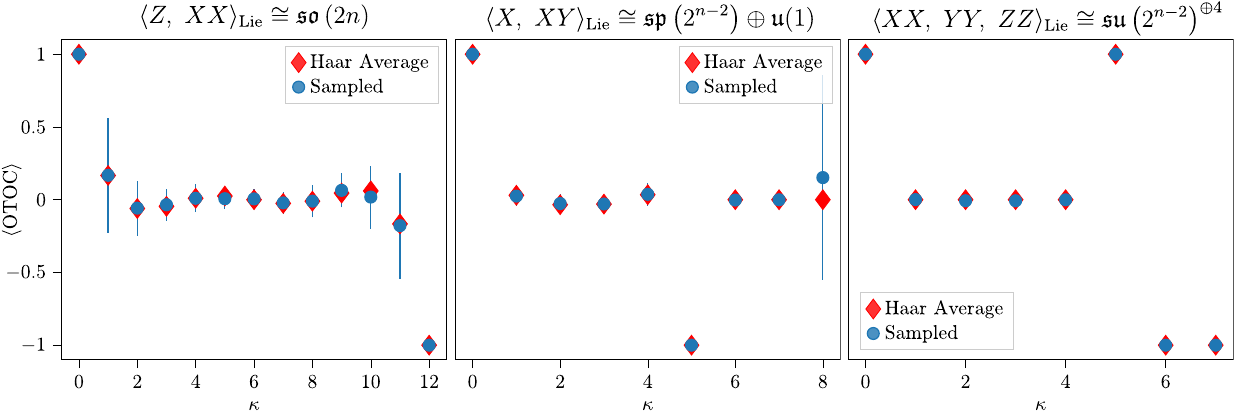}
  \caption{The OTOC between $e^{i\sum_j c_jH_j} (ZXYZIY) e^{-i\sum_j c_jH_j}$ and a representative Pauli string from each connected component of the commutator graph (indexed by $\kappa$, ordered arbitrarily), for 100 (uniformly) random choices of the coefficients $c_j$ appearing in the Hamiltonian. The empirical average value is plotted in blue, with the standard deviation indicated, and compared to the analytical calculation of the Haar average value (in red) given by Prop.~\ref{crllr:counting}. Interestingly, we find that this induced  distribution well-approximates (at least the second moment of) the uniform distribution, suggesting that our results will with high probability be accurate for Hamiltonians with uniformly randomly sampled coefficients. In small (but non-trivial) components one can find larger variances; for example the ``8\textsuperscript{th}'' component of the middle dynamics in the (arbitrary) ordering of this figure corresponds to the two-element component consisting of $YXIIII$ and $ZXIIII$.  }
\label{fig:numerics}
\end{figure*}

\section{Results}\label{sec:aresults}
We begin by detailing a sequence of results related to (Heisenberg) operator evolution illuminated by the commutator graph. The proofs (which may be found in the appendices) of many of these results ultimately reduce to the fairly mechanical calculation of integrals of the form of Eq.~\eqref{eq:2m}
via projection onto the subspace of Eq.~\eqref{eq:2sym}. \\

Our first example of an interesting property of an ensemble of unitaries which is determined by graph-theoretic properties of the commutator graph is
\begin{restatable}{prop}{frame} \label{prop:frame}
If both the DLA $\mathfrak{g}$  of an ensemble of unitaries $G=\exp  \mathfrak{g}$ and its linear symmetries have bases consisting of Pauli strings, then the frame potential $ F_{G}^{(2)}$ can be read off the commutator graph as:
\begin{equation}\label{eq:frame}
F_{G}^{(2)}=\#(\mathrm{isolated\ vertices})\,\cdot\,\#(\mathrm{components})\;.
\end{equation}
\end{restatable}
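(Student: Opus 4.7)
The plan is to convert the frame-potential computation into a dimension count of the second-order commutant of $G$, and then read off Eq.~\eqref{eq:frame} from the Pauli-string basis already supplied by Eq.~\eqref{eq:2sym}.

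First I would specialise Eq.~\eqref{eq:fp} to $k=2$ --- equivalently, apply Schur orthogonality to the character of the representation $U \mapsto U^{\otimes 2}$ --- to conclude that $F_G^{(2)}$ equals the dimension of the second-order commutant $\{Q \in \mathrm{End}(\mch^{\otimes 2}) : [Q, U^{\otimes 2}] = 0\ \forall U\in G\}$. This reduces the proposition to counting a basis of this commutant.

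By hypothesis there is a Pauli basis $\{L_j\}_{j=1}^J$ of the linear symmetries of $G$, so Eq.~\eqref{eq:2sym} supplies a family of operators $Q_{j,\kappa}$, indexed by pairs of a basis element $L_j$ and a connected component $C_\kappa$ of the commutator graph, which span the commutant by the result cited from Ref.~\cite{diaz2023showcasing}. Linear independence of the $Q_{j,\kappa}$ is then immediate by inspection of their $2n$-qubit Pauli expansions: distinct $(j,\kappa)$ produce disjoint collections of Pauli strings $S \otimes L_j S$ (distinct $\kappa$ give disjoint choices of $S$ in the first tensor factor, and for fixed $\kappa$ varying $j$ changes the second factor $L_j S$ since left-multiplication by $S$ is a bijection on Pauli strings up to phase). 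Hence $F_G^{(2)} = J \cdot \#(\mathrm{components})$.

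Finally, I would identify $J$ with $\#(\mathrm{isolated\ vertices})$ of the commutator graph. Since $G = \exp\mathfrak{g}$ is connected, a Pauli string $p$ commutes with every $U \in G$ iff it commutes with every $X \in \mathfrak{g}$, iff --- because $\mathfrak{g}$ is generated (as a Lie algebra) by $\mathcal{G}$ --- it commutes with every $H_\ell \in \mathcal{G}$, which by definition of the commutator graph is exactly the condition that the node $p$ has no neighbours. Hence the Pauli linear symmetries are in bijection with the isolated vertices, $J = \#(\mathrm{isolated\ vertices})$, and Eq.~\eqref{eq:frame} follows.

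The main subtlety is the role of the Pauli-basis hypothesis on the linear symmetries: without it, Eq.~\eqref{eq:2sym} can fail to exhaust the commutant (some singlets in $\mcl^{\otimes 2}$ would then correspond to genuinely non-Pauli symmetries, requiring the more general Eq.~\eqref{eq:Qtilde}), and moreover the count of isolated vertices can undershoot $J$. Modulo that hypothesis I do not anticipate any genuine obstacle --- the argument is essentially a translation of an already available basis of the commutant into graph-theoretic language.
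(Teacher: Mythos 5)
Your proposal is correct and follows essentially the same route as the paper's primary argument: identify $F_G^{(2)}$ with the dimension of the second-order commutant via Eq.~\eqref{eq:fp}, count the basis $Q_{j,\kappa}$ of Eq.~\eqref{eq:2sym} from Ref.~\cite{diaz2023showcasing} to get $J\cdot\#(\mathrm{components})$, and identify the Pauli linear symmetries with the isolated vertices (you usefully spell out the linear independence of the $Q_{j,\kappa}$ and the isolated-vertex equivalence, which the paper merely asserts). The paper additionally supplies an independent, explicit Weingarten-calculus derivation of the same formula, but that is presented only as an alternative check and is not needed for your argument.
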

In fact, this result follows immediately from the commutator graph based characterisation of the quadratic symmetries combined with the fact that the $k=2$ frame potential simply counts these symmetries~\cite{mele2023introduction} (Eq.~\eqref{eq:fp}). In addition to this, in Appendix~\ref{sec:proofs} we give an explicit derivation of Eq.~\eqref{eq:frame} via the Weingarten calculus and Eq.~\eqref{eq:2sym}. 
This result is consistent with the minimum value of the frame potential being achieved by the ensemble given by Haar randomly sampling from $\mathbb{U}(d)$, in which case the commutator graph splits  into $\{\id\}\cup\{$non-trivial Paulis$\}$, i.e.\ two connected components, one of which has a single isolated vertex, and we have $ F_{{\mbu(d)}}^{(2)} =2$. 
In more representation-theoretic language, this reflects the decomposition
\begin{equation}\label{eq:udecomp}
\cL={\rm End}(\mbc^d)\cong \mbc^d \otimes (\mbc^d)^* \cong \mbc\oplus{\rm ad}_{\mfsu(d)}
\end{equation}
of the space of operators on $\mbc^d$ into a trivial representation $\mbc$ and the adjoint representation $ {\rm ad}_{\mfsu(d)}$ under the action of $\mfsu(d)$. Here, we interpret $\cH=\mbc^d$ as the fundamental and $\cH^\ast=(\mbc^d)^*$ as the anti-fundamental representation of $\mfsu(d)$\footnote{It is worth noting that the space of operators $\cL=\mch\otimes\mch^\ast$ is always self-dual as a representation of $\mfu(d)$ (or $\mfsu(d)$). This will then also be true when considered as a representation space with respect to any DLA $\mfg\subset\mfu(d)$.}. The identity operator will always be in a component by itself {as it has a trivial commutator with everything else} (corresponding to the representation-theoretic fact that a trivial representation will always arise from the tensor product in Eq.~\eqref{eq:udecomp} of a representation with its dual), so 2 is clearly the minimum possible value of Eq.~\eqref{eq:frame}. More generally, the commutator graph will split into many components, and we will have $ F_G^{(2)}\gg 2$. {We note that Eq.~\eqref{eq:udecomp} can also be interpreted in the following way: Under the identification of ${\rm End}(\mbc^d)$ with the (complexification of the) Lie algebra $\mfu(d)$, the right hand side corresponds to restricting the adjoint representation of $\mfu(d)$ to the action of the Lie subalgebra $\mfsu(d)\subset\mfu(d)$.}\footnote{A closely related statement is that $\mfu(d)=\mfu(1)\oplus\mfsu(d)$, where $\mfu(1)=\mathrm{span}_{\mathbb{R}}\{i\id\}$.}

Our next result concerns the four-point correlator 
of Pauli strings $P,Q,R$ and $S$, two of which are Heisenberg-evolved, averaged over the dynamics -- this is a generalization of an averaged OTOC, Eq.~\eqref{eq:otoc}.
This expression turns out to be exactly analytically calculable; we find:
\begin{restatable}{prop}{coherence} \label{prop:coherence}
Let $P,Q,R,S$ be Pauli strings belonging respectively to the connected components $C_P,C_Q,C_R,C_S$ of the graph. Then
\begin{align}
\expect_{U\sim \mu_G} \tr \big[P UQU^\dg R &USU^\dg \big]=d^{-1}\sdbraket{RP|\mst_G^{(1)}}{QS}\nonumber\\ 
&\times\left(1- \frac{2|\{T\in C_Q : \{P,T\}=0 \}|}{|C_Q|}\right) .
\end{align}
This expression is zero unless  there exists a linear symmetry $L$ such that $RP,QS\propto L$.  
\end{restatable}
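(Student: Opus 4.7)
The plan is to cast the four-point trace as an object to which the two-fold twirl machinery of Eqs.~\eqref{eq:2m}--\eqref{eq:2sym} applies directly. The entry point is the SWAP identity $\tr[ABCD] = \tr[(A\otimes C)(B\otimes D)\,\mathrm{SWAP}]$ with $A=P$, $B=UQU^\dg$, $C=R$, $D=USU^\dg$; averaging over $U\sim\mu_G$ and using invariance of the Haar measure under $U\mapsto U^\dg$ gives $\expect_U\tr[PUQU^\dg RUSU^\dg] = \tr[(P\otimes R)\,\mct^{(2)}_G(Q\otimes S)\,\mathrm{SWAP}]$, so the problem reduces to computing a single projection onto the quadratic commutant.

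Next I would expand this twirl using the basis of quadratic symmetries $\{Q_{j,\kappa}\}$ supplied by Eq.~\eqref{eq:2sym}. Pauli-string orthogonality makes the requisite overlaps transparent: $\tr[Q_{j,\kappa}^\dg(Q\otimes S)] = \sum_{T\in C_\kappa}\tr[TQ]\tr[TL_jS] = d\,\tr[QL_jS]\,\delta_{\kappa,\kappa(Q)}$ (only $T=Q$ in the single component containing $Q$ survives), while $\|Q_{j,\kappa}\|_F^2 = d^2|C_\kappa|$. Substituting yields $\mct^{(2)}_G(Q\otimes S) = \sum_j\frac{\tr[QL_jS]}{d|C_Q|}Q_{j,\kappa(Q)}$.

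The remaining task is to evaluate $\tr[(P\otimes R)\,Q_{j,\kappa(Q)}\,\mathrm{SWAP}] = \sum_{T\in C_Q}\tr[PT\cdot RL_jT]$. For Pauli strings, $TA = \eta(T,A)AT$ with $\eta(T,A)\in\{\pm 1\}$ encoding (anti-)commutation, and $T^2 = \id$; together these give $\tr[PT\cdot RL_jT] = \eta(T,RL_j)\,\tr[PRL_j]$, so the sum factorises as $\tr[PRL_j]\,(|C_Q| - 2|\{T\in C_Q : \{T,RL_j\}=0\}|)$. The full expression is nonzero only when $\tr[QL_jS]$ and $\tr[PRL_j]$ are both nonzero, which forces $L_j\propto QS$ and $L_j\propto PR$ simultaneously --- exactly the stated condition that $RP,QS\propto L$ for a common linear symmetry $L=L_j$. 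In this regime, $RL_j\propto R\cdot PR \propto P$ (via $R^2 = \id$), so $\{T,RL_j\}=0\iff\{T,P\}=0$, producing the combinatorial factor in the claim. Identifying the remaining sum $\sum_j\tr[PRL_j]\tr[L_jQS]$ with $\sdbraket{RP|\mst_G^{(1)}}{QS}$ through Eq.~\eqref{eq:linproj}, and collecting the $d|C_Q|$ denominator against $|C_Q|$ in the numerator, produces the $d^{-1}$ prefactor.

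The main obstacle is the careful bookkeeping of Pauli phases and commutation signs, in particular the step $\{T,RL_j\}=0\iff\{T,P\}=0$, which relies on $L_j$ being determined by $PR$ (equivalently $QS$) only up to a phase. The individual reductions are mechanical, but verifying that all the signs and phases conspire so that the final expression literally matches $d^{-1}\sdbraket{RP|\mst_G^{(1)}}{QS}$ --- rather than differing by a spurious sign coming from the $Q\leftrightarrow S$ or $P\leftrightarrow R$ ordering --- is where the delicate work lies.
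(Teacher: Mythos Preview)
Your proposal is correct and follows essentially the same route as the paper's proof: swap trick, projection onto the $\{Q_{j,\kappa}\}$ basis, collapse via Pauli orthogonality, and identification of the leftover $j$-sum with $\sdbraket{RP|\mst_G^{(1)}}{QS}$. The only cosmetic difference is that the paper commutes $T$ past $P$ (yielding the sign $\eta(P,T)$ directly), whereas you commute $T$ past $RL_j$ and then use $RL_j\propto P$ on the support of the nonvanishing terms to reach the same combinatorial factor; both are equivalent and your bookkeeping concern is well-placed but resolves exactly as you describe.
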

Interestingly,  we find a ``coherence'' effect, where the correlator being non-zero requires that the Pauli strings come from pairs of components that are related by the (multiplicative) action of a linear symmetry. For example, in Fig.~\ref{fig:1}(a) we find that every component has a ``twin''  related by the action of $Z^{\otimes n}$ (note that the dark green component is its own twin). As discussed in Appendix~\ref{sec:proofs}, twin components are isomorphic as graphs, but isomorphic components are not necessarily twins. The coherence effect of Prop.~\ref{prop:coherence}  is reminiscent of Theorem 1 of Ref.~\cite{diaz2023showcasing}, where the variance of an observable (expressed in terms of its decomposition with respect to a commutator graph) contains a term coming from interactions between two isomorphic modules. That specific example -- of ``matchgate circuit'' dynamics (see Table~\ref{tab:results}) -- is explored in detail in Section~\ref{sec:examples}.
In the case where the irreducible $G$-modules can be identified with the connected components of the commutator graph we can interpret the vanishing of the four-point correlator of Prop.~\ref{prop:coherence}  when $C_Q\not\cong C_S$ (and, by symmetry, when $C_P\not\cong C_R$)
as a consequence of Schur's lemma: 
\begin{align}
\expect_{U\sim\mu_G} {\rm tr} &\big[P UQU^\dagger R USU^\dagger \big] \nn\\
&= \expect_{U\sim\mu_G} \tr \left[\mbs (P\otimes R)U^{\otimes 2}(Q\otimes S)U^{\dg\otimes 2} \right]\\
&= \tr \left[\mbs(P\otimes R) {\rm proj}_{(\mcl^{\otimes 2})^G} (Q\otimes S) \right]
\end{align}
as (by Schur's lemma) the trivial irreps that constitute $(\mcl^{\otimes 2})^G$ come precisely from pairing isomorphic irreps in the decomposition of $\mcl$. Here, and throughout this work, $\mbs\in{\rm End\ }\mch^{\otimes 2}$ denotes the swap operation, i.e.\ $\mbs(\ket{\psi}\otimes\ket{\phi})=\ket{\phi}\otimes\ket{\psi}$.

As a special case of Prop.~\ref{prop:coherence} we can deduce the average OTOC (Eq.~\eqref{eq:otoc}) between any two strings, where the Heisenberg-evolution of $V$ is averaged over the dynamics:

\begin{restatable}{crllr}{counting}  \label{crllr:counting}
Pick two Pauli strings $V$ and $W$, belonging respectively to the connected components $C(V)$ and $C(W)$ of the graph. Then we have 
\begin{align}
\expect_{U\sim\mu_G} F(W,U^\dagger VU) &= 1 - \frac{2|\{T\in C(V) : \{W,T\}=0 \}|}{|C(V)|}  \\
&= 1 - \frac{2|\{T\in C(W) : \{V,T\}=0 \}| }{|C(W)|} \;.
\end{align}
\end{restatable}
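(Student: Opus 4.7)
The plan is to derive this Corollary as a direct specialization of Proposition \ref{prop:coherence}, using the fact that a Pauli string squared equals the identity.

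Since $V$ and $W$ are Pauli strings they are Hermitian, so $V_t = U^\dagger V U$ is Hermitian, $V_t^\dagger = V_t$, and the OTOC collapses to $F(W, V_t) = \tfrac{1}{d}\tr[W V_t W V_t]$. Expanding and using cyclicity of the trace (together, if needed, with Haar invariance under $U \leftrightarrow U^\dagger$) rewrites the integrand in the canonical form $\tr[P U Q U^\dagger R U S U^\dagger]$ of Proposition \ref{prop:coherence}, with either the choice $(P,Q,R,S) = (V,W,V,W)$ or the choice $(W,V,W,V)$. The essential feature of both choices is that $RP = V^2 = \id$ and $QS = W^2 = \id$, so the ``coherence'' condition of Proposition \ref{prop:coherence} is automatically satisfied through the identity linear symmetry, and the averaged quantity is generically non-zero.

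Applying Proposition \ref{prop:coherence} then produces two factors. The first, $d^{-1}\sdbraket{\id|\mst_G^{(1)}}{\id}$, collapses to a pure constant: $\id$ is always a linear symmetry, so $\mst_G^{(1)}\sdket{\id} = \sdket{\id}$ and the matrix element reduces to $d^{-1}\sdbraket{\id}{\id}$; combined with the $1/d$ from the definition of $F$ and the Choi-vector normalisation $\sdbraket{\id}{\id} = 1$, the various factors of $d$ conspire to give an overall coefficient of $1$. The remaining factor is precisely the combinatorial ratio $1 - 2|\{T \in C_Q : \{P,T\}=0\}|/|C_Q|$, which for the two trace rearrangements above reproduces the two claimed expressions of the Corollary. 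Their equality is an automatic consistency check, and can also be seen independently from the identity $\tr[WV_tWV_t]=\tr[V_tWV_tW]$ combined with the Haar invariance $U \leftrightarrow U^\dagger$ used in swapping which of $V$ or $W$ plays the role of the ``evolved'' operator.

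The only real point requiring care is the bookkeeping of factors of $d$ (from the OTOC normalisation, the $d^{-1}$ in Proposition \ref{prop:coherence}, and the inner-product convention) and confirming that they combine to produce the coefficient $1$ in front of the bracketed expression rather than, say, $d^{\pm 1}$. This is mechanical rather than conceptual; once Proposition \ref{prop:coherence} is in hand, the Corollary follows essentially by inspection.
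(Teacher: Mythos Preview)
Your approach is exactly the paper's: specialize Proposition~\ref{prop:coherence} with $(P,Q,R,S)=(W,V,W,V)$ (after the Haar substitution $U\leftrightarrow U^\dagger$), use $W^2=V^2=\id$ so that the coherence factor is satisfied through the identity symmetry, and obtain the second expression via cyclicity and Haar invariance.

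The one place your write-up slips is precisely the ``mechanical'' step you flagged. Under the convention actually used in the statement and proof of Proposition~\ref{prop:coherence} (see Eq.~\eqref{eq:a32}), $\sdbraket{RP|\mst_G^{(1)}}{QS}=\sum_j\tr[L_jQS]\tr[PRL_j]$ is an \emph{unnormalised} sum of traces, so $\sdbraket{\id|\mst_G^{(1)}}{\id}=\tr[\id]^2=d^2$, not $1$. The correct chain is then $d^{-1}\text{(OTOC)}\times d^{-1}\text{(Prop.~\ref{prop:coherence})}\times d^2=1$, exactly as in the paper's computation $\frac{\tr[\id]^2}{d^2}=1$. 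Your claimed arithmetic $d^{-1}\cdot d^{-1}\cdot 1=1$ does not balance; fixing the normalisation of $\sdbraket{\id}{\id}$ is all that is needed.
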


So the average OTOC between two Pauli strings can therefore be calculated as (essentially) just the fraction of nodes in the component of one of them with which the other fails to commute (recalling that Pauli strings either commute or anticommute). 
We also obtain the following corollary:

\begin{restatable}{crllr}{symcounting} \label{prop:symcounting}
Pick two Pauli strings $V$ and $W$, belonging respectively to the connected components $C(V)$ and $C(W)$ of the graph. We have:
\begin{equation}
\frac{|\{T\in C(W) : [V,T]=0 \}|}{|C(W)|} = \frac{|\{T\in C(V) : [W,T]=0 \}|}{ |C(V)|} 
\end{equation}
and
\begin{equation}
\frac{|\{T\in C(W) : \{V,T\}=0 \}|}{|C(W)|} = \frac{|\{T\in C(V) : \{W,T\}=0 \}|}{ |C(V)|} 
\end{equation}
where $|S|$ denotes the number of elements of the set $S$.
\end{restatable}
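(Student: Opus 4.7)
The plan is to read this off as an immediate consequence of Corollary~\ref{crllr:counting}, together with the elementary fact that any two Pauli strings either commute or anticommute (and not both, since no Pauli string is zero). First I would note that Corollary~\ref{crllr:counting} gives \emph{two} expressions for the same quantity $\expect_{U\sim\mu_G} F(W,U^\dagger V U)$: one as a counting problem on $C(V)$ (counting strings anticommuting with $W$), and one as a counting problem on $C(W)$ (counting strings anticommuting with $V$). Equating these two expressions and canceling the common factor of $-2$ yields the anticommutator half of the corollary directly.

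For the commutator half, the key observation is the dichotomy: for any Pauli string $T \in C(W)$, exactly one of $[V,T]=0$ or $\{V,T\}=0$ holds. Consequently,
\begin{equation}
|\{T \in C(W) : [V,T]=0\}| = |C(W)| - |\{T \in C(W) : \{V,T\}=0\}|,
\end{equation}
and likewise with the roles of $V,W$ and $C(V),C(W)$ exchanged. Dividing by $|C(W)|$ (respectively $|C(V)|$) converts the identity proven above (for anticommutators) into the claimed identity for commutators.

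There is essentially no obstacle here: the corollary is a purely formal consequence of Corollary~\ref{crllr:counting}. The only thing to justify carefully is the commute/anticommute dichotomy, which is standard (the product of two Pauli strings is $\pm$ a Pauli string, and the sign determines whether they commute or anticommute). I would therefore present the proof as two short lines, one for each equation, rather than invoke any further Weingarten machinery or graph structure.
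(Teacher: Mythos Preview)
Your proposal is correct and matches the paper's own proof essentially line for line: the paper also equates the two expressions from Corollary~\ref{crllr:counting} to get the anticommutator identity, then invokes the commute/anticommute dichotomy for Pauli strings to convert it into the commutator identity.
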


In the case where the irreps of the adjoint action of $\mfg$ on $\mcl$ possess bases consisting of Pauli strings -- so that the connected components of the commutator graph correspond exactly to different irreps -- we find that, remarkably, the assignment of Pauli strings to the various irreps is done in such a way as to guarantee that the probability that a given Pauli string $V$ commutes with a randomly chosen Pauli string from the irrep to which $W$ belongs equals the probability that $W$ commutes with a randomly chosen Pauli string from irrep to which $V$ belongs, for all choices of $V$ and $W$. 

Corollary~\ref{crllr:counting}, which reduces the calculation of  average OTOCs to counting the number of Pauli strings from various components of the graph that (anti-)commute with strings from other components allow us to (under certain conditions) connect the average OTOC to the \textit{frustration graph}~\cite{chapman2023unified} of the Hamiltonian driving the dynamics. Given a Hamiltonian of the form Eq.~\eqref{eq:fam}, the frustration graph possesses a vertex for each Pauli string in the generating set, with an edge connecting a pair of vertices if the corresponding strings anticommute. We emphasise that this is an entirely different graph from the commutator graph that is the main focus of this work.
We do have, however:

\begin{restatable}{crllr}{frustration}\label{crllr:frustration}
If $\mfg=\mathrm{span}_{i\mathbb{R}} (\mcg)$ is a simple Lie algebra, then all nodes in the frustration graph of any Hamiltonian of the form Eq.~\eqref{eq:fam} have the same degree. 
\end{restatable}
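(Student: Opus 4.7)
The plan is to apply Corollary~\ref{prop:symcounting} directly, after identifying the connected component of the commutator graph that contains $\mcg$ with $\mcg$ itself. First I would observe that since Pauli strings are mutually orthogonal under the Hilbert--Schmidt inner product, the hypothesis $\mfg = \mathrm{span}_{i\mathbb{R}}(\mcg)$ forces $\mcg$ to be a basis of $\mfg$; moreover, any Pauli string $P \in \mfg$ admits an expansion $P = \sum_{Q \in \mcg} c_Q Q$ with $c_Q = d^{-1}\mathrm{tr}(Q^\dg P) = \delta_{PQ}$, so the Pauli strings contained in $\mfg$ are precisely the elements of $\mcg$.

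The main step is to show that simplicity of $\mfg$ implies $\mcg$ coincides with a single connected component $C$ of the commutator graph. Fix any $P \in \mcg$ and let $V(P)$ denote the vertex set of its connected component. Since $\mcg \subset \mfg$ and $\mfg$ is closed under $[\cdot,\cdot]$, the commutator-graph edges out of any Pauli string in $\mfg$ land again in $\mfg$, so $V(P) \subseteq \mcg$. Conversely, $\mathrm{span}_{\mathbb{R}} V(P)$ is manifestly invariant under $\mathrm{ad}_H$ for every $H \in \mcg$, hence, by linearity, under the full adjoint action of $\mfg$; irreducibility of the adjoint representation of the simple Lie algebra $\mfg$ then forces $\mathrm{span}_{\mathbb{R}} V(P) = \mfg$, which combined with Pauli orthonormality gives $V(P) = \mcg$. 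This identification $C = \mcg$ is the crux of the argument, and is where simplicity of $\mfg$ enters in an essential way; without it one would only know that the frustration degrees are constant along orbits of an automorphism group, not across all of $\mcg$.

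Once $C = \mcg$ is in hand the result is immediate: applying Corollary~\ref{prop:symcounting} to any pair $V, W \in \mcg = C$, the common denominator $|C|$ cancels and one obtains
\begin{equation}
|\{T \in \mcg : \{V,T\} = 0\}| = |\{T \in \mcg : \{W,T\} = 0\}|,
\end{equation}
which is precisely the statement that every vertex of the frustration graph of $H = \sum_{p \in \mcg} c_p p$ has the same degree (note that the degree does not depend on the particular coefficients $c_p$). No further calculation is required beyond invoking Corollary~\ref{prop:symcounting}.
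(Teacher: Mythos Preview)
Your proof is correct and follows essentially the same approach as the paper: first argue that simplicity of $\mfg$ forces $\mcg$ to coincide with a single connected component of the commutator graph, then invoke the counting symmetry to conclude equal degrees. The only cosmetic difference is that the paper appeals to Corollary~\ref{crllr:counting} (via the symmetry of the averaged OTOC under $V\leftrightarrow W$) whereas you invoke its immediate consequence Corollary~\ref{prop:symcounting} directly; your justification that $V(P)=\mcg$ via irreducibility of the adjoint representation is in fact more carefully spelled out than the paper's version.
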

We note that a local Hamiltonian will possess $\Theta(n)$ Pauli strings in the decomposition of Eq.~\eqref{eq:fam}; their already forming a Lie algebra therefore admittedly limits the applicability of Corollary~\ref{crllr:frustration} to a somewhat trivial class of dynamics.

Next we argue that the average OTOC values that we have calculated so far with high probability closely approximate the OTOC that one would obtain for a randomly chosen $U$. Specifically, we have:

\begin{restatable}{prop}{typical} \label{prop:typical}
The averages of Proposition~\ref{prop:coherence} and Corollaries~\ref{crllr:counting} and~\ref{crllr:frustration}  are typical for a classical reductive DLA $\mathfrak{g}=\bigoplus_i \mathfrak{g}_i$ with $\mathfrak{g}_i\in\{\mathfrak{so}(k_i),\ \mathfrak{su}(k_i),\ \mathfrak{u}(k_i),\ \mathfrak{sp}(2k_i)\}$  in the sense that the probability that a given single-shot evaluation with a $U\sim\mu_G$ deviates from the calculated averages is exponentially suppressed,
\begin{equation}
\mathbb{P}_{U\sim\mu_G }\left(  \left\lvert f(U) - \mathbb{E}[f] \right\rvert \geq \epsilon \right) \leq 2e^{-(k-2)^2\epsilon^2/384}
\end{equation} 
with $k=\mathrm{min}\{k_i\}_i$, and 
$ f(U) = \tr \left[P UQU^\dg R USU^\dg \right]$, where $P,Q,R$ and $S$ are Pauli strings. 
\end{restatable}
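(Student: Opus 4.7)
The plan is to deduce the concentration bound from a Levy-type (concentration of measure) inequality for the Haar measure on classical compact Lie groups, combined with a Lipschitz estimate on $f$ viewed as a function on $G=\exp\mfg$.

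\textbf{Step 1 (Lipschitz bound).} I would first bound $|f(U)-f(V)|$ for $U,V\in G$. Since $P,Q,R,S$ are Pauli strings, and hence unitaries of operator norm one, I would telescope $UQU^{\dg}RUSU^{\dg}-VQV^{\dg}RVSV^{\dg}$ into a sum of four terms, each of the form $X(U-V)Y$ or $X(U^{\dg}-V^{\dg})Y$ with $X,Y$ unitary. Applying the Cauchy--Schwarz bound $|\tr AB|\le \|A\|_{\mathrm{HS}}\|B\|_{\mathrm{HS}}$ to each trace, and using unitary invariance of the HS norm together with $\|P\|_{\mathrm{op}}=1$, yields
\begin{equation}
|f(U)-f(V)|\;\le\; L\,\|U-V\|_{\mathrm{HS}}
\end{equation}
with an explicit Lipschitz constant $L$ depending on the representation of $G$ on $\mch$.

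\textbf{Step 2 (Levy concentration on classical groups).} I would then invoke the Meckes concentration inequality for classical compact groups: for $G_i\in\{SO(k_i),SU(k_i),U(k_i),Sp(2k_i)\}$ equipped with the Haar measure, any $L$-Lipschitz function $g:G_i\to\mbr$ satisfies $\mathbb{P}(|g-\mbe[g]|\ge\epsilon)\le 2\exp\bigl(-c(k_i-2)\epsilon^2/L^2\bigr)$ for an explicit constant $c$. Substituting the Lipschitz estimate from Step~1 and carefully tracking the dependence of $L$ on the rank $(k-2)$ of the representation recovers the claimed exponent $(k-2)^2\epsilon^2/384$, with the factor of $384$ produced by the combination of Meckes' constant with the squared Lipschitz constant.

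\textbf{Step 3 (Reductive case).} For $\mfg=\bigoplus_i\mfg_i$ the corresponding group factorises as $G=\prod_i G_i$ with Haar measure equal to the product of the factor-wise Haar measures. Viewing $f$ as jointly Lipschitz in the factors $U_i$, successive conditioning on all but one factor and application of Step~2 (equivalently, tensorisation of sub-Gaussian concentration) shows that the weakest factor dominates the tail bound, giving the claimed dependence on $k=\min_i k_i$.

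The main obstacle will be keeping constants precise in Step~1: namely, reconciling the HS metric on the ambient $\mbu(d)$ (which is what naturally appears in $f$) with the intrinsic metric on the abstract group $G$ used in the Meckes inequality. For representations whose dimension scales exponentially with the rank of $G$ -- e.g.\ the spinor representation of $\mfso(2n)$ relevant to matchgate circuits -- this reconciliation requires care and is the source of the extra factor of $(k-2)$ in the exponent compared with a naive application of the concentration inequality.
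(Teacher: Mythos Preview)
Your strategy matches the paper's: a Lipschitz estimate on $f$ combined with Meckes' concentration inequality (Thm.~5.17 of \emph{The Random Matrix Theory of the Classical Compact Groups}). Two technical differences are worth flagging.

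For the Lipschitz bound, the paper does \emph{not} telescope on $\mch$ with Cauchy--Schwarz. That route costs a factor of $\sqrt{d}$ per term, since $\|P\|_{\mathrm{HS}}=\sqrt{d}$ for a Pauli string $P$, and would yield $L=4\sqrt{d}$ rather than a dimension-free constant. Instead the paper passes to the doubled space via the swap trick, writing $f(U)=\tr\bigl[\mbs\,(A\otimes C)\,U^{\otimes 2}(B\otimes D)U^{\dagger\otimes 2}\bigr]$, and applies matrix H\"older $|\tr XY|\le\|X\|_1\|Y\|_\infty$, so that the unitary factor $(A\otimes C)\mbs$ contributes only its operator norm $1$. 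The telescoping is then carried out on the Schatten-$1$ side, and the paper arrives at $L=4$, whence $384=24\cdot 4^2$.

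For the concentration step, the paper does not split the exponent into a Meckes factor times a metric-conversion factor as you propose in Steps~2--3. It simply cites Meckes' Thm.~5.17 as already delivering the exponent $-(k-2)^2\epsilon^2/(24L^2)$ for $G$ a product of classical groups, with Lipschitz continuity measured in the ambient Hilbert--Schmidt metric, and substitutes $L=4$. No separate tensorisation argument or discussion of the embedding metric appears; the ``main obstacle'' you identify is not part of the paper's argument.
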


This result follows from a standard concentration of measure argument, and strengthens the justification of considering only average OTOCs, as we see that for even modest sized systems a given OTOC extremely rarely varies significantly from the average case. Interestingly, we find empirically in Fig.~\ref{fig:numerics} that average OTOCs calculated from the distributions induced by sampling Hamiltonians with uniformly random coefficients also quickly approximate the Haar average values. 

For the final result of this section, we return to the dynamics of a single Pauli string. The intuitive picture painted by Fig.~\ref{fig:1}(d)  is that of an operator spreading out to a uniform equilibrium covering the entire component. While this does not necessarily need to be true for a given realisation of the dynamics (i.e.\ a specific choice of $U\in G$), we do have:

\begin{restatable}{prop}{spread} \label{prop:spread}
The average behaviour of an initial Pauli string, when evolved via a random unitary $U\sim\mu_G$, is to spread uniformly over its connected component, i.e.\ for Paulis $V$ and $W$
\begin{equation}
\expect_{U\sim\mu_G}  \left\lvert \left\langle W,UVU^\dg\right\rangle_{\mathrm{HS}} \right\rvert^2= \begin{cases}
    \frac{d^2}{|C(V)|} &\mathrm{\ if\ } C(V)=C(W)\\
    0 &\mathrm{\ otherwise }\;,
\end{cases}
\end{equation}
where $\langle A,B\rangle_{\rm HS} =\tr[A^\dagger B]$ denotes the Hilbert-Schmidt inner product.
\end{restatable}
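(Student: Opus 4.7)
The plan is to recognise the expectation as a second-order twirl of the form of Eq.~\eqref{eq:2m} and then evaluate it using the explicit basis of quadratic symmetries given in Eq.~\eqref{eq:2sym}. Since $V$ and $W$ are Hermitian Pauli strings, $\langle W,UVU^\dg\rangle_{\mathrm{H.S.}}=\tr[WUVU^\dg]$ is real, so its squared modulus equals $\tr[(W\otimes W)(U\otimes U)(V\otimes V)(U^\dg\otimes U^\dg)]$. Taking the expectation and using invariance of the Haar measure under $U\mapsto U^\dg$ yields
\begin{equation}
\expect_{U\sim\mu_G} \left|\langle W,UVU^\dg\rangle_{\mathrm{H.S.}}\right|^2 = \tr\!\left[(W\otimes W)\,\mct^{(2)}_G(V\otimes V)\right].
\end{equation}

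Next I would expand $\mct^{(2)}_G(V\otimes V)$ in the basis $\{Q_{\kappa,j}\}$ of Eq.~\eqref{eq:2sym}. The $Q_{\kappa,j}$ are pairwise orthogonal with $\|Q_{\kappa,j}\|^2_{\mathrm{H.S.}}=|C_\kappa|\,d^2$, which follows from the orthogonality of distinct Pauli tensor products together with $L_j^2=\id$. The main calculation is the inner product
\begin{equation}
\langle Q_{\kappa,j},V\otimes V\rangle_{\mathrm{H.S.}} = \sum_{S\in C_\kappa}\tr[SV]\,\tr[SL_jV].
\end{equation}
The first trace forces $S=V$ (and hence $V\in C_\kappa$), after which the second collapses, via cyclicity of the trace and $V^2=\id$, to $\tr[L_j]$, which vanishes unless $L_j=\id$. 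Because $\id$ is always among the Pauli linear symmetries, a single $j$ contributes and one obtains $\langle Q_{\kappa,j},V\otimes V\rangle_{\mathrm{H.S.}}=d^2$ precisely when $L_j=\id$ and $V\in C_\kappa$, and $0$ otherwise; the analogous statement holds with $W$ in place of $V$.

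Substituting into the projection formula leaves a single surviving term in $\tr[(W\otimes W)\mct^{(2)}_G(V\otimes V)]$, indexed by the unique $(\kappa,j)$ with $L_j=\id$ and $V,W\in C_\kappa$, producing $d^4/(|C_\kappa|\,d^2) = d^2/|C(V)|$ when $C(V)=C(W)$ and $0$ otherwise, as desired. The argument is essentially mechanical tracking of Pauli traces, so the main obstacle is really just the bookkeeping; the one conceptual input is the realisation that, among all linear symmetries, only the identity survives the ``$S=V$'' constraint, which is precisely what concentrates the second moment onto the single connected component containing $V$ and underlies the uniform-over-component spreading behaviour asserted by the proposition.
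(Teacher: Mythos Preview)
Your proof is correct and follows essentially the same route as the paper's: rewrite the squared Hilbert--Schmidt overlap as $\tr[W^{\otimes 2}U^{\otimes 2}V^{\otimes 2}U^{\dagger\otimes 2}]$, project onto the $Q_{\kappa,j}$ basis, and observe that the constraint $S=V$ together with $\tr[L_j]=d\,\delta_{L_j,\id}$ collapses the sum to the single $(\id,C_V)$ term. The only cosmetic differences are that you invoke reality of $\tr[WUVU^\dagger]$ and Haar invariance under $U\mapsto U^\dagger$ explicitly, whereas the paper reaches the same doubled-trace expression via a transpose manipulation.
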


By a similar concentration of measure argument that led to Prop.~\ref{prop:typical}, we have that approximately uniform spreading is what one will typically see.

We now explore the connection between the connected components of the commutator graph and the Krylov space (Sec.~\ref{sec:krylov})  of an initial Pauli string $p$. Firstly, it is clear from the definition of the graph (not to mention Prop.~\ref{prop:spread}) that the time evolved string $p_t$ will remain within its initial component (say, $C$), and so the dimension of the Krylov space is upper bounded by the number $|C|$ of nodes in the component (as the Pauli strings are orthogonal with respect to the Hilbert-Schmidt inner product used to define the Krylov space). This immediately establishes a link between $|C|$ and the difficulty of classically simulating the evolution of $p_t$;  efficient simulations are possible for $|C|=\mc{O}(\mathrm{poly\ } n) $, where the evolution is constrained to a tractably small subspace. For example, given a state $\ket{\psi}$ and knowledge of the expectation value $\braket{\psi|p|\psi}$ for $q\in C$, the time evolved value $\braket{\psi|p_t|\psi}$  can be classically calculated in time polynomial in $|C|$. One can think of this as essentially the ``$\mfg$-sim'' algorithm of Ref.~\cite{goh2023lie} (and indeed the earlier works~\cite{somma2005quantum,somma2006efficient}), wherein the expectation values of Heisenberg evolving operators within a DLA are seen to be simulable in time $\mc{O}(\mathrm{poly\ } ({\rm dim\ }\mfg)) $; the generalisation to other components of the commutator graph is fairly immediate.

\subsection{Graph Complexity} \label{sec:gc}
The commutator graph naturally suggests a new notion of complexity, that quantifies the spread of an initial Pauli string $p$ across its component $C_p$. Specifically, given a $p_t=e^{iHt}pe^{-iHt}=\sum_{q\in C_p} c_q(t) q$, we define its \textit{graph complexity} 
\begin{equation}\label{eq:gc}
\mathsf{G}(p_t)= \sum_{q\in C_p} \ell(p,q) |c_q(t)|^2\;, 
\end{equation}
where we have weighted the expansion coefficient of the Pauli string $q$ by the shortest path length in the graph to the original vertex $p$. 
Graph complexity is conceptually similar to Krylov complexity (Eq.~\eqref{eq:kc}); whereas Krylov complexity admits an interpretation as the expected position of $p_t$ on the 1-dimensional lattice spanned by the Krylov basis (and therefore the expected distance from the initial state, being localised on the first site) graph complexity is the expected distance between the initial and time-evolved operators as measured by path lengths on the commutator graph.
This conceptual similarity can in fact be a numerical equality, as for example in the case where the initial operator is at the end of a component consisting only of a linear chain (e.g.\ $ZZY$ in Fig.~\ref{fig:1}(a)), where the graph complexity and Krylov complexity are identical.

Interestingly, graph complexity fits within the existing landscape of operator complexity measures. Indeed, recalling the definition of $q$-complexity from Sec.~\ref{sec:krylov} we arrive at the following result:
\begin{restatable}{lemma}{qcomp}
Graph complexity is a $q$-complexity on the  component to which the initial string belongs.
\end{restatable}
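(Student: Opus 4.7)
The plan is to produce an explicit superoperator $\mcq$ on the vectorised operator space such that $\mathsf{G}(p_t)=\sdbraket{p_t}{\mcq|p_t}$ and then verify the two defining properties of a $q$-complexity listed in Section~\ref{sec:krylov}. The natural candidate is
\begin{equation}
\mcq \;=\; \sum_{q\in C_p} \ell(p,q)\,\sdketbra{q}{q},
\end{equation}
where the Pauli strings are normalised so that $\sdbraket{q}{q'}=\delta_{q,q'}$ under the Hilbert--Schmidt inner product. Since $\sdbraket{p_t}{q}=c_q(t)^\ast$ for $q\in C_p$ and the time evolution preserves the component, we immediately get $\sdbraket{p_t}{\mcq|p_t}=\sum_{q\in C_p}\ell(p,q)|c_q(t)|^2=\mathsf{G}(p_t)$. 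Positive semidefiniteness together with the spectral decomposition in property~1 is then automatic: the eigenvalues $\ell(p,q)\ge 0$ are non-negative integers and the $\sdket{q}$ form an orthonormal eigenbasis, so property~1 holds.

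The bulk of the work is to establish property~2 with a suitable locality constant, and I would take $M=1$. For the initial-condition condition~\eqref{eq:qm2}, note that the initial operator is $O=p$, so $\sdbraket{q}{O}=\delta_{q,p}$, which only fails to vanish when $\ell(p,q)=0$; hence $M=1$ suffices trivially. For the Liouvillian condition~\eqref{eq:qm1}, the key observation is that $\msl=[H,-]=\sum_\ell c_\ell[H_\ell,-]$ acts, on any Pauli string $q'$, as a linear combination of Pauli strings adjacent to $q'$ in the commutator graph (by the very definition of its edges). Therefore $\sdbraket{q}{\msl|q'}\neq 0$ only when $q$ and $q'$ are graph-adjacent, in which case the triangle inequality for the shortest-path metric gives $|\ell(p,q)-\ell(p,q')|\le 1$. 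Contrapositively, $|\ell(p,q)-\ell(p,q')|>1$ forces $\sdbraket{q}{\msl|q'}=0$, verifying condition~\eqref{eq:qm1} with $M=1$.

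The main (mild) subtlety is a bookkeeping one: the superoperator $\mcq$ above is defined only on $\mathrm{span}\{\sdket{q}: q\in C_p\}$, which is why the statement restricts to ``the component to which the initial string belongs''. Because $p_t$ never leaves $C_p$ under Heisenberg evolution, extending $\mcq$ by zero on the orthogonal complement does not change $\sdbraket{p_t}{\mcq|p_t}$ and preserves both defining properties; this justifies interpreting $\mathsf{G}$ as a genuine $q$-complexity in the sense of Ref.~\cite{parker2019universal}. Combining this with the general bound~\eqref{eq:qm3} gives the corollary $\mathsf{G}(p_t)\le 2\mathsf{K}(p_t)$, which is the reason for phrasing the result in this way.
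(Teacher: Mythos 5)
Your proposal is correct and follows essentially the same route as the paper's proof: the same choice $\mcq=\sum_{q\in C_p}\ell(p,q)\,\sdketbra{q}{q}$, the same verification that the eigenvalues are non-negative, and the same argument that $M=1$ works because $\msl$ maps a Pauli string to a combination of its graph-neighbours (so shortest-path lengths to $p$ change by at most one) while the initial operator overlaps only the $\ell=0$ vertex. The extension-by-zero remark about restricting to the component $C_p$ is a harmless bookkeeping addition that the paper handles implicitly.
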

The proof of this Lemma can be found in Appendix~\ref{sec:proofs}, where we find that we can take the  constant $M$ (that appears in Eqs.~\eqref{eq:qm1} and Eqs.~\eqref{eq:qm2} of the definition  $q$-complexity) to be one. By the general considerations of Ref.~\cite{parker2019universal} this immediately implies that we have $\mathsf{G}(p_t)\leq 2M\mathsf{K}(p_t)=2\mathsf{K}(p_t)$ (Eq.~\eqref{eq:qm3}). In fact, in this instance we can directly prove a tighter (and optimal) bound: 
\begin{restatable}{lemma}{gbound}
Krylov complexity constitutes a tight upper bound to the graph complexity. 
\end{restatable}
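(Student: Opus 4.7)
The plan is to prove the sharper bound $\mathsf{G}(p_t)\leq\mathsf{K}(p_t)$ directly (rather than relying on the generic factor-of-$2M$ estimate for $q$-complexities) and then read off tightness from the linear-chain example already flagged in the paragraph preceding the lemma. The whole argument hinges on a single structural observation about how the Liouvillian $\msl=[H,\cdot]$ propagates Pauli support along the commutator graph; everything after that is summation by parts.

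First I would prove by induction on $k$ the key lemma that $\msl^k\sdket{p}$ lies in $\mc{V}_k:=\mathrm{span}\{\sdket{q}:\ell(p,q)\leq k\}$. Since $H=\sum_{p'\in\mcg}c_{p'}p'$ and, by the very definition of the commutator graph, $[p',q]$ is either zero or proportional to a Pauli string adjacent to $q$, the operator $\msl$ maps $\mc{V}_k$ into $\mc{V}_{k+1}$, and the base case $k=0$ is immediate. The Lanczos recursion expresses $\sdket{O_n}$ as a linear combination of $\{\msl^k\sdket{p}:k\leq n\}$, so $\sdket{O_n}\in\mc{V}_n$. Setting $K_N=\mathrm{span}\{\sdket{O_n}:n\leq N\}$ and letting $Q_N$ be the orthogonal projector onto $\mc{V}_N$, the inclusion $K_N\subseteq\mc{V}_N$ says that $Q_N$ acts as the identity on $K_N$; combined with the self-adjointness of $Q_N$, this forces $Q_N\sdket{p_t}$ to decompose orthogonally as $\sum_{n\leq N}i^n\varphi_n(t)\sdket{O_n}$ plus a remainder lying in $K_N^\perp\cap\mc{V}_N$. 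Taking squared norms then yields the cumulative-weight inequality $A_N\geq B_N$, where $A_N=\sum_{\ell(p,q)\leq N}|c_q(t)|^2$ and $B_N=\sum_{n\leq N}|\varphi_n(t)|^2$.

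From here, a standard summation-by-parts manipulation gives
\begin{equation*}
\mathsf{G}(p_t)=\sum_{N\geq 0}(1-A_N),\qquad \mathsf{K}(p_t)=\sum_{N\geq 0}(1-B_N),
\end{equation*}
so $\mathsf{K}(p_t)-\mathsf{G}(p_t)=\sum_N(A_N-B_N)\geq 0$. For tightness I point to a component that is a simple path with $p$ at one end: the Lanczos recursion then terminates at the unique neighbour at each step and produces $\sdket{O_n}\propto\sdket{q_n}$ with $\ell(p,q_n)=n$, so $A_N=B_N$ identically and the bound is saturated, recovering the observation already made just before the lemma. The main obstacle is the orthogonal-decomposition step: it is not a priori clear that the $K_N^\perp$-component of $\sdket{p_t}$ survives the projection $Q_N$ in a way that remains orthogonal to $K_N$, and this is precisely where the key lemma $K_N\subseteq\mc{V}_N$ does the real work via the self-adjointness of $Q_N$.
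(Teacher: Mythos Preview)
Your proof is correct, and it rests on the same structural lemma as the paper's argument: namely, that $\sdket{O_n}\in\mc{V}_n$ (equivalently, $\tr[O_n q]=0$ whenever $\ell(p,q)>n$). Where you diverge is in how you extract the inequality $\mathsf{G}\leq\mathsf{K}$ from this fact. The paper expands each $c_q(t)$ in the Krylov basis and then replaces $\ell(p,q)$ by $\min\{n,n'\}$ inside a double sum over $n,n'$, before collapsing via the orthonormality relation $\sum_q\tr[O_n q]\tr[O_{n'}q]^*=\delta_{n,n'}$. Your route instead converts the key lemma into the nesting $K_N\subseteq\mc{V}_N$ of projector ranges, reads off the cumulative-weight inequality $A_N\geq B_N$ from $Q_N-P_N\geq 0$, and finishes with Abel summation. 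This is arguably cleaner: the cumulative-weight step works with manifestly nonnegative quantities throughout, whereas the paper's term-by-term replacement $\ell(p,q)\mapsto\min\{n,n'\}$ sits inside a sum whose individual summands $d_n d_{n'}^*\tr[O_n q]\tr[O_{n'}q]^*$ are not sign-definite, so the inequality there requires the subsequent orthogonality collapse to be justified as a single step rather than two. Your tightness argument via the linear-chain component is exactly the one the paper gestures at.
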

As previously mentioned, the bound $\mathsf{G}(p_t)\leq \mathsf{K}(p_t)$ is saturated (for example) by an operator which sits at the end of a linear chain component (see Fig.~\ref{fig:1}(a)). 
From Prop.~\ref{prop:spread} we know that the average behaviour of a Pauli string $p$ undergoing Heisenberg evolution is to spread uniformly across its connected component $C_p$, so that the long time average of the graph complexity is 
\begin{equation}\label{eq:gave}
\mbe_{U\sim\mu_G}\mathsf{G}(U^\dagger pU) = \frac{1}{|C_p|} \sum_{q\in C_p} \ell(p,q)\;.
\end{equation}
The sum in Eq.~\eqref{eq:gave} is analytically tractable for some graphs; e.g.\ in the case that $C_p$ forms a linear chain, of which $p$ corresponds to the $j$\textsuperscript{th} site, a simple calculation gives
\begin{equation} \label{eq:gave_path}
\expect_{U\sim\mu_G}\mathsf{G}(U^\dagger pU) = \frac{j(j+1) + (|C_p|-j)(|C_p|-j+1)}{2|C_p|}\;.
\end{equation}
As a more interesting example, we will later calculate the average graph complexity induced by the ``matchgate'' dynamics generated by $Z$ and $XX$. We will find that the graph components have sizes ${2n\choose \kappa}$ (with $0\leq \kappa\leq 2n$), but that the graph complexity is upper bounded as $O(n\kappa)$. For $\kappa\sim n$ we have ${2n\choose \kappa}\gg n\kappa$; the relative smallness of the graph complexity is a reflection of the (to be seen) existence of short (relative to the size of the component) paths through the component between any pair of vertices.

We can also see that, for short times $t\ll \|H\|_1^{-1}$ (taking $\hbar=1$ and recalling the Schatten 1-norm $\|H\|_1$ of the Hamiltonian $H$ is the sum of its singular values)  graph complexity displays the following universal scaling:

\begin{restatable}{lemma}{gcshortt}\label{lem:gc_scaling}
When $t\ll \|H\|_1^{-1}$ (taking $\hbar=1$) we have $\mathsf{G}(p_t) =\Theta(t^2)$ (excepting the trivial case of $p$ being a symmetry of the dynamics, in which case $\mathsf{G}(p_t)=0$ for all $t$).
\end{restatable}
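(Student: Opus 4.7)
The plan is to Taylor expand $p_t = e^{iHt}p e^{-iHt}$ in powers of $t$ and to track which graph distances can contribute at each order. First I would write
\[
p_t \;=\; \sum_{k=0}^{\infty}\frac{(it)^k}{k!}\,\operatorname{ad}_H^k(p),
\]
and prove by induction on $k$ that $\operatorname{ad}_H^k(p)$ lies in the span of Pauli strings at graph distance at most $k$ from $p$. The inductive step is immediate: writing $H = \sum_\ell c_\ell H_\ell$ with each $H_\ell\in\mcg$, each $\operatorname{ad}_{H_\ell}$ sends any Pauli $r$ either to zero or to a scalar multiple of an adjacent Pauli --- this being the very definition of an edge of the commutator graph --- so one application of $\operatorname{ad}_H$ extends the Pauli support by at most one step in graph distance.

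By orthonormality of the Paulis under $\sdbraket{A}{B}=d^{-1}\tr[A^\dagger B]$, the coefficient $c_q(t)=d^{-1}\tr[q\,p_t]$ therefore has a Taylor expansion starting at order $t^{\ell(p,q)}$, whence $|c_q(t)|^2=O(t^{2\ell(p,q)})$. Substituting into Eq.~\eqref{eq:gc} and isolating the nearest-neighbour contributions gives
\[
\mathsf{G}(p_t) \;=\; t^2\!\!\!\sum_{q:\,\ell(p,q)=1}\!\!\!|c_q^{(1)}|^{2} \;+\; \sum_{q:\,\ell(p,q)\ge 2}\!\!\ell(p,q)\,|c_q(t)|^{2},
\]
where $c_q^{(1)}$ is the coefficient of $t$ in $c_q(t)$. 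The second sum is $O(t^4)$, while the first is $\Theta(t^2)$ provided its coefficient is non-zero, immediately yielding the upper bound $\mathsf{G}(p_t)=O(t^2)$.

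For the matching lower bound, I would identify the nearest-neighbour coefficients explicitly: by construction $i[H,p]=\sum_{q:\,\ell(p,q)=1} c_q^{(1)}\, q$, so another application of orthonormality gives
\[
\sum_{q:\,\ell(p,q)=1}|c_q^{(1)}|^{2} \;=\; d^{-1}\,\|[H,p]\|_{\mathrm{HS}}^{2},
\]
which is strictly positive precisely when $[H,p]\neq 0$ --- i.e., when $p$ is not a symmetry of the dynamics, the only case excluded by the hypothesis. Thus the leading $t^2$ coefficient is strictly positive and we obtain $\mathsf{G}(p_t)=\Omega(t^2)$, completing the scaling.

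The only delicate point is controlling the higher-order remainders uniformly in $q$ so that the $t^2$ term genuinely dominates; this is where the assumption $t\ll\|H\|_1^{-1}$ enters. A crude operator-norm estimate $\|\operatorname{ad}_H^k(p)\|_\infty \le (2\|H\|_\infty)^k$, combined with $\|H\|_\infty\le \|H\|_1$, bounds the tail of the expansion of each $c_q(t)$ by a geometric series of size $O(t^3)$ uniformly in $q$, rendering it subdominant to the leading $t^2$ contribution. I do not anticipate a serious obstacle: the structural content sits entirely in the graph-distance bookkeeping of the first step, with everything else reducing to routine norm estimates.
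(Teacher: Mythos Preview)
Your proposal is correct and follows essentially the same approach as the paper: Taylor-expand $p_t$ via nested commutators, use that each application of $\operatorname{ad}_H$ advances at most one step in graph distance, and read off the $t^2$ leading term from the nearest-neighbour contribution. If anything, you are slightly more explicit than the paper in justifying the matching lower bound via $\sum_{q:\ell(p,q)=1}|c_q^{(1)}|^2 = d^{-1}\|[H,p]\|_{\mathrm{HS}}^2>0$ when $p$ is not a symmetry, whereas the paper leaves this implicit and focuses on the upper bound $\mathsf{G}(p_t)\lesssim t^2\,|\mcn^{(1)}(p)|\,\|\operatorname{ad}_H(p)\|_1^2$.
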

We can also obtain both the constant hidden by the $\Theta$ notation and a simple bound on it; all together we have (seeing the Appendix for the details)
\begin{equation}
 \mathsf{G}(p_t) \lesssim t^2\big\lvert\mcn^{(1)}(p)\big\rvert \|{\rm ad}_H(p)\|_1^2 + \mco(t^3)
\end{equation}
where $\mcn^{(1)}(p)$ denotes we set of nearest neighbours of $p$ in the commutator graph. This universality (up to the dependence on the number of nearest neighbours) is in a sense disappointing; ideally one would hope for different classes of growth rates with time as a function of some interesting property of the dynamics (e.g.\ integrability). Indeed, this is a key property of Krylov complexity~\cite{parker2019universal}. 

\section{Examples}\label{sec:examples}
We now apply our results to an illustrative selection of models, tabulated in Table~\ref{tab:results}. Mostly, we will be interested in determining the commutator graph of a given model and inferring from it what we can of the dynamics;  in a few instances we will turn this around, asking what our knowledge of the dynamics can tell us about the commutator graph.
\medskip

\noindent
\textbf{Universal dynamics.}
A simple first example of our framework is provided  when the DLA is full rank, i.e.\ contains all Pauli strings (we may treat DLAs missing only the identity as essentially full rank, lacking only the ability to fix a meaningless global phase). Such a universal DLA may be generated by (for example) $X,\ Y$ and $ZZ$ gates, which yield $\mfg\cong\mfsu(2^n)$ (see Table~\ref{tab:results}). Now, every simple ideal of $\mfg$ that possesses a basis of Pauli strings appears as a connected component of the commutator graph; if $\mfg$ is simple and has such a basis (as is and does $\mfsu(2^n)$) then it appears as a single component. So in this case the commutator graph contains a connected component of size $4^n-1$; the only remaining Pauli string, the identity, is in a component of its own (as always). From Prop.~\ref{prop:frame} we obtain  that $F_{\mbsu(d)}^{(2)}=2$, agreeing (unsurprisingly) with sampling Haar randomly from the full unitary group. We can also calculate the average OTOC  between two (non-identity) Pauli strings $V$ and $W$; from Corollary~\ref{crllr:counting} we have 
\begin{align}
\expect_{U\sim\mbsu(d)} F(W,U^\dagger VU) &=1 - \frac{2|\{T\in C(V) : \{W,T\}=0 \}|}{|C(V)|}      \\
&=1 - \frac{2(2^{2n-1})}{4^n-1}  \\
&= -\frac{1}{4^n-1} 
\end{align}
using that a (non-identity) Pauli string anticommutes with half of all Pauli strings, all of which live in the large component of the graph. The above integral can of course also be easily evaluated using the usual Weingarten calculus on the unitary group~\cite{mele2023introduction}, but the counting anticommuting Pauli strings perspective employed here offers some intuition for the perhaps mildly surprising deviance of the answer from zero: the symmetry of (anti-)commuting with exactly half of the Pauli strings is  broken (slightly) by the identity residing in a component of its own.
\medskip

\begin{figure}
  \includegraphics[width=0.47\textwidth]{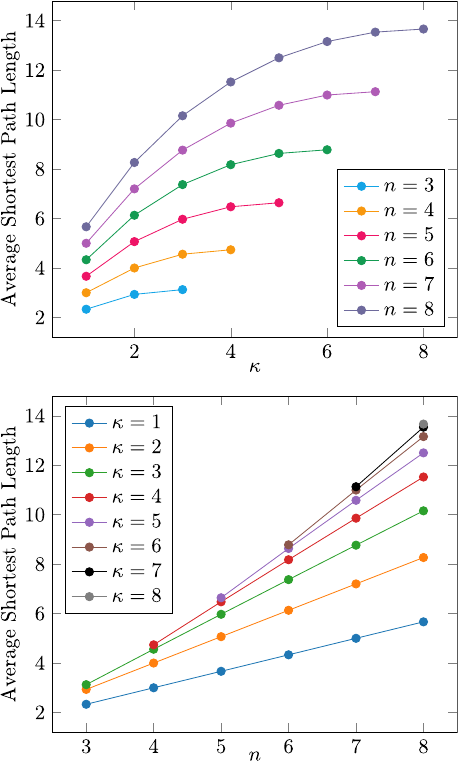} 
  \caption{The average path length between vertices in the commutator graph for $Z,XX$ dynamics, as a function of component index $\kappa$ and number of sites $n$. The $\kappa$th component, of dimension $2n \choose \kappa$, contains the Pauli strings which are a product of $\kappa$ distinct Majorana fermions~\cite{diaz2023showcasing}. For $|\kappa\mathrm{\ mod\ }2n|\leq 2$ the linear growth with the system size $n$ agrees with our analytic calculations; for $|\kappa\mathrm{\ mod\ }2n|> 2$ the growth also appears to be linear.  The isomorphic components with $\kappa>n$ are not shown.}
\label{fig:pl}
\end{figure}

\noindent
\textbf{Free fermions.} A more interesting example is given by the so-called \textit{matchgate circuits}, which correspond (under the Jordan-Wigner transformation) to \textit{fermionic Gaussian unitaries}, i.e.\ maps between valid representations of \textit{Majorana fermions}~\cite{wan2022matchgate}. For our purposes such a representation consists of a choice of $2n$ operators $\{c_i\}_{1\leq i\leq2n}$ satisfying the canonical anticommutation relations $\{c_i,c_j\}=2 \delta_{i,j}$; for example, one can take 
\begin{align}\label{eq:majos}
c_1&=XIII\cdots I\;,&c_{2}&=YIII\cdots I\nonumber\\
c_3&=ZXII\cdots I\;,&c_{4}&=ZYII\cdots I\\
&\:\:\vdots&&\:\:\vdots\nonumber\\
c_{2n-1}&=ZZ\cdots ZX\;,&c_{2n}&=ZZ\cdots ZY\;.  \nonumber
\end{align}
The DLA $\mfg\cong\mfso(2n)$ can be taken to be generated by $Z$ and $XX$; one can show that this is equivalent to taking the linear span of the products of pairwise distinct Majorana modes~\cite{diaz2023showcasing}.
In this case the commutator graph decomposes into $2n+1$ components~\cite{diaz2023showcasing}, with dimensions ${2n\choose \kappa},\ \kappa=0,1,2,\ldots, 2n$; the corresponding decomposition $\mcl = \bigoplus_\kappa \mcl_\kappa$  can be viewed as the decomposition of $\mcl$ into spaces corresponding to operators which are a product of $\kappa$ distinct Majoranas~\cite{diaz2023showcasing}. The commutator graph for $n=3$ is displayed in Fig.~\ref{fig:1}(a).  
In particular, note that we have two isolated vertices (i.e.\ components of dimension one).
Representation-theoretically, we can trace this to the fact that the representation of the matchgate circuits on the Hilbert space $\cH=\mbc^d$ with respect to $\mfso(2n)$ (where $d=2^n$) is reducible, and decomposes
into two \textit{inequivalent} and mutually dual irreps
\begin{equation}
\mch=\mch^+\oplus\mch^-
\end{equation}
(spanned by the computational basis vectors with even and odd Hamming weight respectively). From the discussion of Sec.~\ref{sec:cg} we then expect the space of operators $\mcl=\mch\otimes\mch^\ast$ to contain precisely two trivial irreps,
\begin{equation}
    \mcl^G\cong \left(\mch^+ \otimes (\mch^+)^*\right)^G\oplus\left(\mch^- \otimes (\mch^-)^*\right)^G\;,
\end{equation}
corresponding to the (linear) symmetries given by 
\begin{equation}
    \widetilde{L}_1 = \sum_{\substack{x\\{\rm even}}}\ketbra{x},\quad \widetilde{L}_2 = \sum_{\substack{x\\{\rm odd}}}\ketbra{x}\;,
\end{equation}
where we work in the computational basis and the even  (odd) summations are over bitstrings of even (odd) Hamming weight.
These symmetries are not Pauli strings, prohibiting their use in of many of our results; happily we can instead  use 
\begin{equation}
    L_1:=\widetilde{L}_1+\widetilde{L}_2=I^{\otimes n}
\end{equation}
and
\begin{equation}
    L_2:=\widetilde{L}_1-\widetilde{L}_2=Z^{\otimes n}.
\end{equation}
The existence of the linear symmetry $Z^{\otimes n}$ (the ``fermionic parity operator''~\cite{diaz2023showcasing}) in this model allows for ``coherence'' effects between different components, as described in Prop.~\ref{prop:coherence}; in particular,  they are possible between the isomorphic components corresponding to $\kappa$ and $2n-\kappa$ (for $\kappa\neq n$). This effect was noted and explored in the context of the gradients of variational models defined over matchgate circuits in Ref.~\cite{diaz2023showcasing}. The decomposition of $\mcl$ into $\mfg$-irreps is discussed in Section~\ref{sec:rep}. We find the only discrepancy between the commutator graph and representation-theoretic pictures to be that the largest graph component, $\mcl_n$, decomposes into two inequivalent irreps, corresponding to the $\pm 1$ eigenspaces of $Z^{\otimes n}$. These irreps do not possess Pauli string bases, but are rather spanned by elements of the form $P\pm Z^{\otimes n}P$, and so cannot be distinguished by the commutator graph.

The commutator graph structure can be described very precisely in the Majorana picture. A vertex of the component $C_\kappa$ is described by a choice of $\kappa$ integers $1\leq i_1<i_2<\ldots<i_\kappa\leq 2n$; the corresponding Pauli string is (up to a fourth-root of unity)
\begin{equation}
    P = c_{i_1}c_{i_2}\cdots c_{i_\kappa}\;.
\end{equation}
We show in Appendix~\ref{sec:angus_graph} that two such vertices are adjacent in the graph if and only if
\begin{equation}
    \sum_{\alpha=1}^\kappa |i_\alpha-j_\alpha| =1\;.
\end{equation}
For $\kappa=1$, then, we obtain a linear chain (the vertices of which correspond to the Majorana modes themselves); for $\kappa=2$ (the component corresponding to the DLA) the vertices sit at the integer-valued coordinates of an isosceles right triangle of short-side length $2n-1$ (see Fig.~\ref{fig:1}(a)).  By the above-discussed isomorphism between the components $C_\kappa$ and $C_{2n-\kappa}$, we find analogous behaviour for the components $C_{2n-1}$ and $C_{2n-2}$, respectively.
For $\kappa=1$ the average graph complexity is therefore given by Eq.~\eqref{eq:gave_path} and is of size $\mco(n)$; for $\kappa=2$ we derive in Appendix~\ref{sec:angus_graph} that
\begin{align*}
\expect_{U\sim\mu_G}\mathsf{G}(U^\dagger c_{i_1}c_{i_2}  U) 
&=\frac{1}{6n(2n-1)}\big(i_1(i_1-1)(6n-i_1-1)\\
&+(2n-i_1)(2n-i_1-1)(2n-i_1+1)\\
&+i_2(i_2-1)(i_2-2)\\
&+(2n-i_2)(2n-i_2+1)(4n+i_2-2)\big)\;.
\end{align*}
So again -- and despite this component having size $\mco(n^2)$ -- the asymptotic behaviour is linear in $n$.

More generally, in Appendix~\ref{sec:angus_graph} we prove that the average graph complexity of a Pauli string $P$ in the component $C_\kappa$ is upper bounded as
\begin{equation}
\expect_{U\sim\mu_G}\mathsf{G}(U^\dagger P  U) = O(\kappa n)\;;
\end{equation}
in Fig.~\ref{fig:pl} we see numerical evidence of scaling that is indeed linear in $n$.

Finally, by Prop.~\ref{prop:frame} we have that the frame potential for the ensemble given by uniformly sampling matchgate unitaries is $F_{\mathrm{mgate}}^{(2)}=2(2n+1)=4n+2$, growing further from a 2-design for larger systems. From the representation-theoretic point of view we independently calculate  $F_{\mathrm{mgate}}^{(2)}=1^2(2) + 2^2n=4n+2$ (Eq.~\eqref{eq:fp}), confirming the result. Unlike in the previous case of universal dynamics, our OTOC results now predict nontrivial differences in behaviour for operators from different connected components. This is demonstrated numerically for $n=6$ in Fig.~\ref{fig:numerics}(a), where an operator is chosen from each component   $\mcl_\kappa$, and its average OTOC with respect to the (pseudorandomly chosen) string $ZXYZIY$ is calculated. In each case we are able to use Corollary~\ref{crllr:counting} to predict the average long-time value, which differ considerably for different choices of the initial string.
\medskip
 
\noindent
\textbf{Ising model in an arbitrary field.} \\
For our next example, we consider the generating set
\begin{equation}
    \{ X_i\}_{i=1}^n \cup \{ X_iX_{i+1},X_iY_{i+1},X_iZ_{i+1},I_iY_{i+1},I_iZ_{i+1}\}_{i=1}^{n-1},
\end{equation}
corresponding to the Lie algebra 
\begin{equation}
\mathfrak{b}_4 \cong\mfsu(2^{n-1}) \oplus \mfsu(2^{n-1}) \oplus \mfu(1)
\end{equation}
in the classification of Ref.~\cite{wiersema2024classification}. By inspection we can identify the $\mfu(1)$ summand as corresponding to the span of the central element $X_1$; 
this is a slightly different situation to the previous example, where we had a linear symmetry $Z^{\otimes n}$ that was not itself in the DLA.

In Appendix~\ref{sec:rep} we find a decomposition 
of $\mcl$ into $\mathfrak{b}_4$-irreps consisting of subspaces of dimensions $1,1,4^{n-1}-1,4^{n-1}-1,4^{n-1}$ and $4^{n-1}$.  As in the previous example, the commutator graph does not resolve every irrep of $\mcl$, but rather decomposes into four components, of dimensions $1, 1,2\cdot 4^{n-1}-2$ and $2\cdot 4^{n-1}$. One can readily see that the two isolated vertices correspond to the linear symmetries $I^{\otimes n}$ and $X_1$, and the two large components respectively to the strings that commute with $X_1$ (less the isolated vertices), and the strings that anticommute with $X_1$. Indeed, consider some Pauli string $P$.  It either commutes or anticommutes with $X_1$, a property which it shares with the other vertices in its component\footnote{Even without knowing the content of the proceeding sentence, this would just be Corollary~\ref{prop:symcounting} with one of the components being of size one.}, corresponding as they do to strings that are obtained by commuting elements of the DLA (which will commute with the linear symmetry $X_1$) with $P$. Similarly to the component $C_n$ in the matchgate case, we see that the two large components here admit a decomposition into irreps as the $\pm 2i$ eigenspaces of ${\rm ad}_{iX_1}$, which are spanned by elements of the form (for Paulis $P$) $P\pm i X_1 P$, whence their invisibility to the commutator graph.

\begin{figure}
  \includegraphics[width=0.47\textwidth]{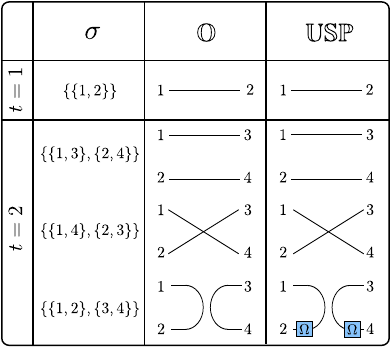} 
  \caption{Spanning sets for the first and second order commutants of the orthogonal and (unitary) symplectic groups; for both groups the elements are indexed respectively by the set of pairs of the sets of two and four elements~\cite{collins2006integration,collins2009some,garcia2024architectures}. Here we highlight the connection between this abstract presentation and the concrete realisation afforded by the tensor-network notation. The operator $\Omega$ that appears in the symplectic case is the canonical symplectic form.  }
\label{fig:ousp}
\end{figure}

\noindent
\textbf{Orthogonal evolution.} \\
Next we consider the ensemble of Haar random orthogonal matrices. The applications of random orthogonal matrices to quantum information have been increasingly investigated over the last few years, including for example their uses in randomised benchmarking~\cite{hashagen2018real}, quantum machine learning~\cite{garcia2023deep}, classical shadows~\cite{west2024real} and as generators of approximate unitary state designs~\cite{schatzki2024random}. All of this analysis is facilitated by the fact that, for all $k\geq 1$, an explicit spanning set for the $k$\textsuperscript{th} order commutant of the orthogonal group is known~\cite{garcia2023deep}; in fact one has that such a set is given by the elements of a certain representation $F_d$ of the \textit{Brauer algebra} $\mfb_k(d)$~\cite{collins2006integration,collins2009some,garcia2024architectures}, where $d=2^n$ is again the total dimension of the Hilbert space\footnote{To avoid possible confusion, we would like to emphasize that the Brauer algebra here is defined as a subalgebra of the space of operators acting on $k$ copies of the {\em total} Hilbert space on which we have the action of the DLA $\mfso(d)$. This is important as the total Hilbert space itself may be viewed as a tensor product, albeit with the DLA not acting on the individual factors.}. One can think of the Brauer algebra as consisting of formal linear combinations of elements of the set of all pairs of a set of $2k$ objects (subsuming therefore the symmetric group, which one could embed within $\mfb_k$ by dropping the linear combinations and considering only the pairings such that, for some ordering of the elements, each pair contains an element from both the first and last $k$ elements~\cite{collins2009some}). 

Explicitly, $\sigma=\{\{\lm_1,{\sigma}(\lm_1)\},\ldots,\{\lm_k,{\sigma}(\lm_k)\}\}\in\mfb_k$ is represented by
\begin{equation}\label{eq:fd}
F_d(\sigma) = \sum_{i_1,\ldots,i_{2k}=0}^{d-1}\ketbra{i_{k+1},\ldots,i_{2k}}{i_1,\ldots,i_{k}}\prod_{\gamma=1}^k\delta_{i_{\lm_\gamma},i_{\sigma(\lm_\gamma)}}.
\end{equation}
From the above equation we can readily obtain a spanning set (of size three) of  the second-order commutant (i.e.\ by setting $k=2$ and going through the possible $\sigma\in\mfb_2$).
This is depicted in Fig.~\ref{fig:ousp}, where we see that the pairs that constitute an element $\sigma\in\mfb_k$ have in the quantum computation context an elegant interpretation as the endpoints of wires in the graphical notation~\cite{mele2023introduction}. 
Multiplication in the algebra then corresponds to the concatenation of diagrams, with the possibility of the formation of ``loops''  being accounted for by the indeterminant $d$ (in our case $d=2^n$, corresponding to the trace of the identity operator); in other words, multiplying two diagrams yields another valid diagram, as well as a non-negative power of $d$.
From either Fig.~\ref{fig:ousp} or Eq.~\eqref{eq:fd} directly we can see that the three quadratic symmetries produced by this characterisation are $\id_{\mch^{\otimes 2}},\ \mbs$ and $\ketbra{\Phi}$, where $\ket{\Phi}=\sum_i\ket{ii}$ is an (unnormalised) maximally mixed state across the two copies of $\mch$.
By instead setting $k=1$ we discover that there is a single linear symmetry, the identity.\footnote{This can also be seen by noticing that the action of $\mbo$ on $\mch$ is irreducible, and appealing to Schur's lemma.}

Amongst other nice consequences, the fact that $\mbo(d)$ possesses one linear and three quadratic symmetries immediately implies by Prop.~\ref{prop:frame} that the commutator graph of an orthogonal Pauli DLA has three connected components, and that each of them furnishes an irreducible representation of $\mbo(d)$. In fact, it is straightforward to explicitly identify what these components must be. One of them, of course, contains only the identity Pauli string, and the other two follow from noticing that each Pauli string is either symmetric or anti-symmetric, and that (anti-)symmetry is preserved by the action of $\mbo(d)$. Denoting these components as $I_{d},\ S_{d}$ and $A_{d}$, we can then simply count to find that they contain 1, $ d(d+1)/2 - 1$ and $ d(d-1)/2$ vertices respectively. 
From Eq.~\eqref{eq:2sym} we therefore obtain a second
spanning set for ${\rm comm}(\mbo(d), 2)$, namely  $\{\id,\sum_{S\in S_d} S\otimes S,\ \sum_{A\in A_d} A\otimes A \}$; one can readily verify the consistency of this result with the standard spanning set of Fig.~\ref{fig:ousp}.

Across Figs.~\ref{fig:ographs} and~\ref{fig:gc} we explore  some differences introduced by considering different generating sets of the shared Lie algebra
\begin{equation}
\braket{\{X_iY_{i+1},Y_{i}X_{i+1},Y_{i}Z_{i+1},Z_{i}Y_{i+1}\}_{i=1}^{n-1}}_{\rm Lie} \cong \mfso(2^n).
\end{equation}
In Fig.~\ref{fig:ographs} we depict the $n=3$ commutator graph (excluding the identity component) for increasingly large choices of generating set; as expected, the vertices of the connected components do not change, but additional edges within components emerge. In Fig.~\ref{fig:gc} we plot the increase in graph complexity as a function of time for initial strings Heisenberg-evolved under Hamiltonians corresponding to the chosen generating set. We find that larger generating sets generically lead to more complicated dynamics and a more rapid initial rise in complexity, but also lower long-time graph complexity due to the shorter paths enabled by the increased connectivity.

\begin{figure}
  \includegraphics[width=0.48\textwidth]{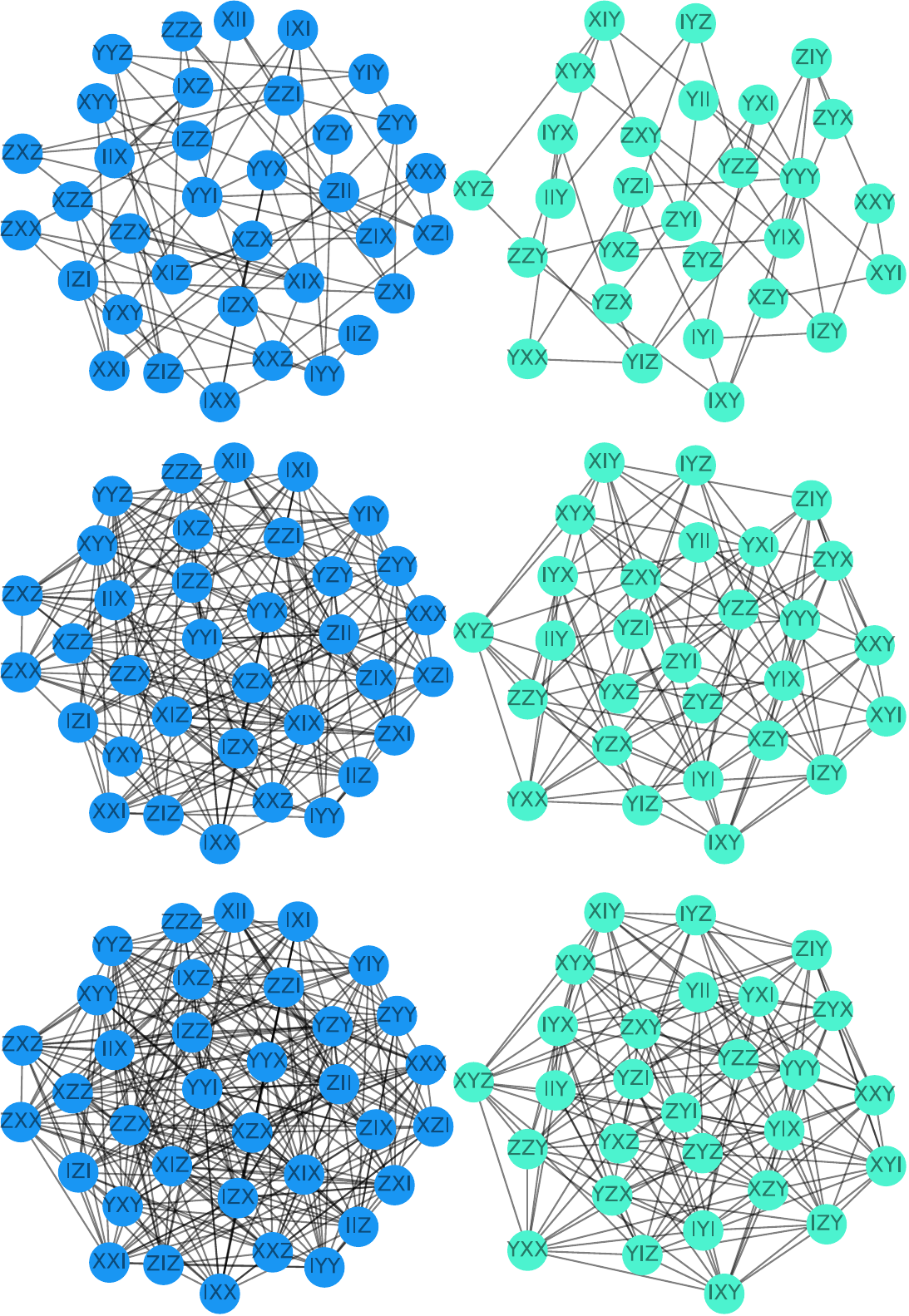} 
  \caption{The (non-trivial components) of the commutator graphs in the case of $\mfa_{16}(n)\cong \mfso(2^n)$, where for the graphs of the top, middle and bottom rows we take a generating set given respectively by $\{XY,YX,YZ,ZY\}$, the proceeding set augmented with its first order commutators, and the Lie closure of the set.  }
\label{fig:ographs}
\end{figure}

\begin{figure}
  \includegraphics[width=0.47\textwidth]{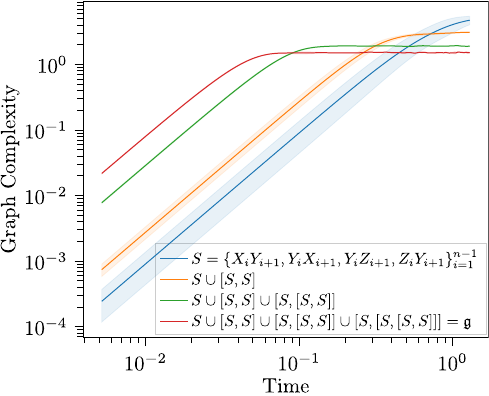} 
  \caption{The graph complexity of a Heisenberg-evolved Pauli string ($YXXXX$) under random orthogonal dynamics. In each case, a Hamiltonian is sampled by uniformly randomly weighting the elements of the generating set; the mean and standard deviation over five runs is displayed (note that the shaded region corresponding to the standard deviation is very thin for the larger generating sets). A larger generating set allows for more rapid initial growth in complexity, but ultimately stabilises at a lower value due to the higher connectivity and therefore shorter path lengths within the graph. For short times we observe the universal $t^2$ scaling predicted by Lemma~\ref{lem:gc_scaling}.}
\label{fig:gc}
\end{figure}

\noindent
\textbf{Symplectic evolution.} \\
The case of evolution by random symplectic unitaries bears a considerable resemblance to the orthogonal case, which as we will see stems from the close representation-theoretic descriptions of their commutants. We begin by recalling that the group $\mbsp(d/2)$ of symplectic unitaries consists of the $d\times d$ unitary matrices that satisfy 
\begin{equation}\label{eq:sympu}
    U^{T} \Omega U = \Omega\,,
\end{equation}
where $\Omega$ is a non-degenerate anti-symmetric bilinear form satisfying
\begin{equation}\label{eq:props-Omega}
    \Omega^2=-\id_d\,,\quad \quad\Omega\Omega^T=\id_d\,.
\end{equation}
When necessary, we will choose the canonical representation $\Omega = iY\otimes I^{\otimes (n-1)}$.
We will consider the $n$-qubit Pauli Lie algebra isomorphic to $\mfsp(2^{n-1})$ generated by~\cite{garcia2024architectures}
\begin{equation}
\{Y_i\}_{i=1}^n \cup \{X_iY_{i+1},\ Y_iX_{i+1}\}_{i=1}^{n-1} \cup X_1\cup Z_1Z_2 .
\end{equation}
We note the lack of translational invariance of the set of generators, induced by our specific choice of the symplectic form $\Omega$~\cite{garcia2024architectures}.
The structure of the  $k$\textsuperscript{th} order commutant is captured by a representation $G_d$ of the Brauer algebra $\mfb_k(-d)$, explicitly given by~\cite{garcia2024architectures}
\begin{align} 
G_d(\sigma) = &\sum_{i_1,\dots,i_{2k}=1}^{d}\prod_{\gamma=1}^{t}  \Omega_{{\sigma(\lambda_\gamma)}}^{h(\lambda_\gamma,\sigma(\lambda_\gamma))}\ket{i_{k+1},i_{k+2},\dots,i_{2k}}\nonumber \\ 
&\hspace{8mm}\times\bra{i_1,i_2,\dots,i_k} \,\Omega_{{\sigma(\lambda_\gamma)}}^{h(\lambda_\gamma,\sigma(\lambda_\gamma))}\delta_{i_{\lambda_\gamma}, i_{\sigma(\lambda_\gamma)}} \,,
\end{align} 
where one notes the similarity to Eq.~\eqref{eq:fd}. Here $h(\lambda_\gamma,\sigma(\lambda_\gamma))=1$ if $\lambda_\gamma,\sigma(\lambda_\gamma)\leq n$ or  $\lambda_\gamma,\sigma(\lambda_\gamma)> n$, and is zero otherwise. As in the previous example, this somewhat formidable-looking expression turns out to yield easily interpretable operators, which we depict graphically in Fig.~\ref{fig:ousp}. We see that again there exist one linear\footnote{Alternatively, by Schur's lemma combined with the irreducibility (indeed, transitivity~\cite{albertini2001notions,oszmaniec2017universal}) of the action of $\mbsp$ on $\mch$.} and three quadratic symmetries. From the known simplicity of $\mfsp$~\cite{fulton1991representation} we see that one of these components will correspond to the adjoint module of $\mfsp(2^{n-1})$, with a number of vertices given by ${\rm dim\ }\mfsp(2^{n-1})=2^{n-1}(2^{n}+1)$; it follows that the other two necessarily possess $1$ and  $2^{n-1}(2^n-1)-1$ vertices. 
So the situation is indeed quite similar to the orthogonal case.

Interestingly, we can leverage the recent result~\cite{west2024random} that Haar random symplectic states form a unitary state $t$-design for all $t\geq 1$, i.e.\ for all $ \ket{\psi}\in\mch$,
\begin{equation}
\int_{U\sim \mu_{\mbu(d)}}\hspace{-3mm} U^{\dagger\otimes t}\ketbra{\psi}^{\otimes t}U^{\otimes t}=\int_{U\sim\mu_{\mbsp(d/2)}}\hspace{-3mm}U^{\dagger\otimes t}\ketbra{\psi}^{\otimes t}U^{\otimes t},
\end{equation}
to deduce properties of the vertex placement in the commutator graph. 
The above equation implies that the distributions of unitary and symplectic (pure) states are \textit{indistinguishable}~\cite{west2024random}, from which we conclude that the (traceless) Pauli strings comprised of only $I$ and $Z$ cannot all appear in the same component. To see this we note that, were it not the case, one could form (for example) the pure state
\begin{equation}
\ketbra{0}^{\otimes n} = 2^{-n}\sum_{P\in \{I,Z\}^{\otimes n}} P
\end{equation}
which upon Heisenberg evolution by random symplectic unitaries would behave noticeably differently to the unitary case (by e.g.\ always having zero overlap with Pauli strings from the third component), contradicting their indistinguishability. The orthogonal group, on the other hand, has no such restriction (failing as it does  
to form a unitary state 2-design for certain reference states~\cite{schatzki2024random,west2024real}), and indeed we find that the (traceless) Pauli strings comprised of only $I$ and $Z$ all appear in the same component of the commutator graph for the example of the orthogonal DLA of Fig.~\ref{fig:ographs}.

\section{Discussion}\label{sec:discussion}
The various computations of this paper go ``beyond'' the dynamical Lie algebras of ensembles of systems in several ways. Firstly, and in the sense of the word used by Ref.~\cite{diaz2023showcasing}, the commutator graph  describes the Heisenberg evolution of \textit{any} operator on the Hilbert space of the system, not just those within the DLA, themselves described by the adjoint module (i.e., a \textit{subset} of the connected components of the commutator graph).  In this sense one obtains a far more complete picture of the dynamics, as observables of interest  need not lie within the DLA.
One immediate consequence, for example, is that the cost of exactly classically simulating the Heisenberg evolution of a given operator under the dynamics is upper bounded by the size of the connected component within which it resides.

The graph picture also adapts nicely to the approximate simulation technique of truncated Pauli propagation~\cite{angrisani2024classically,aharonov2023polynomial,fontana2023classical,bermejo2024quantum,angrisani2025simulating,dowling2025ose}, wherein one Heisenberg-evolves an initial operator $O$ through a circuit $U$ while, at each timestep, dropping any contributions to $O$ from Pauli strings of weight above some constant $k$. From the commutator graph perspective, this corresponds to simulation as usual, modulo the deletion of vertices corresponding to Pauli strings of weight above $k$. Generically, this will reduce the size of the connected component of a given string, easing its simulation. It would be interesting to investigate the induced breaking-up of components as a function of $k$ for various models; graph-theoretically, this is captured by the notion of \textit{vertex-connectivity}. \\ 

Secondly, as we have discussed, the commutator graph captures information that is more fine-grained than the DLA. Intuitively, the information ``missed'' by the DLA corresponds to short-time phenomena that is invisible once one forgets the distinguished set of (usually local) operators that generate the dynamics. At the graph level, this manifests as identical DLAs with different generating sets leading to graphs with vertices that form the same connected components, but with the components having differing internal edge structures. It is then unsurprising that results which involve averaging over the entire ensemble (without making reference to the generating set) depend only on properties of the graph that are invariant to changing the internal edges of  connected components (without, that is, breaking up the component). Proposition~\ref{prop:frame}, for example, shows that the frame potential is sensitive only to the number of isolated vertices and the number of connected components, which are both certainly invariant under such changes.

Finally, we note that throughout this work we have focused on average-case statements at the level of entire families of Hamiltonians, coupled with the guarantee of Prop.~\ref{prop:typical} of the typicality of the average case. 
In a somewhat different direction, one can imagine employing commutator graph based techniques to analyse individual instances of such a family; indeed, one is often particularly interested in specific Hamiltonians demonstrating somehow atypical behaviour. For example, the quantum Ising spin chain exhibits a quantum phase transition for specific values of its parameters. Occurring as they may in a vanishing fraction of the full manifold of dynamics, such atypical instances can be invisible to average-case analysis. For example, and as mentioned above, the  commutator graph  arises naturally in the problem of simulating a Heisenberg-evolving Pauli string.  In the specific-instance context one is led immediately to the idea of \textit{weighted}  commutator graphs, the study of which we leave for future work.
\smallskip

\begin{acknowledgments}
MW acknowledges the support of the Australian Government Research Training Program Scholarship. ND acknowledges funding by the Deutsche Forschungsgemeinschaft (DFG, German Research Foundation) under Germany’s Excellence Strategy - Cluster of Excellence Matter and Light for Quantum Computing (ML4Q) EXC 2004/1 - 390534769. MU and MW acknowledge funding from the Australian Army Research through Quantum Technology Challenge program. Computational resources were provided by the Pawsey Supercomputing Research Center through the National Computational Merit Allocation Scheme (NCMAS). KM acknowledges the support of the Australian Research Council's Discovery Projects DP210100597 and DP220101793.

\end{acknowledgments}

\noindent
{\small \phantom{...} email: westm2@student.unimelb.edu.au}\\
\noindent
{\small \phantom{...} email: ndowling@uni-koeln.de }\\
\noindent
{\small \phantom{...} email: thomas.quella@unimelb.edu.au}

\bibliography{refs,quantum}
\newpage

\appendix
\onecolumngrid
\section{Proofs} \label{sec:proofs}

\frame*
\begin{proof}
We present two perspectives on this result. Firstly, as discussed in Section~\ref{sec:otoc} (see also e.g.~\cite{mele2023introduction}), the frame potential $ F_{G}^{(2)}$ is given by the number of linearly independent second order symmetries of the dynamics  (i.e.\ $ Q\in \mcl^{\otimes 2} : [Q,U^{\otimes 2}]=0\ \forall U\in G  $), so by Equation~\eqref{eq:2sym} we have $F_{G}^{(2)} = |J|\times \#(\mathrm{components})$, with $|J|$ the number of (linear) symmetries. As these symmetries are precisely the single-element components of the graph, the result follows.

Alternately, and as a warm-up exercise in the sort of Pauli string manipulations we will be doing later,  we can employ the relation~\cite{roberts2017chaos} 
\begin{equation}
F_{G}^{(2)} = \frac{1}{d^{4}}\sum_{A_1,B_1,A_2,B_2 \in \mathbb{P}} \left\lvert \int_{G} d\mu_G(U) \tr[A_1 UB_1U^\dg A_2 UB_2U^\dg ] \right\rvert^2,
\end{equation}
where $\mathbb{P}$ is the set of Pauli strings,
and directly compute $F_{{G}}^{(2)}$. To start, we have
\begin{align}
 F_{{G}}^{(2)}&= \frac{1}{d^{4}}\sum_{A_1,B_1,A_2,B_2 \in \mathbb{P}} \left\lvert \int_{G} d\mu_G(U) \tr[A_1 UB_1U^\dg A_2 UB_2U^\dg ] \right\rvert^2 \\
&= \frac{1}{d^{4}}\sum_{A_1,B_1,A_2,B_2 \in \mathbb{P}} \left\lvert \int_{G} d\mu_G(U) \tr[  (A_1 \otimes A_2) \Big( U^{\otimes 2} (B_1 \otimes {B_2}  ) (U^\dg)^{\otimes 2} \Big)  \mathbb{S}]  \right\rvert^2 \label{eq:pswap} \\
&= \frac{1}{d^{4}}\sum_{A_1,B_1,A_2,B_2 \in \mathbb{P}} \left\lvert  \tr\left[   (A_1 \otimes A_2) \Big(\sum_{j,\kappa} Q_{j,\kappa} \frac{\left\langle Q_{j,\kappa}  ,  B_1 \otimes {{B_2} } \right\rangle_{\rm HS}}{\left\langle Q_{j,\kappa}  , Q_{j,\kappa}  \right\rangle_{\rm HS}} \Big) \mathbb{S}\right]  \right\rvert^2 \label{eq:use_project}\\
&=\frac{1}{d^{4}}\sum_{A_1,B_1,A_2,B_2 \in \mathbb{P}} \left\lvert  \tr\left[   (A_1 \otimes A_2) \Big(\sum_{j,\kappa}  \frac{Q_{j,\kappa}}{|C_\kappa|d^2}\left\langle \sum_{S\in C_\kappa} S\otimes L_j S  ,  B_1 \otimes {{B_2} } \right\rangle_{\rm HS} \Big) \mathbb{S}\right]  \right\rvert^2 \label{eq:use_q_def}
\end{align}
where in Eq.~\eqref{eq:pswap} we have used the ``swap trick''~\cite{mele2023introduction}, $\tr AB = \tr[(A\otimes B)\mathbb{S}]$, with $\mathbb{S}$ the swap operator, in Eq.~\eqref{eq:use_project} the fact that $\int_{G} d\mu(U)\  U^{\dg\otimes 2} (-) U^{\otimes 2}$ projects onto the quadratic symmetries of $G$, and in Eq.~\eqref{eq:use_q_def} the basis Eq.~\eqref{eq:2sym} of the quadratic symmetries. This is the part that requires $\mfg$ to be a Pauli string DLA with a Pauli string basis of its linear symmetries, as otherwise it is not guaranteed that Eq.~\eqref{eq:2sym} gives the required basis. We have also used 
\begin{align*}
\left\langle Q_{j,\kappa}  , Q_{j,\kappa}  \right\rangle_{\rm HS}&=\left\langle \sum_{S\in C_\kappa} S\otimes L_j S ,\sum_{T\in C_\kappa} T\otimes L_j T  \right\rangle_{\rm HS} \\
&=|C_\kappa|d^2,
\end{align*}
as for Pauli strings $S$ and $T$ we have $\tr ST = d\delta_{S,T}$.
Continuing on for a little, we have: 
\begin{align}
F_{{G}}^{(2)}&=\frac{1}{d^{8}}\sum_{A_1,B_1,A_2,B_2 \in \mathbb{P}} \left\lvert  \tr\left[   (A_1 \otimes A_2) \Big(\sum_{j,\kappa}  \frac{Q_{j,\kappa}}{|C_\kappa|} \sum_{S\in C_\kappa} \tr [S^\dg B_1]\tr[ (L_j S)^\dg   {B_2} ] \Big) \mathbb{S}\right]  \right\rvert^2 \label{eq:a6}\\
&= \frac{1}{d^{8}}\sum_{A_1,B_1,A_2,B_2 \in \mathbb{P}} \left\lvert \sum_{j,\kappa} \frac{1}{|C_\kappa|}\sum_{S,T\in C_\kappa} d\delta_{S,B_1}\tr[S L_j B_2 ]\tr\left[   A_1 T  A_2    L_j T    \right]  \right\rvert^2 \label{eq:a7}\\
&= \frac{1}{d^{6}}\sum_{A_1,B_1,A_2,B_2 \in \mathbb{P}} \left\lvert \sum_{j} \frac{1}{|C_{B_1}|}\sum_{T\in C_{B_1}} \tr[B_1 L_j B_2 ]\tr\left[   A_1 T  A_2    L_j T    \right]  \right\rvert^2 \\
&=\frac{1}{d^{6}}\sum_{A_1,B_1,A_2,B_2 \in \mathbb{P}}   \frac{1}{|C_{B_1}|^2}\sum_{j,j'}\sum_{T,T'\in C_{B_1}} \tr[B_1 L_j B_2 ]\tr[B_1 L_{j'} B_2 ]^*\tr\left[   A_1 T  A_2    L_j T    \right]   \tr\left[   A_1 T'  A_2    L_{j'} T'    \right]  ^*\\
&=\frac{1}{d^{6}}\sum_{A_1,B_1,A_2,B_2 \in \mathbb{P}}   \frac{1}{|C_{B_1}|^2}\sum_{j}\sum_{T,T'\in C_{B_1}} \tr[B_1 L_j B_2 ]\tr[B_2 L_{j} B_1 ]\tr\left[   A_1 T  A_2    L_j T    \right]   \tr\left[   T'L_{j}    A_2 T' A_1\right] ~\label{eq:a10}
\end{align}
Where in Eq.~\eqref{eq:a7} we have again used the orthogonality of the Paulis with respect to the Hilbert-Schmidt inner product, and in Eq.~\eqref{eq:a10} the self-adjointness of the Paulis combined with the fact that $\tr[M]=\tr[M^T]\ \forall M\in\eh$. We have also used that $\tr[B_1 L_j B_2 ]$ and $\tr[B_1 L_{j'} B_2 ]^*$ can only be simultaneously non-vanishing if $j=j'$. Next up we use the relation~\cite{mele2023introduction}
$(1/d)\sum_{P\in \mathbb{P}}P\otimes P =\mbs$
to conclude that
\begin{align}
\sum_{B_2 \in \mathbb{P}} \tr[B_1 L_j B_2 ]\tr[B_2 L_{j} B_1 ]&=\sum_{B_2 \in \mathbb{P}}\tr[ (B_1\otimes L_j)(L_j\otimes B_1)(B_2\otimes B_2)]\\
&=d\tr[ (B_1\otimes L_j)(L_j\otimes B_1)\mbs]\\
&=d\tr[ B_1 L_jL_j B_1]\\
&=d^2
\end{align}
and, similarly,
\begin{align}
\sum_{A_1,A_2 \in \mathbb{P}}  \tr\left[   A_1 T  A_2    L_j T    \right]   \tr\left[   T'L_{j}    A_2 T' A_1\right]&=\sum_{A_1,A_2 \in \mathbb{P}}  \tr\left[   (A_1\otimes A_1) (T\otimes T')(\id \otimes L_j) (A_2\otimes A_2)  (L_j\otimes \id) (T\otimes   T')\right]\\
&=d^2 \tr\left[   \mbs (T\otimes T')(\id \otimes L_j) \mbs (L_j\otimes \id) (T\otimes   T')\right]\\
&=d^2 \tr\left[   T T'\right] \tr\left[T' L_j^2T\right]\\
&=d^4\delta_{T,T'}\;.
\end{align}
Putting it all together, we have:
\begin{align}
F_{{G}}^{(2)}&=\sum_{B_1 \in \mathbb{P}}   \frac{1}{|C_{B_1}|^2}\sum_{j}\sum_{T,T'\in C_{B_1}} \delta_{T,T'}\\
&=|J|\sum_{B_1 \in \mathbb{P}}   \frac{1}{|C_{B_1}|}\\
&=|J|\sum_\kappa  1\\
&={|J| \;\cdot\;\#(\mathrm{connected\ components})}\;,
\end{align}
which is what we wanted to show. In the last few lines we have broken the sum over all Paulis into a sum over sums over Paulis from each component, i.e. $\sum_{B_1 \in \mathbb{P}}\ (1/|C_{B_1}|)=\sum_\kappa\sum_{B_1 \in C_\kappa}(1/|C_{B_1}|)=\sum_\kappa 1$.
\end{proof} 

\coherence*
\begin{proof}
As in Props.~\ref{prop:frame},~\ref{crllr:counting} and~\ref{prop:spread} we employ the swap trick and Weingarten calculus, utilising the basis Eq.~\eqref{eq:2sym} of the quadratic symmetries:
\begin{align}
\expect_{U\sim \mu_G}  \tr \left[P UQU^\dg R USU^\dg \right]&= \expect_{U\sim \mu_G} \tr \left[\left(P\otimes R\right) U^{\otimes 2}\left(Q\otimes S\right)U^{\dg\otimes 2}\mathbb{S} \right] \\
&= \tr\left[   (P \otimes R) \Big(\sum_{j,\kappa} Q_{j,\kappa} \frac{\left\langle Q_{j,\kappa}  ,  Q\otimes {{S} } \right\rangle_{\rm HS}}{\left\langle Q_{j,\kappa}  , Q_{j,\kappa}  \right\rangle_{\rm HS}} \Big) \mathbb{S}\right] \\
&= \tr\left[   (P \otimes R) \Big(\sum_{j,\kappa}  \frac{Q_{j,\kappa}}{|C_\kappa|d^2}\left\langle \sum_{T\in C_\kappa} T\otimes L_j T  ,  Q \otimes {{S} } 
\right\rangle_{\rm HS} \Big) \mathbb{S}\right] \\
&=\frac{1}{d^2} \tr\left[   (P \otimes R) \Big(\sum_{j,\kappa}  \frac{Q_{j,\kappa}}{|C_\kappa|} \sum_{T\in C_\kappa} \tr [TQ]\tr[ L_j T   {S} ] \Big) \mathbb{S}\right]\;.
\end{align}
But $\tr [TQ] = d\delta_{T,Q} $, so we set $T= Q$ and the sum over $\kappa$ collapses to the component of $Q$. 
We carry on:
\begin{align}
\frac{1}{d^2} \tr\left[   (P \otimes R) \Big(\sum_{j,\kappa}  \frac{Q_{j,\kappa}}{|C_\kappa|} \sum_{T\in C_\kappa} \tr [TQ]\tr[ L_j T   {S} ] \Big) \mathbb{S}\right] &=\frac{1}{d} \tr\left[   (P \otimes R) \Big(\sum_{j}  \frac{Q_{j,C_Q}}{|C_Q|} \tr[ L_j Q   {S} ] \Big) \mathbb{S}\right]\\
&= \sum_j\frac{\tr[ L_j Q   {S} ] }{d|C_Q|} \tr\left[   (P \otimes R) \Big(  \sum_{T\in C_Q} T\otimes L_jT  \Big) \mathbb{S}\right] \\
&=\sum_j\sum_{T\in C_Q} \frac{\tr[ L_j Q   {S} ] }{d|C_Q|} \tr\left[   PTRL_jT\right] \;.
\end{align}
Now, by the fact that Pauli strings always commute or anticommute (and regardless square to the identity) we have $\tr\left[   PTRL_jT\right] =\pm \tr\left[   L_jPR\right] $, and so
\begin{align}
\sum_j\frac{\tr[ L_j Q   {S} ] }{d|C_Q|}\sum_{T\in C_Q}  \tr\left[   PTRL_jT\right] 
&= \sum_j\frac{\tr[ L_j Q   {S} ] \tr\left[   PRL_j\right] }{d|C_Q|}\sum_{T\in C_Q} \begin{cases} 1&\mathrm{\ if\ }\left[   P , T    \right]=0 \\  -1 &\mathrm{\ otherwise } \end{cases}  \\
&=\sum_j\frac{\tr[ L_j Q   {S} ] \tr\left[   PRL_j\right] }{d} \left(1- \frac{2|\{T\in C_Q : \{P,T\}=0 \}|}{|C_Q|}\right) \;.
\end{align}   
The next step is perhaps most easily seen in the vectorised notation: recalling $\sdbraket{A}{B}=\tr[A^\dagger B]$ and Eq.~\eqref{eq:linproj} we have 
\begin{equation}
    \sum_j \tr[ L_j Q   {S} ] \tr\left[   PRL_j\right]  = \sum_j \sdbraket{RP}{L_j}\sdbraket{L_j}{QS}=\sdbra{RP}\Big(\sum_j \sdketbra{L_j}{L_j}\Big)\sdket{QS}=\sdbraket{RP|\mst_G^{(1)}}{QS}\label{eq:a32}
\end{equation}
and conclude 
\begin{equation}
\expect_{U\sim \mu_G} \tr \left[P UQU^\dg R USU^\dg \right]=d^{-1}\sdbraket{RP|\mst_G^{(1)}}{QS} \left(1- \frac{2|\{T\in C_Q : \{P,T\}=0 \}|}{|C_Q|}\right) \;.\label{eq:a33}
\end{equation}
Finally, from the left-most of the expressions in Eq.~\eqref{eq:a32} it is clear that the overall expression Eq.~\eqref{eq:a33} will be zero unless one of the Paulis $L_j$ is proportional to both $RP$ and $QS$.
\end{proof}

\counting*
\begin{proof}
From the definition of the OTOC and the result of Prop.~\ref{prop:coherence} we have 
\begin{align}
\expect_{U\sim\mu_G} F(W,U^\dagger VU)&= \frac{1}{d}\expect_{U\sim\mu_G} \tr \left[W U^\dg VU W U^\dg VU  \right] \\
&= d^{-2}\sdbraket{W^2|\mst_G^{(1)}}{W^2}  \left(1- \frac{2|\{T\in C(V) : \{W,T\}=0 \}|}{|C(V)|}\right)\\
&=\frac{\tr\left[   \id\right]^2}{d^2} \left(1- \frac{2|\{T\in C(V) : \{W,T\}=0 \}|}{|C(V)|}\right)\\
&=1- \frac{2|\{T\in C(V) : \{W,T\}=0 \}|}{|C(V)|}
\end{align}

The statement with the roles of $V$ and $W$ reversed follows similarly; alternately we have
\begin{align}
\expect_{U\sim\mu_G}F(W,U^\dagger VU) &=\frac{1}{d}\expect_{U\sim\mu_G} \tr \left[W U^\dg VU W U^\dg VU \right] \\
&=\frac{1}{d}\expect_{U\sim\mu_G} \tr \left[ VU^\dagger W U V U^\dg W U   \right] \\
&=\expect_{U\sim\mu_G} F(V,U^\dagger WU) \\
&=1 - \frac{2|\{T\in C(W) : \{V,T\}=0 \}| }{|C(W)|} 
\end{align}
using the previous result, the cyclicity of the trace, and the invariance of the Haar measure under the transformation $U\mapsto U^\dg$.
\end{proof}

\symcounting*
\begin{proof} 
From Corollary~\ref{crllr:counting} we have 
\[ \frac{|\{T\in C(W) : \{V,T\}=0 \}|}{|C(W)|} = \frac{|\{T\in C(V) : \{W,T\}=0 \}|}{|C(V)| } \;. \]
As Pauli strings either commute or anticommute we have $\forall W$
\[ |C(V)| = |\{T\in C(V) : \{W,T\}=0 \}|  + |\{T\in C(V) : [W,T]=0 \}|  \]
and similarly for $V\leftrightarrow W$. Combining this with the previous result immediately implies
\[ \frac{|\{T\in C(W) : [V,T]=0 \}|}{|C(W)|} = \frac{|\{T\in C(V) : [W,T]=0 \}|}{|C(V)| } \;. \]
\end{proof}

\frustration*
\begin{proof}
The {\em commutator} graph has a connected component for each ideal of the DLA  (among other components). Here the DLA is equal to $\mfg$, as by assumption the Hamiltonian terms already form a (simple) Lie algebra. As the DLA is simple, any strings $P_i, P_j \in \mcg$  are in the same component $C(P_i)$ of the graph, which consists of and only of the strings in $\mcg$. Now we can use Corollary~\ref{crllr:counting}:
\begin{equation}
    \expect_{U\sim\mu_G}
F(P_i,U^\dagger P_jU) =  1 - \frac{2}{|C(P_i)|} |\{T\in C(P_i) : \{P_j,T\}=0 \}| = 1 - \frac{2\mathrm{\ deg}(P_i)}{\mathrm{dim\ } \mfg}   
\end{equation}
and by symmetry $\mathrm{\ deg}(P_i)=\mathrm{\ deg}(P_j)$. By considering the analogous expression for each pair of nodes in $C(P_i)$ we conclude that all the vertices in the {\em frustration} graph have the same degree. 
\end{proof}

\typical*
We note that our proof is inspired by, and bears considerable resemblance to, proofs of similar statements found in Refs.~\cite{styliaris2021information,dowling_scrambling_2023}.
\begin{proof}
First, let us say that a function $f:G\to \mathbb{R}$ is $L$-\textit{Lipschitz continuous}, if, for all $U,V\in G$, 
\begin{equation}
    \bigl|f(U)-f(V)\bigr|\leq L \|U-V\|_2,
\end{equation}
where $\|T\|_2=\sqrt{\tr[T^\dagger T]}$ is the Hilbert-Schmidt norm of $T\in\eh$. The main technical result we will need is that
by Thm. 5.17 of Ref.~\cite{meckes2019random}
we have concentration of measure on $G$; specifically, if $f:G\to \mathbb{R}$ is $L$-Lipschitz continuous, then the probability of $f$ deviating from its average by more than $\epsilon$ is bounded as
\begin{equation}\label{eq:lipschitz}
\mathbb{P}_{U\sim\mu_G }\left(  \left\lvert f(U) - \mathbb{E}f \right\rvert \geq \epsilon \right) \leq 2e^{-(k-2)^2\epsilon^2/24L^2}.
\end{equation}
where $k=\min \{k_i\}_i$. 
So the result will follow if we can show that $f:G\to \mathbb{R},\ f(T)=\tr \left[A TBT^\dg C TDT^\dg \right] $ is $L$-Lipschitz continuous with $L=4$. We have:
\begin{align}
|f(S)-f(T)|&= \frac{1}{d} \left |\tr \left[A SBS^\dg C SDS^\dg \right]-\tr \left[A TBT^\dg C TDT^\dg \right]\right |\\
&=\frac{1}{d}\left|\tr \left[\mathbb{S}\left((A\otimes C)S^{\otimes 2}(B\otimes D) S^{\dg\otimes 2}-(A\otimes C)T^{\otimes 2}(B\otimes D) T^{\dg\otimes 2}\right)\right]\right|\label{eq:st35} \\
&\leq\frac{1}{d}\bigl\| S^{\otimes 2}(B\otimes D) S^{\dg\otimes 2}-T^{\otimes 2}(B\otimes D) T^{\dg\otimes 2}\bigr\|_1\cdot\bigl\|(A\otimes C)\mathbb{S} \bigr\|_\infty\label{eq:holder} \\
&=\frac{1}{d}\bigl\| S^{\otimes 2}(B\otimes D) S^{\dg\otimes 2}-T^{\otimes 2}(B\otimes D) T^{\dg\otimes 2}\bigr\|_1 \label{eq:uninfty} 
\end{align}
where in Eq.~\eqref{eq:st35} we used the swap trick, in Eq.~\eqref{eq:holder} the matrix H\"older inequality $|\tr AB|\leq \|A\|_1\cdot\|B\|_\infty$, and in Eq.~\eqref{eq:uninfty} that unitary operators have $\infty$-norm 1. Continuing on, we have (strategically adding zero)
\begin{align}
\bigl|f(S)-f(T)\bigr|&= \frac{1}{d}\bigl\| S^{\otimes 2}(B\otimes D) S^{\dg\otimes 2}-S^{\otimes 2}(B\otimes D) T^{\dg\otimes 2}+S^{\otimes 2}(B\otimes D) T^{\dg\otimes 2}-T^{\otimes 2}(B\otimes D) T^{\dg\otimes 2}\bigr\|_1 \\
&\leq \frac{1}{d} \bigl\| S^{\otimes 2}(B\otimes D) S^{\dg\otimes 2}-S^{\otimes 2}(B\otimes D) T^{\dg\otimes 2}\bigr\|_1+\frac{1}{d}\bigl\|S^{\otimes 2}(B\otimes D) T^{\dg\otimes 2}-T^{\otimes 2}(B\otimes D) T^{\dg\otimes 2}\bigr\|_1 \\
&= \frac{2}{d}\bigl\|S^{\otimes 2}-T^{\otimes 2}\bigr\|_1 \label{eq:univ1} \\
&\leq  \frac{2}{d}\left(\|S^{\otimes 2}-S\otimes T\|_1+\|S\otimes T -T^{\otimes 2}\|_1\right) \\
&\leq  \frac{2}{\sqrt{d}}\left(\|S\otimes (S- T)\|_2+\|(S- T)\otimes T\|_2\right)\label{eq:12d} \\
&= \frac{2}{\sqrt{d}}\left(\|\id\otimes (S- T)\|_2+\|(S- T)\otimes \id\|_2\right)\label{eq:univ2} \\
&= {4}\|S- T\|_2\;,
\end{align}
where in we Eqs.~\eqref{eq:univ1} and ~\eqref{eq:univ2} we used the unitary-invariance of the 1 and 2 norms respectively.  
In Eq.~\eqref{eq:12d} we used that for an operator $T$ on a $d$ dimensional space we have $ \|T\|_1\leq\sqrt{d}\|T\|_2$.
So $f$ is 4-Lipschitz continuous; by using Eq.~\eqref{eq:lipschitz} and choosing $A,B,C$ and $D$ appropriately we get the result.
\end{proof}

\spread*
\begin{proof}
After Heisenberg-evolving by some $U\in G$, we have $V\mapsto U^\dg VU = \sum_{W} \alpha_W W$.
We can find the magnitude of  the coefficient $\alpha_W$ by taking the (Hilbert-Schmidt) inner product with $W$. We begin by noting that, by the self-adjointness of the Paulis, 
\begin{align}
\left\lvert \left\langle W,UVU^\dg\right\rangle_{\mathrm{HS}} \right\rvert^2&= \tr[WUVU^\dg ]\tr[WUVU^\dg ]^*\\
&= \tr[WUVU^\dg ]\tr[W^*U^*V^*U^T ]\\
&= \tr[WUVU^\dg ]\tr[(W^*U^*V^*U^T)^T ]\\
&= \tr[WUVU^\dg ]\tr[UVU^\dagger W  ]\\
&= \tr[WUVU^\dg ]\tr[WUVU^\dagger   ]\\
&= \tr \left[W^{\otimes 2} U^{\otimes 2}V^{\otimes 2}U^{\dg\otimes 2} \right]\;.
\end{align}
Averaging over $G$ we have
\begin{align}
\expect_{U\sim \mu_G} \left\lvert \left\langle W,UVU^\dg\right\rangle_{\mathrm{HS}} \right\rvert^2&=\expect_{U\sim \mu_G} \tr \left[W^{\otimes 2} U^{\otimes 2}V^{\otimes 2}U^{\dg\otimes 2} \right]\\ 
&= \tr\left[   W^{\otimes 2} \Big(\sum_{j,\kappa} Q_{j,\kappa} \frac{\left\langle Q_{j,\kappa}  ,  V \otimes {{V} } \right\rangle_{\rm HS}}{\left\langle Q_{j,\kappa}  , Q_{j,\kappa}  \right\rangle_{\rm HS}} \Big) \right] \\
&=\tr\left[   W^{\otimes 2}\sum_{j,\kappa}  \frac{Q_{j,\kappa}}{|C_\kappa|d^2}\left\langle \sum_{S\in C_\kappa} S\otimes L_j S  ,  V \otimes {{V} } 
\right\rangle_{\rm HS}  \right] \\
&=\frac{1}{d^2} \tr\left[  W^{\otimes 2} \sum_{j,\kappa}  \frac{Q_{j,\kappa}}{|C_\kappa|} \sum_{S\in C_\kappa} \tr [SV]\tr[ L_j S   {V} ] \right] \;.
\end{align}
Now, $\tr[SV]=d\delta_{S,V}$, which combined with the term $\tr[L_j SV]$ forces $L_j=\id$; we obtain
\begin{align}
\frac{1}{d^2} \tr\left[  W^{\otimes 2} \sum_{j,\kappa}  \frac{Q_{j,\kappa}}{|C_\kappa|} \sum_{S\in C_\kappa} \tr [SV]\tr[ L_j S   {V} ] \right]&=\frac{1}{d^2} \tr\left[  W^{\otimes 2}   Q_{\id,C_V}\frac{d^2}{|C_V|}\right]\\
&=\frac{1}{|C_V|} \tr\left[  W^{\otimes 2}   Q_{\id,C_V}\right]\\
&=\frac{1}{|C_V|} \tr\left[ \sum_{T\in C_V} W^{\otimes 2} T^{\otimes 2} \right]\\
&= \begin{cases}
    d^2/|C_V| &\mathrm{\ if\ } C_V=C_W\\
    0 &\mathrm{\ otherwise }
\end{cases}
\end{align}
as desired.

\end{proof}

\qcomp*
\begin{proof}
Let us begin by recalling that a $q$-complexity on a time-evolved Pauli string $p_t$ is an expectation value $\mathsf{Q}(p_t)=\sdbraket{p_t}{\mcq|p_t}$  where $\mcq$ is a superoperator satisfying
\begin{enumerate}
    \item $\mcq$ is  positive semidefinite, with a spectral decomposition
    \begin{equation}
        \mcq = \sum_a q_a \sdketbra{q_a}{q_a}
    \end{equation}
    where $q_a\geq 0 \ \forall a$.
    \item There exists $M>0$ such that
    \begin{equation}\label{eq:aqm1}
        \sdbraket{q_a}{\msl|q_b} = 0 \mathrm{\ if\ } |q_a-q_b|>M\;,
    \end{equation}
    \vspace{-8mm}
    \begin{equation}\label{eq:aqm2}
        \sdbraket{q_a}{O} = 0 \mathrm{\ if\ } |q_a|>M\;.
    \end{equation}
\end{enumerate}
with  $ \msl  = [H,-]$ the Liouvillian superoperator. So, we need to find a $\mc{Q}$ satisfying these conditions which induces graph-complexity, i.e.\ $\mathsf{G}(p_t) =\sdbraket{p_t }{\mc{Q}|p_t}$; we will show that a $\mcq$ that does the job is  
\begin{equation}
\mcq=\sum_{s\in C(p)} \ell(p,s)\sdketbra{s}{s}\;,
\end{equation}
where the sum is over Pauli strings $s$ in the same graph component as the initial string $p$. First of all, we have
\begin{align}
\mathsf{G}(p_t) &= \sum_s \ell(p,s) \left|{\tr[p_t s]}{}\right|^2\\
&= \sum_s \ell(p,s) \sdbraket{p_t }{s}\sdbraket{s}{p_t}\\
&=  \sdbraket{p_t}{\bigg(\sum_s \ell(p,s)|s}\sdbra{s}\bigg)\sdket{p_t}\\
&=\sdbraket{p_t }{\mc{Q}|p_t}
\end{align}
So that $\mcq$ reproduces graph complexity. Next we check the conditions on $\mcq$, all of which work with $M=1$:
\begin{enumerate}
    \item Certainly $q_s=\ell(p,s)\geq 0$.
    \item Here we need to check Eqs.~\eqref{eq:aqm1} and~\eqref{eq:aqm2}; respectively we have:
\begin{itemize}
    \item $\msl\sdket{s}=\sdket{[H,s]}$ is a linear combination of strings $\{s'_i\}_i$ which are nearest neighbours of $s$. But this means that, for any $i$, $\ell(p,s'_i)\in\{\ell(p,s),\ell(p,s)\pm 1\}$, so $\msl\sdket{s}$ is orthogonal to any $\sdket{t}$ with $|q_s-q_t|=|\ell(p,s)-\ell(p,t)|>1$.
    \item By assumption we start with a Pauli string so $\sdbraket{s}{p} =0$ if $\ell(p,s)>0$, so $M=1$ continues to work.
\end{itemize}
\end{enumerate}
We conclude that graph complexity is indeed a $q$-complexity.
\end{proof}

\gbound*
\begin{proof}
We can write the Heisenberg evolved Pauli string $p_t$ as $p_t=\sum_q  c_q(t) q=\sum_n  d_n(t) O_n$ where $q$ are Pauli strings (normalised so that $\|q\|_2=1$) in the same connected component as $p$, and $O_n$ are the Krylov basis operators. We can directly calculate:
\begin{align}
\mathsf{G}(p_t) &= \sum_q \ell(p,q) |c_q(t)|^2\\
&=\sum_q \ell(p,q) \left|{\tr[p_t q]}{}\right|^2\\
&=\sum_q \ell(p,q) \left|\tr\left[p_t \sum_n {\tr[O_n q]O_n} \right]\right|^2 \label{eq:qexpand} \\
&= \sum_{n,n'}\tr\left[p_tO_n \right]\tr\left[p_tO_{n'} \right]^*  \left(\sum_{q}  \ell(p,q)   \tr[O_n q]\tr[O_{n'} q]^*\right)\\
&\leq \sum_{n,n'}\tr\left[p_tO_n \right]\tr\left[p_tO_{n'} \right]^*  \left(\sum_{q}  \min\{n,n'\}  \tr[O_n q]\tr[O_{n'} q]^*\right)\label{eq:emin}\\
&=\sum_{n,n'}\tr\left[p_tO_n \right]\tr\left[p_tO_{n'} \right]^*   \min\{n,n'\}  \delta_{n,n'} \label{eq:oortho}  \\
&=\sum_{n}|d_n|^2 n\\
&={\mathsf{K}(p_t)}\;,
\end{align}
where in Eq.~\eqref{eq:qexpand} we have expanded $q$ in the Krylov basis (note that any component $\widetilde{q}$ of $q$  orthogonal to the Krylov space satisfies $\tr[p_t\widetilde{q}]=0$). In Eq.~\eqref{eq:emin} we have used that a Pauli string $q$ cannot have any overlap with the $n$\textsuperscript{th} Krylov operator unless $n$ is big enough for the commutators to ``walk'' from $p$ to $q$, which takes at least $\ell(p,q)$ commutations. Finally, in Eq.~\eqref{eq:oortho} we have used 
\begin{equation}
    \sum_q\tr[O_n q]\tr[O_{n'} q]^* = \sum_q\sdbraket{O_{n'}}{q}\sdbraket{q}{O_n}=\sdbra{O_{n'}}\Big(\sum_q \sdketbra{q}{q}\Big)\sdket{O_n}=\sdbra{O_{n'}}\id_{C_p} \sdket{O_n}=\sdbraket{O_{n'}}{O_n}=\delta_{n,n'}
\end{equation}
where $\id_{C_p}$ is the identity restricted to the component to which $p$ (and $O_n$ and $O_{n'}$) belong.
\end{proof}

\gcshortt*
\begin{proof}
Employing the notation ${\rm Ad}(U)=U(-)U^\dagger,\ {\rm ad}(X) = [X,-] $, we have the (Baker-Campbell-Hausdorff) relation ${\rm Ad}\circ\exp = \exp\circ\ {\rm ad}$, and at short times therefore $p_t=e^{itH}pe^{-itH}=p+it[H,p]+\Theta(t^2)$. From the definition of graph complexity, the fact that $\ell(p,p)=0$, and using $\mcn^{(1)}(p)$ to denote the nearest neighbours of $p$, we then have 
\begin{align}
\mathsf{G}(p_t) &= \sum_{q\in C_p} \ell(p,q) \left|{\tr[p_t q]}{}\right|^2\\
&= \sum_{\substack{ q\in C_p\\ q\neq p}} \ell(p,q) \left|\tr[ p_tq   ]\right|^2\\
&=\sum_{\substack{ q\in C_p\\ q\neq p}} \ell(p,q) \left|{\tr[ \sum_{j=0}^\infty \frac{(it)^j}{j!} {\rm ad}_H^j(p)   q]}{}\right|^2\\
&\approx \sum_{\substack{ q\in C_p\\ q\neq p}} \ell(p,q) \left|\tr[ pq +it \,{\rm ad}_H(p)q  ]\right|^2\\
&= t^2 \sum_{ q\in \mcn^{(1)}(p) }  \left|\tr[  {\rm ad}_H(p)q  ]\right|^2\\
&\leq t^2\big\lvert\mcn^{(1)}(p)\big\rvert \cdot\|{\rm ad}_H(p)\|_1^2\;.
\end{align}
In truncating the Taylor series we have used
\[ t \left|\tr[ {\rm ad}_H(p)q  ]\right|\leq t \|{\rm ad}_H(p)\|_1 \cdot\|q\|_\infty = t \|Hp-pH\|_1\leq 2t \|Hp\|_1\leq 2t \|H\|_1\cdot\|p\|_\infty = 2t \|H\|_1 \ll 1\;,\]
where the above line uses H\"older's inequality (twice), the triangle inequality, that the spectral norm of Pauli strings is unity and the assumption $t\ll \|H\|_1^{-1}$.
\end{proof}
Finally, we mention the following simple result:
\begin{lemma}
Every linear Pauli string symmetry $L$ induces a bijection between an arbitrary component $\{P_i\}$ of the commutator graph and its image $\{LP_i\}$ (where we drop factors of $\pm i$) which are in fact isomorphic as graphs.
The converse is false, i.e. there may exist pairwise isomorphic components that are not related via multiplication by any linear symmetry.
\end{lemma}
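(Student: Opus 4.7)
The plan is to prove the two halves of the lemma separately: the forward direction by constructing the graph isomorphism explicitly, and the converse by exhibiting a small, transparent counterexample.

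For the forward direction, I would start from the definition that a linear symmetry $L$ satisfies $[L,U]=0$ for all $U\in G$, which upon differentiating is equivalent to $[L,X]=0$ for every $X\in\mfg$, and in particular to $[L,H_\ell]=0$ for each generator $H_\ell\in\mcg$. I then introduce the map $\phi_L: P\mapsto LP$ on the vertex set of the commutator graph (Pauli strings modulo phases). This is well-defined because the product of two Pauli strings is again a Pauli string up to a fourth root of unity; it is an involution up to phase because $L^2\propto\id$, hence a bijection. Edge preservation follows immediately: if $[H_\ell,P]\propto Q$, then using $[L,H_\ell]=0$ one has $[H_\ell,LP]=L[H_\ell,P]\propto LQ$, so adjacency is sent to adjacency. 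Therefore $\phi_L$ is an automorphism of the full commutator graph, and it restricts to a graph isomorphism between any component $\{P_i\}$ and its image $\{LP_i\}$.

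For the converse, I would exhibit the two-qubit example with generating set $\mcg=\{X_1,Y_1,Z_1,X_2,Y_2,Z_2\}$, whose DLA is $\mfsu(2)\oplus\mfsu(2)$. A short enumeration shows that the commutator graph decomposes into the isolated vertex $\{II\}$, two triangle components $\{XI,YI,ZI\}$ and $\{IX,IY,IZ\}$, and a nine-vertex component on the two-body Pauli strings. The two triangle components are trivially isomorphic as graphs. However, the only Pauli string commuting simultaneously with all six generators is $II$ itself, so the group of Pauli string linear symmetries acts trivially on components and cannot possibly relate the two triangles.

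The forward half is largely mechanical once one observes the key point: unlike a generic linear symmetry, a \emph{Pauli string} linear symmetry preserves the vertex set of the graph under multiplication, which is exactly what is needed to promote the algebraic relation $[L,H_\ell]=0$ to a graph automorphism. The only real obstacle is the converse, where one has to ensure that the candidate counterexample truly exhibits graph-isomorphic components while admitting no Pauli linear symmetry interrelating them; the $\mfsu(2)\oplus\mfsu(2)$ example dispenses with both requirements at once, and would be my preferred illustration.
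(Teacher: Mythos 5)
Your proposal is correct and follows essentially the same route as the paper: the forward direction rests on the identity $[H_\ell,LP]=L[H_\ell,P]$ (valid since the linear symmetry $L$ commutes with every generator), and the converse uses exactly the paper's counterexample $\mcg=\{XI,YI,ZI,IX,IY,IZ\}$ with its two isomorphic triangle components and no non-trivial Pauli linear symmetry. Your additional remarks (bijectivity via $L^2\propto\id$ and viewing $\phi_L$ as an automorphism of the whole graph) are fine refinements of the same argument rather than a different approach.
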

\begin{proof}
Suppose there is an edge between the vertices $P_i$ and $P_j$, i.e. there exists $H$ in the relevant generating set such that $P_j \propto [H,P_i]$. We then have:
\begin{equation}
    LP_j \propto L[H,P_i] = LHP_i-LP_iH = HLP_i-LP_iH = [H,LP_i],
\end{equation}
where we have used that $L$ is a symmetry of the dynamics, and so commutes with $H$. We conclude that an edge between $P_i$ and $P_j$ implies the existence of an edge between $LP_i$ and $LP_j$;  one similarly verifies the converse implication, and concludes that the corresponding graph components (which may be identical) are isomorphic.

To see that the converse is false, we consider the example $\mfg = \{XI, YI, ZI, IX, IY, IZ\}$. The commutator graph contains two ``triangular components'' (corresponding to the vertices $\{XI, YI, ZI\}$ and $\{IX, IY, IZ\}$) which are clearly isomorphic as graphs; they cannot be related by a linear symmetry, however, as the action of the DLA $\mfg=\expval{i\mcg}_{\rm Lie}$ does not admit any (non-trivial) linear symmetries.
\end{proof}

\section{Representation theoretic approach to the module structure and second order symmetries}\label{sec:rep}

  In this section we analyze the structure of the space of linear operators $\cL=\eh$ (with $\mch=\mbc^d$ and $d=2^n$ for $n$ qubits or spin-$\tfrac{1}{2}$ degrees of freedom) and the characterization of second (and higher) order symmetries from the perspective of representation theory for a number of selected DLAs $\mfg\subset\mfu(d)$. Following Ref.~\cite{mele2023introduction} and our discussion in the main text, these results can be used to make statements about frame potentials and thus about the deviation of uniform DLA ensembles from $k$-designs.

  In all our considerations below we are interested in decomposing $\cL=\eh$ into irreducible representations under the action of a given DLA $\mfg\subset\mfu(d)$ which can itself be regarded as a Lie subalgebra of $\cL$\footnote{$\cL$ can be regarded as a complexification of the real Lie algebra $\mfu(d)$.}. We will refer to the space $\cL$ as the {\em adjoint module} as it carries a natural adjoint action of $\mfu(d)$ by means of the commutator\footnote{This should not be confused with the usual adjoint representation which refers to a representation of a Lie algebra $\mfg$ on {\em itself} by means of the commutator.}. In fact, since the center of $\mfu(d)$ -- generated by multiples of the identity -- acts trivially, this may equally be interpreted as an action of $\mfsu(d)$.
  
  The DLA $\mfg$ is a Lie subalgebra of $\mfu(d)$ and our goal is to decompose the adjoint module $\cL$ -- and its tensor powers $\cL^{\otimes k}$ -- into irreducible representations of the DLA $\mfg$. This will allow us to identify the number of higher order symmetries. Indeed, the order-$k$ symmetries precisely correspond to the space of trivial representations (singlets) in $\cL^{\otimes k}$ (regarded as a module over the DLA $\mfg$). As $\cL$ corresponds to the complexification of $\mfu(d)$ this space can be determined if we understand the branching rules for the DLA embedding $\mfg\subset\mfu(d)$, specifically the branching of the adjoint representation of $\mfu(d)$ on (a complexified version of) itself.

  We begin the discussion with the case of the universal DLA $\mfg=\mfsu(d)\subset\mfu(d)$ for which we recover well-known results \cite{mele2023introduction}. The detailed description of this familiar case will allow to illucidate the general procedure and philosophy. We will then discuss a selection of other DLAs, while leaving a more comprehensive analysis for a separate publication.

\subsection{Universal case}

  In the universal case we consider the subalgebra $\mfsu(d)\subset\mfu(d)$ with the adjoint action on $\mfu(d)$\footnote{Here and in what follows the complexification of $\mfu(d)$ will be implicitly assumed.}. As we can think about the latter $\mfu(d)$ as the vector space of all (complex) matrices of size $d\times d$, the adjoint module can be regarded as the tensor product $\cV\otimes\cV^\ast$, where $\cV$ is the fundamental representation of $\mfsu(d)$ and $\cV^\ast$ is its dual. The decomposition of that tensor product using standard techniques (see, e.g., Ref.~\cite{FrancescoCFT}) leads to
\begin{align}
  \ad_{\mfu(d)}\Bigr|_{\mfsu(d)}
  =\cV\otimes\cV^\ast
  =0\oplus\ad_{\mfsu(d)}\;,
\end{align}
  where $\ad$ refers to the respective Lie algebras (adjoint representation over itself) and $0$ denotes the trivial representation. This is just the representation theoretic reflection of the decomposition $\mfu(d)=\mfsu(d)\oplus\mfu(1)$, where $\mfu(1)$ is generated by multiples of the identity matrix. We stress that the identity matrix has a trivial commutator with all other matrices and thus spans the trivial representation.

  Now that we settled some notation it is possible to investigate order-$k$ symmetries. As explained above this amounts to decomposing the $k$-fold tensor product
\begin{align}
  \label{eq:FullSpacek}
  \ad_{\mfu(d)}^{\otimes k}\Bigr|_{\mfsu(d)}
=\underbrace{\mfu(d)\otimes\cdots\otimes\mfu(d)}_{k\text{ factors}}\Bigr|_{\mfsu(d)}
=\underbrace{\cV\otimes\cdots\otimes\cV}_{k\text{ factors}}   \otimes\underbrace{\cV^\ast\otimes\cdots\otimes\cV^\ast}_{k\text{ factors}}\;.
\end{align}
  On that total space we have two commuting actions of $\mfsu(d)$ and the permutation group $S_k$. The higher order symmetries correspond to the trivial representations under the action of $\mfsu(d)$. By Schur-Weyl duality there are two multiplicity-free decompositions
\begin{align}
  \cV\otimes\cdots\otimes\cV
  =\bigoplus_{\lambda}\cV_\lambda\otimes\cS_\lambda
  \qquad\text{ and }\qquad
  \cV^\ast\otimes\cdots\otimes\cV^\ast
  =\bigoplus_{\lambda}\cV^\ast_\lambda\otimes\cS^\ast_\lambda
\end{align}
  into irreducible modules of $\mfsu(d)$ and $S_k$ (called $\cV_\lambda$ and $\cS_\lambda$, respectively), where $\lambda$ refers to Young tableaux with $k$ boxes and at most $d$ rows. We note that a singlet with respect to $\mfsu(d)$ arises precisely (and does so once) by pairing $\cV_\lambda$ with its dual $\cV^\ast_\lambda$. This means that the multiplicity space of singlets is given by
\begin{align}
  \label{eq:SkRep}
\bigoplus_\lambda\cS_\lambda\otimes\cS^\ast_\lambda\;.
\end{align}
  If the sum extends over {\em all} Young tableaux with $k$ boxes, this may be interpreted as the regular representation of $S_k$ and it has dimension $k!$, equal to the size of the group\footnote{More precisely it is the regular representation of the associated group algebra $\Complex[S_k]$, regarded as a bimodule over itself.}. As the Young tableaux appearing in Eq.~\eqref{eq:SkRep} are limited to precisely $k$ boxes but maximally $d=2^n$ rows, this assumption leads to the condition $k\leq 2^n$. If this condition is violated one can still compute the number of singlets via Eq.~\eqref{eq:SkRep} as one precisely knows which representations are missing from the sum. However, in concrete applications $k$ is usually small and $2^n$ is very large, so that this is not of practical relevance.
  
  Our derivation gives an alternative perspective on the statements in \cite{mele2023introduction}, a perspective that we will soon generalize to situations where a comparably simple notion of Schur-Weyl duality is absent. Yet, the number of higher order symmetries can still be computed through the determination of the multiplicity-space of singlets in tensor products of the form \eqref{eq:FullSpacek} or even simply their dimension.

\subsection{Matchgate case}

  In the matchgate case, the decomposition of the adjoint module was studied in detail using commutator graphs \cite{diaz2023showcasing}, see also the main text and Appendix~\ref{sec:angus_graph}. Here we will complement that analysis from a representation theoretic perspective. We find that the latter gives a finer and more versatile description than the graphs.

  The matchgate DLA is $\mfso(2n)$ which can be identified with $D_n$ in Dynkin's classification of Lie algebras. We follow the conventions of \cite{FrancescoCFT} and denote the fundamental weights by $\omega_i$ and the highest root by $\theta$\footnote{The highest root labels the adjoint representation.}. Comparing the structure of the components of the commutator graph to the representations of $\mfso(2n)$ we are led to propose the decomposition
\begin{align}
  \label{eq:DecompositionMatchgate}
  \cL=2\cV_0\oplus\bigoplus_{l=1}^{n-2}2\cV_l
      \oplus2\cV_{\omega_{n-1}+\omega_n}
      \oplus\cV_{2\omega_{n-1}}\oplus\cV_{2\omega_n}
\end{align}
  of the adjoint module. It has been checked using computer algebra for $n=5,6,7$ that these decompositions work and are the only ones that are consistent with the structure of the commutator graph. We note that the decomposition features one more module than there are components in the graph. This is due to the fact that the largest component of dimension $\binom{2n}{n}$ actually decomposes into two irreducible modules $\cV_{2\omega_{n-1}}$ and $\cV_{2\omega_n}$.

  Given the decomposition \eqref{eq:DecompositionMatchgate} we immediately see that there are two linear symmetries and
\begin{align}
  n\times2^2+2\times1^2=4n+2
\end{align}
  quadratic symmetries, in agreement with Proposition~\ref{prop:frame}. A computer-assisted decomposition of higher tensor products $\cL^{\otimes k}$ for various choices of $n$ leads to Table~\ref{tab:HigherOrderMatchgate}.

\begin{table}
\begin{tabular}{c|ccccccc|cc}
  $k\,|\,n$ & $4$ & $5$ & $6$ & $7$ & $8$ & $9$ & $10$ & $n$ & OEIS \\\hline\hline&&&&&&&\\[-1.5em]
  $2$ & $18$ & $22$ & $26$ & $30$ & $34$ & $38$ & $42$ & $4n+2$ & \href{https://oeis.org/A016825}{A016825}\\
  $3$ & $330$ & $572$ & $910$ & $1360$ & $1938$ & $2660$ & $3542$ & $\tfrac{2}{3}(4n^3+48n^2+191n+252)$ & \href{https://oeis.org/A259110}{A259110}\\
  $4$ & $9438$ & $26026$ & $61880$ & $131784$ & $257754$ & $471086$ & $814660$ & $\frac{16 n^6}{45}+\frac{16 n^5}{15}+\frac{8 n^4}{9}-\frac{11 n^2}{45}-\frac{n}{15}$ & \href{https://oeis.org/A259317}{A259317}\\
  $5$ & $368082$ & $1769768$ & $6852768$ & $22535064$ & $65211762$ & $170263940$ & $408493800$\\
  $6$ & $18076916$ & $163324304$ & $1106722032$ & $6018114036$ & $27497626310$ & $109090538700$ & $385005406500$\\
\end{tabular}
  \caption{\label{tab:HigherOrderMatchgate}Number of singlets in $\cL^{\otimes k}$ (number of order-$k$ higher symmetries) as a function of $n$ for the matchgate DLA $\mfso(2n)\subset\mfu(2^n)$. The second-last column was predicted using the Online Encyclopedia of Integer Sequences (OEIS).}
\end{table}

\subsection{Ising case with arbitrary magnetic field}

  Our next considerations concern the case $\bb_4$ (in the notation of \cite{wiersema2024classification}), which will also allow to deduce statements about the cases $\aa_{13}\cong\aa_{15}\cong\aa_{20}$.

  This is the case where the DLA is given by
\begin{align}
  \mfg
  =\mfsu(\tfrac{1}{2}2^n)\oplus\mfsu(\tfrac{1}{2}2^n)\oplus\mfu(1)
  \subset\mfsu(2^n)\;.
\end{align}
  This embedding is known to be maximal (see Table 1 in Ref.~\cite{Burness:MR4074047}) and because the rank of $\mfg$ is of the order to the rank of $\mfsu(2^n)$ we expect relatively simple branching rules. Schematically, the embedding can be understood in terms of the block structure
\begin{align}
  \label{eq:Blocks}
  \mat\mfsu(\tfrac{1}{2}2^n)&0\\0&\mfsu(\tfrac{1}{2}2^n)\tam\subset\mfsu(2^n)
\end{align}
  of $2^n\times 2^n$ matrices, with the $\mfu(1)$ subalgebra represented by diagonal matrices proportional to
\begin{align}
  \label{eq:Center}
  \mat\idop&0\\0&-\idop\tam\;.
\end{align}
  This interpretation will be useful in determining the branching rules for the adjoint module $\cL$ when restricting the action to the DLA $\mfg$.

  A representation of $\mfg$ can be labeled by a triple consisting of two representations of $\mfsu(\tfrac{1}{2}2^n)$ and one of the abelian Lie algebra $\mfu(1)$. Let us denote the adjoint representation of $\mfsu(\tfrac{1}{2}2^n)$ by $\theta$ and the fundamental and anti-fundamental representation by $\omega$ and $\bar{\omega}$, respectively. From the matrix embedding we anticipate the decomposition\footnote{The choice of charge $\pm1$ for $\mfu(1)$ is somewhat arbitrary and depends on the normalization of the generator.}
\begin{align}
  \ad_{\mfsu(2^n)}\Bigr|_{\mfg}
  =(0,0,0)\oplus(\theta,0,0)\oplus(0,\theta,0)\oplus(\omega,\bar{\omega},1)\oplus(\bar{\omega},\omega,-1)\;.
\end{align}
  The first three representations correspond to the space spanned by the matrix in Eq.~\eqref{eq:Center} and the two diagonal blocks appearing in Eq.~\eqref{eq:Blocks} while the remaining two representations are associated with the off-diagonal blocks. If we decompose $\mfu(2^n)$ instead of $\mfsu(2^n)$ we will obtain an additional trivial representation (corresponding to the identity operator $\idop$ on the total space $\cL$), resulting in
\begin{align}
  \label{eq:Decompositionb4}
  \cL\bigr|_{\mfg}
  =2(0,0,0)\oplus(\theta,0,0)\oplus(0,\theta,0)\oplus(\omega,\bar{\omega},1)\oplus(\bar{\omega},\omega,-1)\;.
\end{align}
  This corresponds to six components of dimensions
\begin{align}
  (1,1,4^{n-1}-1,4^{n-1}-1,4^{n-1},4^{n-1})\;.
\end{align}
  Indeed, we have
\begin{align}
  \dim(\mfsu(2^n))
  &=4^n-1\;,&
  \dim(\mfsu(2^{n-1}))
  &=4^{n-1}-1\;,&
  \dim(\mfu(1))
  &=1
\end{align}
  and these are just the dimension of the adjoint representations. The fundamental representations have dimension $2^{n-1}$. The decomposition \eqref{eq:Decompositionb4} above thus leads to a total dimension of
\begin{align}
  (4^{n-1}-1)+(4^{n-1}-1)+1+2^{n-1}\cdot2^{n-1}+2^{n-1}\cdot2^{n-1}
  =4\cdot4^{n-1}-1
  =4^n-1\;,
\end{align}
  just as expected.

  We note that the module structure disagrees with the predictions from the commutator graph. More precisely, in the latter the two representations of dimension $4^{n-1}-1$ seem to combine into one single component of size $2\cdot4^{n-1}-2$. Similarly, the two representations of dimension $4^{n-1}$ are combined into a single component of size $2\cdot4^{n-1}$. This is due to the fact that there is no appropriate basis of Pauli strings for these irreducible representations.

  Using the full result for the decomposition of a single copy of the space of operators we can now also determine the number of quadratic and other higher order symmetries.

  The quadratic symmetries should be described by the trivial representations appearing in the decomposition of $\mfu(2^n)\otimes\mfu(2^n)$ with respect to the DLA in question. If we have the complete representation content of $\mfu(2^n)$ (including multiplicities), then it should easily be possible to determine the quadratic symmetries. Indeed, the trivial representation appears precisely (and with multiplicity one) in the tensor product of any representation with its dual. In the case of $\bb_4$, where we have found the decomposition Eq.~\eqref{eq:Decompositionb4}, the previous argument leads to
\begin{align}
  \ehh
  =\cL^{\otimes2} =\bigl[2^2+1+1+1+1\bigr]\times\text{trivial}\oplus\text{others}\;.
\end{align}
  This suggests that there should be an 8-dimensional space of quadratic symmetries for $\bb_4$.
  
  The other cases as well as higher order symmetries can, in principle, be analyzed in the same fashion. The decompositions replacing Eq.~\eqref{eq:Decompositionb4} will just become significantly more complicated but can easily be computed algorithmically, once the precise branching rules are known. The result is summarized in Table~\ref{tab:HigherOrderIsing}.
\begin{table}
\begin{tabular}{c|cccccccccc}
  $n\,|\,k$ & $2$ & $3$ & $4$ & $5$ & $6$ & $7$ & $8$ & $9$ & $10$ & $11$ \\\hline\hline&&&&&&&\\[-1.5em]
   4 & 8 & 48 & 392 & 4192 & 56512 & 917568 & 17227072 & (363276030) & (8445161364) & (213878904780) \\
   5 & 8 & 48 & 392 & 4192 & 56512 & 917568 & 17227072 & 363276032 & 8445161984 & 213879003648\\
   6 & 8 & 48 & 392 & 4192 & 56512 & 917568 & 17227072 & 363276032 & 8445161984 & 213879003648\\
   7 & 8 & 48 & 392 & 4192 & 56512 & 917568 & 17227072 & 363276032 & 8445161984
\end{tabular}  \caption{\label{tab:HigherOrderIsing}Number of singlets in $\cL^{\otimes k}$ (number of order-$k$ higher symmetries) as a function of $n$ for the Ising DLA $\mfsu(2^{n-1})\oplus\mfsu(2^{n-1})\oplus\mfu(1)\subset\mfu(2^n)$. We note that the results are independent of $n$ as long as we are in a stable regime where $n$ is sufficiently large. Entries that have not yet stabilized are highlighted with brackets. We also note that the table has been transposed in comparison to Table~\ref{tab:HigherOrderMatchgate}.}
\end{table}

\subsection{XY model in a longitudinal field}

  This case is denoted by $\aa_{13}$ in Ref.~\cite{wiersema2024classification} is similar to the case $\bb_4$, we just need to drop the additional $\mfu(1)$ factor and the associated charge-part of the representation in Eq.~\eqref{eq:Decompositionb4}. This gives more opportunities for singlets -- as we do not insist on representations having $\mfu(1)$-charge zero -- and the corresponding numbers are summarized in Table~\ref{tab:a13HigherOrder}.

\begin{table}
\begin{tabular}{c|cccccccccc}
  $n\,|\,k$ & $2$ & $3$ & $4$ & $5$ & $6$ & $7$ & $8$ & $9$ & $10$ & $11$ \\\hline\hline&&&&&&&\\[-1.5em]
  4 & 8 & 48 & 392 & 4192 & 56512 & 917568 & 17227072 & (363276078) & (8445174914) & (213880757716) \\
  5 & 8 & 48 & 392 & 4192 & 56512 & 917568 & 17227072 & 363276032 & 8445161984 & 213879003648 \\
  6 & 8 & 48 & 392 & 4192 & 56512 & 917568 & 17227072 & 363276032 & 8445161984 & 213879003648 \\
\end{tabular}
  \caption{\label{tab:a13HigherOrder}Number of singlets in $\cB^{\otimes k}$ (number of order-$k$ higher symmetries) as a function of $n$ for the DLA $\mfsu(2^{n-1})\oplus\mfsu(2^{n-1})\subset\mfu(2^n)$ (Case $\aa_{13}$). We observe that there is a slight mismatch with Case $\bb_4$ (which has an additional $\mfu(1)$ factor) for small values of $n$ and sufficiently large values of $k$ but that the two results agree in the stable regime.}
\end{table}

\section{Properties of matchgate commutator graph components}\label{sec:angus_graph}

In this section we give an alternative way of computing the graph complexity of a Pauli string in a matchgate circuit. Let $p$ be a Pauli string in component $C_\kappa$. Each vertex in $C_\kappa$ is the product of $\kappa$ distinct Majorana operators (recall Eq.~\eqref{eq:majos}), and so up to sign we can write
\begin{align*}
    p = \pm c_{i_1} \dots c_{i_\kappa}
\end{align*}
where $c_i$ are the Majorana operators and $1 \leq i_1 < i_2 < \dots < i_{\kappa} \leq 2n$. Throughout this section we identify a Pauli $p$ with its $\kappa$ constituent Majoranas, which we equivalently identify with a $\kappa$-element subsequence of $(1,\dots, 2n)$. This is equivalent to simply picking a $\kappa$-element subset of $\{1, \dots, 2n\}$, but we generally refer to the subsequence representation $(i_1, \dots, i_\kappa)$ to emphasize the ordering of the Majoranas.

Each Majorana operator anticommutes with exactly two generators in $\mathcal{G} = \left\{ Z_1, \dots, Z_n, X_1X_2, \dots, X_{n-1}X_{n}\right\}$ (and thus commutes with everything else, since it is a Pauli string), except for $c_1$ and $c_{2n}$ which anticommute with just one each. Specifically, $c_i$ anticommutes with $Z_{\lceil i/2 \rceil}$ and $X_{\lfloor i/2 \rfloor} X_{\lfloor i/2 \rfloor + 1}$ (with $X_{0}X_1 = X_{n}{X_{n+1}} = I$ as boundary cases). Furthermore, up to a scalar, the product of the Majorana $c_i$ and one of the two non-commuting generator elements results in either $c_{i+1}$ 
or $c_{i-1}$, depending on the generator used.
These relationships are captured in Fig.~\ref{fig:majorana graph}. 

\begin{figure}
    \includegraphics[width=0.34\textwidth]{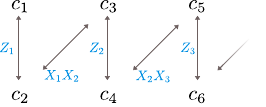}
    \caption{\label{fig:majorana graph}(Taken from~\cite{diaz2023showcasing}, Fig.~9(a)) The vertices of this graph represent the Majorana operators $c_j$ for $i \in \{1, \dots, 2n\}$. The edges (arrows) link pairs of Majoranas that can be mapped from one to the other (up to a scalar) via $c_j \mapsto \left[c_j, g\right]$, where $g \in \mathcal{G}$ is a generator, and are labelled with the corresponding generator. As described in the main text, there are two such non-commuting generators for each Majorana (except for $c_1$ and $c_{2n}$ which only have one).}
\end{figure}

This allows us to describe the neighbours of $p = c_{i_1}\dots c_{i_\kappa}$ (recalling  that the indices are ordered) in the commutator graph in terms of Majorana operators. Let $g \in \mathcal{G}$ be a generator. The operator $g$ anticommutes with either zero, one, or two of the Majoranas that comprise $p$. If it anticommutes with zero, then $g$ and $p$ commute. If it anticommutes with exactly one Majorana $c_i$, then (up to sign)
\begin{align*}
    [g, p] = 2i p' \quad\mbox{where $p' = p c_{i} c_{i \pm 1}$},
\end{align*}
where the $\pm 1$ depends on the precise generator used. If $g$ anticommutes with two Majoranas $c_i$ and $c_j$ with $i< j$, then it follows that $[g, p] = 0$ and $j = i+1$. Therefore, the neighbours of $p$ in the commutator graph are precisely the Pauli strings $q = c_{j_1} \dots c_{j_\kappa}$ (up to sign, and recalling  that the indices are ordered)  such that  
\begin{align}\label{eq:neighbourmajo}
    \sum_{\alpha = 1}^\kappa |i_\alpha - j_\alpha| = 1.
\end{align}
This generalises to provide a way to compute distances between arbitrary operators $p, q \in C_\kappa$. A path between $p$ and $q$ corresponds to raising and lowering the  index values of the Majorana operators in $p$, one step at a time, to match those in $q$. In particular, since each operator in $C_\kappa$ is the product of $\kappa$ \emph{distinct} Majoranas, all paths between $p$ and $q$ must take the Majorana $c_{i_\alpha}$ to $c_{j_\alpha}$, for all $\alpha \in[\kappa]:= \{1,2,\ldots,\kappa\}$. This implies that 
\begin{align}\label{eqn:distmajorana}
    \ell(p, q) = \sum_{\alpha = 1}^\kappa |i_\alpha - j_\alpha|.
\end{align}
Several small results about the structure of the components $C_\kappa$ follow almost immediately from the above observation. 
Recall from Eq.~\eqref{eq:gave} that the average over the dynamics of the graph complexity of an initial Pauli string $p$ is given by 
\begin{equation}\label{eq:ffgc}
\expect_{U\sim\mu_G}\mathsf{G}(U^\dagger pU) = \frac{1}{|C(\kappa)|} \sum_{q\in C_\kappa} \ell(p,q) = \frac{1}{\binom{2n}{\kappa}} \sum_{\bm{j}}\sum_{\alpha=1}^\kappa |i_\alpha - j_\alpha|
\end{equation}
for $p = c_{i_1} \dots c_{i_\kappa}$, where the sum over $\bm{j}$ denotes a sum over all $\kappa$-element subsequences $(j_\alpha)_{\alpha = 1}^{\kappa}$ of the sequence $(1,\dots, 2n)$, or equivalently all $\kappa$-element subsets of $[2n] = \{1, \dots, 2n\}$.
One can obtain closed form solutions of Eq.~\eqref{eq:ffgc} for small values of $\kappa$.
The case $\kappa=1$ corresponds to a linear chain of length $2n$, with a vertex for each Majorana and edges between the Majoranas $c_i$ and $c_{i\pm 1}$; indeed one can verify that  Eq.~\eqref{eq:ffgc} reduces to the result Eq.~\eqref{eq:gave_path} for a linear chain component. 
In the $\kappa=2$ component, vertices $p=\pm c_{i_1}c_{i_2}$ are characterised by a pair of integers $1\leq i_1 < i_2\leq 2n$; by Eq.~\eqref{eq:neighbourmajo} we see that there are edges between and only between vertices that are horizontally or vertically adjacent (see Fig.~\ref{fig:mgate}). One can also obtain a closed form solution for Eq.~\eqref{eq:ffgc} in this case; indeed: 
\begin{align}
\expect_{U\sim\mu_G}\mathsf{G}(U^\dagger c_{i_1}c_{i_2}  U)  &= \frac{1}{\binom{2n}{2}} \sum_{\bm{j}}\sum_{\alpha=1}^2 |i_\alpha - j_\alpha|  \\
&= \frac{1}{n(2n-1)} \sum_{j_2=1}^{2n}\sum_{j_1=1}^{j_2-1} \bigl(|i_1-j_1|+|i_2-j_2|\bigr) \\
&= \frac{1}{n(2n-1)}\left[\left( \sum_{j_2=1}^{2n}\sum_{j_1=1}^{j_2-1} |i_2-j_2|\right)+\left(\sum_{j_1=1}^{2n-1}\sum_{j_2=j_1+1}^{2n} |i_1-j_1|\right)\right]\label{eq:sumrea} \\
&= \frac{1}{n(2n-1)}\left[\left( \sum_{j_2=1}^{2n}(j_2-1) |i_2-j_2|\right)+\left(\sum_{j_1=1}^{2n-1}(2n-j_1) |i_1-j_1|\right)\right] \\
&= \frac{1}{6n(2n-1)}\big[i_1(i_1-1)(6n-i_1-1)+(2n-i_1)(2n-i_1-1)(2n-i_1+1)\nonumber \\
&\hspace{20mm}+i_2(i_2-1)(i_2-2)+(2n-i_2)(2n-i_2+1)(4n+i_2-2)\big],
\end{align}
where in Eq.~\eqref{eq:sumrea} we have simply rearranged the order of the sums.

\begin{figure}
    \centering
    \includegraphics[width=\textwidth]{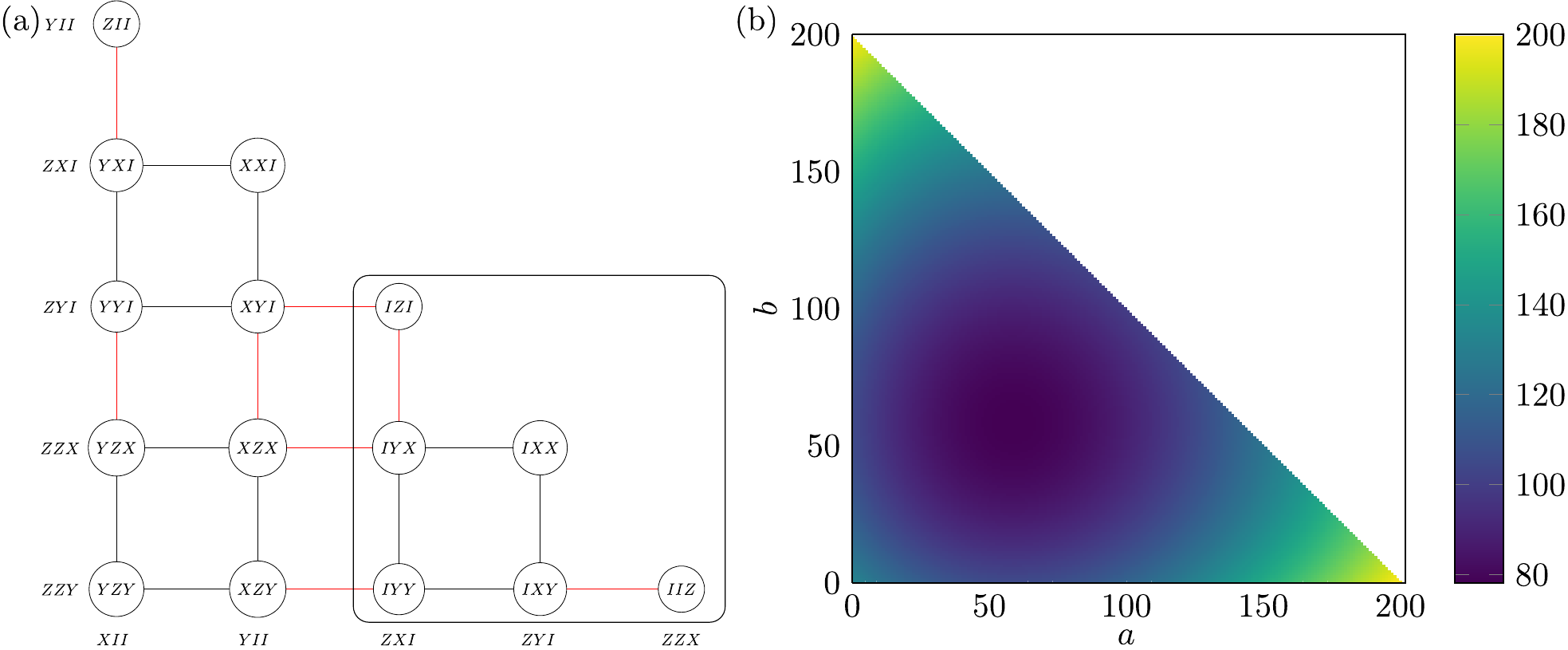}
    \caption{(a) The $n=3,\ \kappa=2$ commutator graph component for matchgate circuit dynamics. The Pauli string corresponding to each vertex is (proportional to) the product of two Majoranas. Each row (column) is characterised by the presence of a specific Majorana.  Red lines connect nodes related by commutation with $XX$, and black $Z$.  The $n=3$ graph contains the $n=2$ graph as a subgraph (and more generally the $n=k+1$ graph contains the $n=k$ graph as a subgraph). (b) The long time average of the graph complexity starting at each node, for $n=200$. }
    \label{fig:mgate}
\end{figure}

\begin{lemma}\label{thm:basicmajoranafacts}
    Let $C_\kappa$, for $\kappa \in \{1, \dots, 2n\}$, be the components of the commutator graph of a parametrised matchgate circuit. Then
    \begin{enumerate}[label=(\alph*)]
        \item the diameter of $C_\kappa$ is $\kappa(2n-\kappa)$,
        \item each component $C_\kappa$ has a non-trivial automorphism, given by the map $c_{i_1} \dots c_{i_\kappa} \mapsto c_{2n - i_\kappa + 1} \dots c_{2n - i_1 + 1}$.
        \item If $p = c_1\dots c_\kappa$ or $p = c_{2n-\kappa+1} \dots c_{2n}$, then $\mbe_{U\sim\mu_G}\mathsf{G}(U^\dagger pU) = \frac{1}{2}\kappa(2n-\kappa)$. 
    \end{enumerate}
\end{lemma}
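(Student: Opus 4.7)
The plan is to derive all three parts directly from the distance formula~\eqref{eqn:distmajorana}, which expresses graph distances in $C_\kappa$ in terms of $L^1$-distance between the increasing index sequences labelling the Majorana products.

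For part (a), I would maximise $\ell(p,q) = \sum_{\alpha=1}^{\kappa}|i_\alpha - j_\alpha|$ subject to the constraint that $1\leq i_1 < \cdots < i_\kappa \leq 2n$ (and similarly for $j$). A key observation is that if $(i_\alpha)$ is an increasing $\kappa$-element subsequence of $(1,\dots,2n)$ then $\alpha \leq i_\alpha \leq 2n-\kappa+\alpha$, so $|i_\alpha - j_\alpha|\leq 2n-\kappa$ for every $\alpha$, giving $\ell(p,q)\leq \kappa(2n-\kappa)$. This bound is saturated by $p=c_1c_2\cdots c_\kappa$ and $q=c_{2n-\kappa+1}\cdots c_{2n}$, for which $|i_\alpha - j_\alpha|=2n-\kappa$ uniformly, establishing the claim.

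For part (b), define $\phi(c_{i_1}\cdots c_{i_\kappa}) = c_{2n-i_\kappa+1}\cdots c_{2n-i_1+1}$. I would first check that $(2n-i_\kappa+1,\dots,2n-i_1+1)$ is indeed a strictly increasing subsequence of $(1,\dots,2n)$, so that $\phi$ maps vertices of $C_\kappa$ to vertices of $C_\kappa$. Writing $i'_\alpha = 2n - i_{\kappa-\alpha+1} + 1$ and similarly for $j'_\alpha$, a one-line computation gives $|i'_\alpha - j'_\alpha| = |i_{\kappa-\alpha+1} - j_{\kappa-\alpha+1}|$, so $\ell(\phi(p),\phi(q)) = \ell(p,q)$ via~\eqref{eqn:distmajorana}. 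Since $\phi$ is clearly an involution on the vertex set, it is a graph automorphism; non-triviality follows by noting $\phi(c_1c_2\cdots c_\kappa)=c_{2n-\kappa+1}\cdots c_{2n}\neq c_1c_2\cdots c_\kappa$ whenever $\kappa< 2n$ (and the case $\kappa=2n$ has $|C_\kappa|=1$, where the statement is vacuous).

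For part (c), I would treat $p=c_1c_2\cdots c_\kappa$, since the other case follows immediately from part~(b) and the fact that automorphisms preserve the average distance from any vertex mapped to itself (or, alternatively, by the same computation run in reverse). With $i_\alpha = \alpha$, the crucial simplification is that for any $q$ with indices $(j_\alpha)$ one has $j_\alpha \geq \alpha$, so $|i_\alpha - j_\alpha| = j_\alpha - \alpha$ and hence $\ell(p,q) = \sum_{\alpha=1}^{\kappa} j_\alpha - \binom{\kappa+1}{2}$. Averaging over all $\kappa$-subsets of $\{1,\dots,2n\}$ and using linearity together with the observation that each element of $\{1,\dots,2n\}$ lies in a uniformly random $\kappa$-subset with probability $\kappa/(2n)$, the expected subset sum equals $\frac{\kappa}{2n}\sum_{k=1}^{2n} k = \frac{\kappa(2n+1)}{2}$. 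Combining, the average graph complexity is $\frac{\kappa(2n+1)}{2} - \frac{\kappa(\kappa+1)}{2} = \frac{\kappa(2n-\kappa)}{2}$, as claimed.

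None of the three parts presents a serious obstacle given~\eqref{eqn:distmajorana}; the only mild subtlety is the verification in (a) that the bound $|i_\alpha - j_\alpha|\leq 2n-\kappa$ is simultaneously saturable for all $\alpha$, which succeeds precisely because the extremal sequences $(1,\dots,\kappa)$ and $(2n-\kappa+1,\dots,2n)$ are ``aligned'' at opposite ends. Observe also that the answer in (c) is exactly half the diameter computed in (a), consistent with $p$ sitting at an extreme vertex of the component.
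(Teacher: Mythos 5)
Your proposal is correct and follows essentially the same route as the paper's proof: all three parts are derived from the distance formula~\eqref{eqn:distmajorana}, with the bound $\alpha \leq i_\alpha \leq 2n-\kappa+\alpha$ giving the diameter, the index-reversal map checked to preserve the $L^1$ distance for (b), and the same subset-sum average (your probabilistic $\kappa/(2n)$ argument is equivalent to the paper's $\binom{2n-1}{\kappa-1}$ counting) for (c). Your explicit treatment of non-triviality and the degenerate $\kappa=2n$ case in (b) is slightly more careful than the paper's, but the substance is identical.
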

\begin{proof}    
    Note that for Pauli operators $c_1\dots c_\kappa$ and $c_{2n-\kappa+1}\dots c_{2n}$, it follows from Eq.~\eqref{eqn:distmajorana} that
    \begin{align*}
        \ell\left(c_{2n-\kappa+1}\dots c_{2n},c_1\dots c_\kappa \right) = \sum_{\alpha = 1}^\kappa \left|2n - \kappa + \alpha - \alpha \right| = \kappa(2n-\kappa).
    \end{align*}
    Now we show that for all other choices $p = c_{i_1} \dots c_{i_\kappa}$ and $q = c_{j_1} \dots c_{j_\kappa}$, this forms an upper bound. Since we assume that the Majoranas $c_{i_\alpha}$ and $c_{j_\alpha}$ are ordered lexicographically, it follows that $i_\alpha, j_\alpha \in [\alpha, 2n-\kappa + \alpha]$. This immediately implies that 
    \begin{align*}
        \left|i_\alpha - j_\alpha\right| \leq |2n- \kappa  + \alpha - \alpha|= 2n-\kappa
    \end{align*}
    for all $\alpha$. This completes the proof of part (a). 
    
    Part (b) follows as the given map is an automorphism almost trivially, as all the equal Majoranas will still be equal after taking $2n - i_j +1$, and the ones that differ by 1 will now differ by 1 in the opposite direction.

    Now we prove part (c). The average path length to the Pauli string $c_1 \dots c_\kappa$ (equivalently, to $c_{2n - \kappa + 1} \dots c_{2n}$) is exactly
    \begin{align*}
        \frac{1}{\binom{2n}{\kappa}} \sum_{\bm{i}} \sum_{\alpha = 1}^\kappa \left|i_\alpha - \alpha\right|.
    \end{align*}
    Here the summation is over all sequences $\bm{i}$ that are strictly increasing $\kappa$-element subsequences of $(1, \dots, 2n)$. Note that there are precisely $\binom{2n}{\kappa}$ such sequences, and that for each such sequence $i_\alpha \geq \alpha$ for all $\alpha \in\{1,\ldots,\kappa\}$. Thus, 
    \begin{align*}
        \frac{1}{\binom{2n}{\kappa}} \sum_{\bm{i}} \sum_{\alpha = 1}^\kappa \left|i_\alpha - \alpha\right| =\frac{1}{\binom{2n}{\kappa}} \left[ \sum_{\bm{i}} \sum_{\alpha = 1}^\kappa i_\alpha - \sum_{\bm{i}}\alpha \right]. 
    \end{align*}
    The first summation is the sum over all elements of all sequences $\bm{i}$. Each element $i \in [2n]$ appears in exactly $\binom{2n-1}{\kappa - 1}$ sequences (because choosing such a sequence is equivalent to choosing $\kappa$ unique elements from the set $[2n]$), and thus the first summation is equal to $\binom{2n-1}{\kappa - 1} 
    \sum_{i=1}^{2n} i  = \binom{2n-1}{\kappa - 1} \binom{2n+1}{2}$. Similarly, the second summation is equal to $\binom{2n}{\kappa} \binom{\kappa+1}{2}$. Therefore, the average distance of a vertex in component $C_\kappa$ to the vertex $c_1 \dots c_{\kappa}$ is 
    \begin{align*}
        \frac{1}{\binom{2n}{\kappa}} \sum_{\bm{i}} \sum_{\alpha = 1}^\kappa \left|i_\alpha - \alpha\right| = \frac{\binom{2n-1}{\kappa - 1}\binom{2n+1}{2}}{\binom{2n}{\kappa}} - \binom{\kappa+1}{2} = \frac{\kappa(2n+1)}{2} - \binom{\kappa+1}{2} = \frac{1}{2} \kappa (2n-\kappa). 
    \end{align*}
    This completes the proof.
\end{proof}

\begin{figure}[t]
    \centering
    \includegraphics[width=0.45\linewidth]{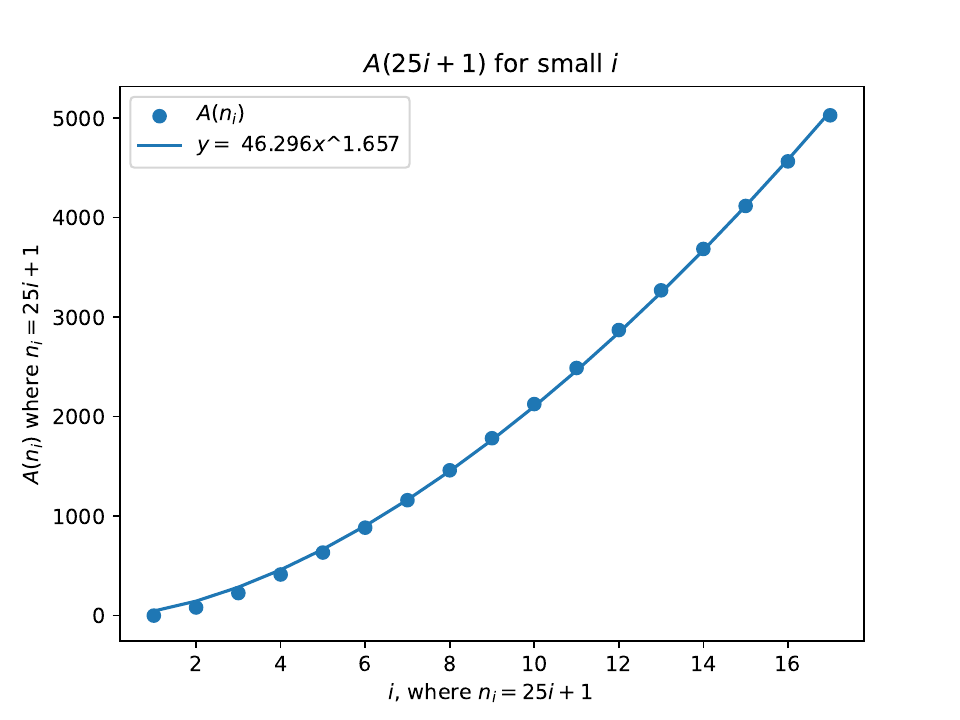}
    \includegraphics[width=0.45\linewidth]{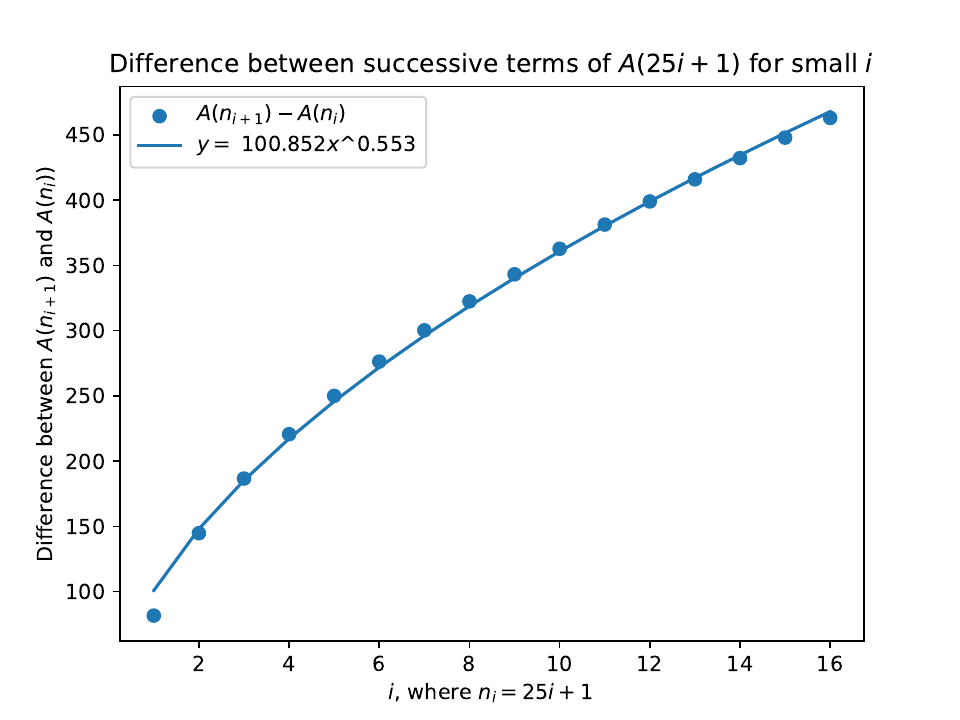}
    \caption{(Left) Plot of $A(n_i) := A(n_i, n_i)$, the average distance between two vertices in component $C_{n_i}$ of the commutator graph of a matchgate circuit on $n_i$ sites, for $n_i = 25i+1$, $i \in \{1,\dots, 16\}$. This is plotted alongside a power-law fit provided by \texttt{scipy}. The plot suggests that the growth of this function is superlinear but subquadratic in $n$. (Right) Plot of the difference between $A(n_{i+1})$ and $A(n_i)$ $i \in \{1,\dots, 15\}$. This is plotted alongside a power-law fit provided by \texttt{scipy}. If $A(n)$ was linear in $n$, this would be expected to be constant; if $A(n)$ was quadratic this would be expected to be linear. }
    \label{fig:seq}
\end{figure}

We remark that the diameter of connected components of  commutator graphs can be directly related to lower bounds on the number of gates needed to form 2-designs over the corresponding ensemble of unitaries~\cite{west2025no}. 
We can also find expressions for average path lengths between vertices in the commutator graphs of matchgate dynamics, which we give in the following theorem. Let $A(\kappa, n)$ be the average distance between all pairs of Pauli strings in $C_\kappa$. 

\begin{thm}\label{thm:matchgateavgformula}
    Let $C_\kappa$, for $k \in \{1, \dots, 2n\}$, be the components of the commutator graph of a parametrized matchgate circuit. Then
    \begin{align}\label{eqn:majoranaavgdist}
       \binom{2n}{\kappa}^2 A(\kappa, n) = \sum_{p,q \in C_\kappa} \ell(p,q) =  \sum_{\bm{i}}\sum_{\bm{j}} \sum_{\alpha = 1}^\kappa \left| i_\alpha - j_\alpha\right| = 2\sum_{i=2}^{2n} \sum_{j=1}^{i-1} \sum_{\alpha = 1}^{\kappa} \binom{i-1}{\alpha-1} \binom{2n-i}{\kappa-\alpha}\binom{j-1}{\alpha-1}\binom{2n-j}{\kappa-\alpha}(i-j),
    \end{align}
    where the sums over $\bm{i}$ and $\bm{j}$ denote sums over all $\kappa$-element subsequences $(i_\alpha)_{\alpha = 1}^{\kappa}$ and $(j_\alpha)_{\alpha = 1}^{\kappa}$ of the sequence $(1,\dots, 2n)$.
\end{thm}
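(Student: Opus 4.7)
The statement is really three equalities chained together, and the plan is to handle them in sequence with the bulk of the work in the last one. The first equality, $\binom{2n}{\kappa}^2 A(\kappa,n) = \sum_{p,q \in C_\kappa} \ell(p,q)$, is immediate from the definition of $A(\kappa,n)$ as the average of $\ell(p,q)$ over all ordered pairs in $C_\kappa$, combined with $|C_\kappa| = \binom{2n}{\kappa}$. For the second, I would invoke the Majorana correspondence developed in this appendix: each vertex of $C_\kappa$ corresponds uniquely (up to sign) to a strictly increasing $\kappa$-element subsequence of $(1,\ldots,2n)$, and Eq.~\eqref{eqn:distmajorana} gives $\ell(p,q) = \sum_\alpha |i_\alpha - j_\alpha|$; substituting and relabelling the sum over $(p,q)$ as a sum over $(\bm{i},\bm{j})$ yields the identity.

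The substantive step is the third equality, which I would prove by interchanging the order of summation. The plan is to pull the sum over $\alpha$ outside so that for fixed $\alpha$ the inner double sum decouples as
\begin{equation*}
\sum_{\bm{i}}\sum_{\bm{j}} |i_\alpha - j_\alpha| = \sum_{i,j=1}^{2n} |i-j|\, N(i,\alpha)\, N(j,\alpha),
\end{equation*}
where $N(i,\alpha)$ denotes the number of strictly increasing $\kappa$-element subsequences of $(1,\ldots,2n)$ whose $\alpha$-th entry equals $i$. The key combinatorial observation is that $N(i,\alpha) = \binom{i-1}{\alpha-1}\binom{2n-i}{\kappa-\alpha}$, since fixing $i_\alpha = i$ forces an independent choice of $\alpha - 1$ entries from $\{1,\ldots,i-1\}$ (for the positions before $\alpha$) and $\kappa - \alpha$ entries from $\{i+1,\ldots,2n\}$ (for the positions after $\alpha$). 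Finally, I would fold the symmetric sum onto the region $i > j$ using $|i - j| = |j - i|$ and noting that the diagonal contributes nothing, which introduces the overall factor of $2$ and produces the stated expression.

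The ``main obstacle'' here is really negligible: the argument is pure bookkeeping, with no inequalities, no analytic estimates, and no need for any of the representation-theoretic machinery developed earlier. The only point requiring a moment of care is that the binomials $\binom{i-1}{\alpha-1}$ and $\binom{2n-i}{\kappa-\alpha}$ automatically vanish outside the valid range $i \in [\alpha,\, 2n - \kappa + \alpha]$ of possible $\alpha$-th entries, so extending the outer sums to all of $[2n]$ is harmless and consistent with the stated summation bounds $i \in [2,2n]$, $j \in [1,i-1]$ once symmetrisation is applied.
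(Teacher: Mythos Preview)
Your proposal is correct and matches the paper's proof essentially step for step: both invoke Eq.~\eqref{eqn:distmajorana} for the distance, count the subsequences with $\alpha$-th entry equal to $i$ via the product $\binom{i-1}{\alpha-1}\binom{2n-i}{\kappa-\alpha}$, and then symmetrise over $i\leftrightarrow j$ to extract the factor of~$2$. Your remark about the binomials vanishing outside the admissible range is a helpful clarification that the paper leaves implicit.
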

\begin{proof}
    It follows from the isomorphism between each Pauli string in $C_\kappa$ and the string of Majoranas $c_{i_1} \dots c_{i_\kappa}$ that 
    \begin{align*}
        \sum_{p, q} \ell(p,q) =  \sum_{\bm{i}}\sum_{\bm{j}} \sum_{\alpha = 1}^\kappa \ell \left( c_{i_1}\dots c_{i_\kappa}, c_{j_1}\dots c_{j_\kappa} \right).
    \end{align*}
    Applying Eq.~\eqref{eqn:distmajorana} gives the first equality. To obtain the second equality, consider the number of subsequences $\bm{i}$ where $i_\alpha = i$ for some value $i \in \{1, \dots, 2n\}$. There are precisely $\binom{i-1}{\alpha-1}$ choices for the first $\alpha-1$ entries, and for each such choice there are precisely $\binom{2n-i}{\kappa-\alpha}$ choices for the remaining $\kappa-\alpha$ entries (and exactly one choice for $i_\alpha$, that is, $i$). Applying this gives 
    \begin{align*}
        \sum_{p, q} \ell(p,q) =  \sum_{i=1}^{2n}\sum_{j =1}^{2n} \sum_{\alpha = 1}^\kappa |i - j|.
    \end{align*}
    The claim of the theorem follows by noting that the roles of $i$ and $j$ are symmetric in the above expression and that the summand is zero when $i=j$. 
\end{proof}

The summation given in Theorem~\ref{thm:matchgateavgformula} is rather unwieldy and hard to evaluate in general. Note that it is maximized when $\kappa = n$, and so this bounds from above all instances of the sum for a fixed $n$. Since $\kappa \leq 2n$ by definition, Lemma~\ref{thm:basicmajoranafacts}(a) implies that the diameter of each component is $O(n^2)$, and thus $A(\kappa, n) \leq A(n,n) = O(n^2)$ for all $n$ and $\kappa$. On the other hand, substituting $\kappa=1$ gives $A(1, n) = \frac{1}{3}n \left( 1 - \frac{1}{4n^2} \right)$. It is natural to ask whether the upper bound of $O(n^2)$ is tight or whether the average path length is $O(n)$, or indeed something more nuanced like $O(\kappa n)$. 

We give some numerical evidence for small $n$ that none of these are true. Let $A(n) := A(n,n)$ be the value of this summation for $\kappa = n$, which is the value of $\kappa$ that maximizes the sum. We plot the value of $A(n_i)$, where $n_i = 25i+1$ and $i \in \{1, \dots, 16\}$ as well as the difference between successive terms in this sequence. These values do not hold particular significance beyond being a small enough computation in aggregate. The plots presented in Fig.~\ref{fig:seq} indicate that the linear growth of the average path length (with respect to $n$) in all components does not hold as $n \to \infty$, as the differences between successive terms grows with $n$ for the small examples computed. We formalize this in the following conjecture. 
\begin{conj}
    Let $A(n,n)$ be defined as in Theorem~\ref{thm:matchgateavgformula}, the average distance between two vertices in component $C_{\kappa}$ of the commutator graph of a matchgate circuit on $n$ sites. Then $A(n,n) = \omega(n)$ and $A(n,n) = o(n^2)$.
\end{conj}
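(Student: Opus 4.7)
My plan is to recognise $A(n,n)$ as the expectation of a natural functional of two independent uniformly random $n$-element subsets of $[2n]$, and then to reduce the problem to first-moment estimates for differences of i.i.d.\ hypergeometric random variables via a swap of summations. By Eq.~\eqref{eqn:distmajorana},
\begin{equation*}
    A(n,n) = \mbe\bigg[\sum_{\alpha=1}^n |I_\alpha - J_\alpha|\bigg],
\end{equation*}
where $I_1 < \cdots < I_n$ and $J_1 < \cdots < J_n$ list in increasing order the elements of two independent uniformly random $n$-subsets of $[2n]$.

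The key reorganisation step uses the integer identity $|a-b| = \sum_{k \in \mbz}\mathds{1}[\min(a,b) \leq k < \max(a,b)]$ applied inside the expectation. For each $k \in \{1,\ldots,2n-1\}$ and each $\alpha$, the event $\{I_\alpha \leq k\}$ coincides with $\{X_k \geq \alpha\}$, where $X_k$ counts the elements of the first subset lying in $\{1,\ldots,k\}$, so $X_k \sim \mathrm{Hypergeom}(2n,n,k)$. Combining independence with the tail-sum identity $\sum_\alpha \mathds{1}[X_k \geq \alpha] = X_k$ and the elementary relation $\mbe[\min(X_k,Y_k)] = \mbe[X_k] - \tfrac{1}{2}\mbe|X_k - Y_k|$ (where $Y_k$ is an independent copy of $X_k$), a swap of the $\alpha$- and $k$-summations collapses the double sum into the clean identity
\begin{equation*}
    A(n,n) = \sum_{k=1}^{2n-1}\mbe|X_k - Y_k|.
\end{equation*}

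From this representation both bounds follow via standard moment estimates. One has $\sigma_k^2 := \mathrm{Var}(X_k) = k(2n-k)/(4(2n-1))$, so Cauchy--Schwarz gives $\mbe|X_k - Y_k| \leq \sqrt{2}\,\sigma_k$; a Riemann-sum comparison to $\int_0^{2n}\sqrt{x(2n-x)}\,dx = \pi n^2/2$ then yields $A(n,n) = O(n^{3/2}) = o(n^2)$. For the lower bound, the hypergeometric distribution is log-concave (equivalently, it enjoys Hoeffding-type sub-Gaussian tail bounds for sampling without replacement), and hence its fourth centred moment obeys $\mbe(X_k - \mbe X_k)^4 \leq C\sigma_k^4$ for an absolute constant $C$. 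Applying the Paley--Zygmund inequality to $(X_k - Y_k)^2$ then produces $\mbe|X_k - Y_k| \geq c\sigma_k$ for an absolute $c > 0$, and summing gives $A(n,n) = \Omega(n^{3/2}) = \omega(n)$. Together these establish the sharper asymptotic $A(n,n) = \Theta(n^{3/2})$.

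The hard part is the lower bound: converting the variance $\sigma_k^2$ into a matching lower bound on $\mbe|X_k - Y_k|$ requires genuine anti-concentration of the hypergeometric difference rather than merely a second-moment computation, and the sub-Gaussian (or fourth-moment) estimate must be verified uniformly in $k$ across the full range $1 \leq k \leq 2n-1$, including the ``edge'' regimes $k \ll n$ and $2n-k \ll n$. For the weaker conclusion $A(n,n) = \omega(n)$ alone, one can side-step uniform control by restricting to a linear-sized window of $k$ near $n$, where $\sigma_k = \Theta(\sqrt{n})$, and invoking a local central limit theorem for $X_k$ on that window, picking up a contribution of $\Omega(\sqrt{n}) \cdot \Theta(n) = \Omega(n^{3/2})$ from that portion alone.
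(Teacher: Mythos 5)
The statement you were asked to prove is left in the paper as a \emph{conjecture}: the authors offer only the exact summation formula of Theorem~\ref{thm:matchgateavgformula} and numerical evidence (the power-law fits of Fig.~\ref{fig:seq}) for superlinear, subquadratic growth. Your argument therefore goes beyond the paper, and as far as I can check it is essentially correct and in fact establishes the sharper asymptotic $A(n,n)=\Theta(n^{3/2})$, consistent with the paper's numerics. The probabilistic reformulation is right: since the $p=q$ terms contribute zero, $A(n,n)=\mbe\big[\sum_{\alpha=1}^n|I_\alpha-J_\alpha|\big]$ for two independent uniform $n$-subsets of $[2n]$, and the layer-cake/order-statistics identity you use actually holds \emph{pointwise}: writing $X_k,Y_k$ for the counts of the two subsets in $\{1,\dots,k\}$, one has $\mathds{1}[I_\alpha\le k]=\mathds{1}[\alpha\le X_k]$ by the sorting, so $\sum_{\alpha}|I_\alpha-J_\alpha|=\sum_{k=1}^{2n-1}|X_k-Y_k|$ for every realisation (easily checked on small cases), which gives your identity $A(n,n)=\sum_{k=1}^{2n-1}\mbe|X_k-Y_k|$ without the detour through $\mbe[\min(X_k,Y_k)]$. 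The upper bound is then immediate from Cauchy--Schwarz and $\sigma_k^2=k(2n-k)/(4(2n-1))$, yielding $O(n^{3/2})=o(n^2)$, exactly as you say.

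Two comments on the lower bound. First, your worry about uniformity in $k$ is unnecessary even for the sharp $\Theta(n^{3/2})$ statement: the upper bound already holds for all $k$, so you only ever need the matching anti-concentration bound $\mbe|X_k-Y_k|\ge c\,\sigma_k$ on a bulk window, say $n/2\le k\le 3n/2$, where $\sigma_k=\Theta(\sqrt{n})$; summing over that window gives $\Omega(n^{3/2})=\omega(n)$, which is all the conjecture asks. Second, the parenthetical ``log-concave (equivalently, sub-Gaussian)'' is imprecise -- these are not equivalent properties -- but the fourth-moment estimate you need in the bulk is genuinely elementary: by Hoeffding's convex-ordering comparison between sampling without and with replacement, $\mbe(X_k-\mbe X_k)^4$ is at most the fourth central moment of $\mathrm{Bin}(k,1/2)$, which is $O(k^2)=O(\sigma_k^4)$ there, and Paley--Zygmund applied to $(X_k-Y_k)^2$ then gives the required $\mbe|X_k-Y_k|\ge c\,\sigma_k$. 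With those routine details filled in, your argument settles the conjecture (and pins down the exponent $3/2$ suggested by the paper's fits).
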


 \phantom{.}
 \newpage

\section{Example graphs}

\begin{figure}[h!]
    \centering
    \includegraphics[width=0.95\linewidth]{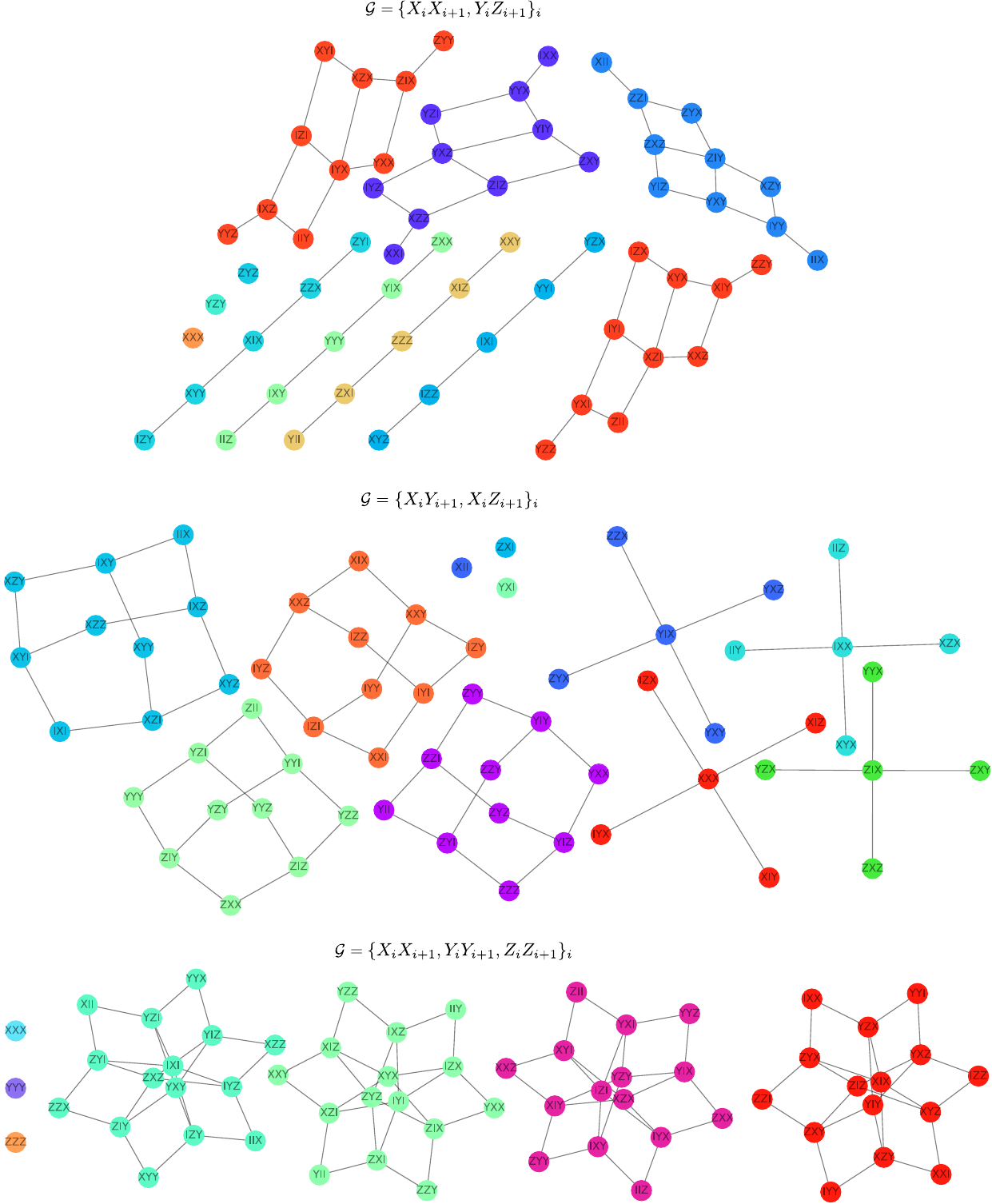}
    \caption{Some further examples of commutator graphs. We omit the  identity component, and take as generators the listed elements.}
    \label{fig:enter-label}
\end{figure}

\end{document}